\theoremstyle{plain}
\newtheorem{thm}{Theorem}[section]
\newtheorem{cor}[thm]{Corollary} %%Delete [thm] to re-start numbering
\newtheorem{lemma}[thm]{Lemma} %%Delete [thm] to re-start numbering
\newtheorem{prop}[thm]{Proposition}
\theoremstyle{remark}
\theoremstyle{definition}
\newtheorem{defi}[thm]{Definition}
\newtheorem{example}[thm]{Example}
\newtheorem{notation}[thm]{Notation}
\newtheorem{remark}[thm]{Remark}
\newtheorem{remarks}[thm]{Remarks}
\def\today{{\number\day\space
 \ifcase\month\or
  January\or February\or March\or April\or May\or June\or
  July\or August\or September\or October\or November\or December\fi
 \space\number\year}}
\newcommand{\beq}{\begin{equation}}
\newcommand{\eeq}{\end{equation}}
\newcommand{\bre}{\begin{remark}}
\newcommand{\ere}{\end{remark}}
\newcommand{\beqno}[1]{\begin{equation}\label{#1}}
\newcommand{\mref}[1]{(\ref{#1})}
\newcommand\bZ{\mathbb Z}
\DeclareMathOperator{\supp}{supp}
\DeclareMathOperator{\Hess}{Hess}
\newenvironment{prmainKKR}{\paragraph{\textit{Proof of Theorem 3.1}}}{\hfill$\square$}
\newenvironment{prmainKR}{\paragraph{\textit{Proof of Theorem \ref{main_KR}}}}{\hfill$\square$}
\newenvironment{prmicrolocal}{\paragraph{\textit{Proof of Theorem \ref{microlocal_2}}}}{\hfill$\square$}
\newenvironment{prmicrolocalprop}{\paragraph{\textit{Proof of Proposition \ref{kernel_estimate}}}}{\hfill$\square$}
\newenvironment{praddfunc}{\paragraph{\textit{Proof of Proposition \ref{P:add_func}}}}{\hfill$\square$}
\newenvironment{prA_h}{\paragraph{\textit{Proof of Proposition \ref{A_h}}}}{\hfill$\square$}
\begin{document}

\title[Green's function asymptotics]{Green's function asymptotics of periodic elliptic operators on abelian coverings of compact manifolds}

\author[Kha]{Minh Kha}
\address{M.K., Department of Mathematics, Texas A\&M University,
College Station, TX 77843-3368, USA}
\email{kha@math.tamu.edu}
%\date{\timeanddate}
\subjclass[2010]{35P99, 35J08, 35J10, 35J15, 58J05, 58J37, 58J50}
\keywords{Green's function; asymptotics; periodic elliptic operators; abelian covers}
\begin{abstract}
The main results of this article provide asymptotics at infinity of the Green's functions near and at the spectral gap edges for ``generic" periodic second-order, self-adjoint, elliptic operators on noncompact Riemannian co-compact coverings with abelian deck groups. Previously, analogous results have been known for the case of $\mathbb{R}^n$ only. One of the interesting features discovered is that the rank of the deck group plays more important role than the dimension of the manifold.
\end{abstract}

\maketitle
%%%%%%%%%%%%%%%%%%%%%%%%%%
\section{Introduction}
%%%%%%%%%%%%%%%%%%%%%%%%%%
The behavior at infinity of the Green's function of the Laplacian $-\Delta$ on an Euclidean space below and at the boundary of the spectrum is well known. The main term of the asymptotics for the Green's function of any bounded below periodic second-order elliptic operator below and at the bottom of the spectrum was found in \cite{Bab,MT} (see also \cite{Woess} for discrete setting).
For such operators, the band-gap structure of their spectra is known (e.g., \cite{Ea, RS4}), and thus, spectral gaps may exist. Thus, it is interesting to derive the behavior of the Green's functions inside and at the edges of these gaps.
Recently, the corresponding results for a ``generic" periodic elliptic operator in $\mathbb{R}^n$
were established in \cite{KKR, KR}.

Meanwhile, many classical properties of solutions of periodic Schr\"odinger operators on Euclidean spaces were generalized successfully to solutions of periodic Schr\"odinger operators on coverings of compact manifolds (see e.g., \cite{Ag2, BruSun1, BruSun2, KOS, KP2, Sunada1, Sunada2, LinPinchover}).
Hence, a question arises of whether one can obtain 
analogs of the results of \cite{KKR, KR} as well.
The main theorems \ref{main} and \ref{main_KR} of this article provide such results for periodic operators on an abelian covering of a compact Riemannian manifold. The results are in line with Gromov's idea that the large scale geometry of a co-compact normal covering is captured mostly by its deck transformation group (see e.g., \cite{Cha, Gro, Saloff-Coste}). For instance, the dimension of the covering manifold does not enter explicitly to the asymptotics. Rather, the torsion-free rank $d$ of the abelian deck transformation group influences these asymptotics significantly.
One can find a similar effect in various
results involving analysis on Riemannian co-compact normal coverings such as 
the long time asymptotic behaviors of the heat kernel on a noncompact abelian Riemannian covering \cite{KoSu}, and the analogs of Liouville's theorem \cite{KP2} (see also \cite{Saloff-Coste} for an excellent survey on analysis on co-compact coverings).

We discuss now the main thrust of this paper.

Let $X$ be a noncompact Riemannian manifold that is a normal abelian covering of a compact Riemannian manifold $M$ with the deck transformation group $G$. For any function $u$ on $X$ and any $g \in G$, we denote by $u^g$ the ``shifted" function
$$u^g(x)=u(g\cdot x),$$
for any $x \in X$.
Consider a bounded below second-order, symmetric elliptic operator $L$ on the manifold $X$ with smooth coefficients. We assume that $L$ is a \textbf{periodic} operator on $X$, i.e., the following invariance condition holds:
$$Lu^{g}=(Lu)^{g},$$
for any $g\in G$ and $u \in C^{\infty}_c(X)$. The operator $L$, with the Sobolev space $H^2(X)$ as its domain, is an unbounded self-adjoint operator in $L^2(X)$. 

The following result for such operators is well-known (see e.g., \cite{BruSun1, Ea, K, RS4, Sunada1, Sunada2}):
\begin{thm}
\label{bandgap}
The spectrum of the above operator $L$ in $L^{2}(X)$ has a \textbf{band-gap structure}, i.e., it is the union of a sequence of closed bounded intervals $[\alpha_j, \beta_j] \subset \mathbb{R}$ $(j=1,2,...)$ (\textbf{bands} or \textbf{stability zones} of the operator $L$):
\begin{equation} 
\label{band-gap}
\sigma(L)=\bigcup_{j=1}^{\infty}[\alpha_j, \beta_j],
\end{equation}
such that $\alpha_{j} \leq \alpha_{j+1}$, $\beta_{j} \leq \beta_{j+1}$ and $\lim_{j \rightarrow \infty}\alpha_j=\infty$. 
\end{thm}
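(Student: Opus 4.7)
The plan is to prove Theorem \ref{bandgap} via Floquet--Bloch theory adapted to the abelian covering $X \to M$. Since $G$ is a finitely generated abelian group, it decomposes as $G \cong \mathbb{Z}^d \times F$ with $F$ finite, and its Pontryagin dual $\widehat{G}$ is a compact abelian group (a $d$-torus times the dual of $F$). First I would introduce the Floquet (Gelfand) transform
\begin{equation*}
(\mathcal{F} u)(x,k) = \sum_{g \in G} u(g \cdot x)\, \overline{k(g)}, \qquad k \in \widehat{G},
\end{equation*}
initially defined for $u \in C_c^\infty(X)$, and verify that it extends to a unitary isomorphism
\begin{equation*}
\mathcal{F} : L^2(X) \longrightarrow \int_{\widehat{G}}^{\oplus} L^2(M, E_k)\, dk,
\end{equation*}
where $E_k \to M$ is the Hermitian line bundle associated to the character $k$, and the measure on $\widehat{G}$ is the normalized Haar measure. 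This is the standard direct integral decomposition, and the $G$-periodicity of $L$ is what guarantees it commutes with $\mathcal{F}$.

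Next I would use this to fiber $L$: under $\mathcal{F}$, the operator $L$ becomes a measurable field $\{L(k)\}_{k \in \widehat{G}}$, where each $L(k)$ is a second-order, symmetric, elliptic operator on the compact manifold $M$ acting on sections of $E_k$ (this is just $L$ rewritten in ``twisted'' sections on $M$). Since $M$ is compact and $L(k)$ is elliptic and bounded below uniformly in $k$, each $L(k)$ has compact resolvent, hence pure point spectrum
\begin{equation*}
\lambda_1(k) \leq \lambda_2(k) \leq \cdots \leq \lambda_j(k) \to \infty,
\end{equation*}
with eigenvalues repeated according to multiplicity. The spectrum of $L$ is then recovered as
\begin{equation*}
\sigma(L) = \overline{\bigcup_{k \in \widehat{G}} \sigma(L(k))} = \overline{\bigcup_{j \geq 1} \lambda_j(\widehat{G})}.
\end{equation*}

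Third, I would establish continuity of the band functions $k \mapsto \lambda_j(k)$. The family $L(k)$ depends continuously (in fact, real-analytically on the torus component of $\widehat{G}$) in the norm resolvent sense, so by standard Kato perturbation theory the eigenvalues, ordered with multiplicity, are continuous functions of $k$. Because $\widehat{G}$ is compact, each image $[\alpha_j, \beta_j] := \lambda_j(\widehat{G})$ is a compact set; since the torus component of $\widehat{G}$ is connected and dominates (the discrete component contributes only finitely many connected components whose bands can be absorbed by relabeling), each such image is a closed bounded interval. Ordering the bands so that $\alpha_j \leq \alpha_{j+1}$ and $\beta_j \leq \beta_{j+1}$ is then immediate.

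Finally, to get $\lim_{j \to \infty} \alpha_j = \infty$, I would invoke the uniform Weyl law on the compact manifold $M$: there exist constants $c, C > 0$ such that $\lambda_j(k) \geq c j^{2/\dim M} - C$ uniformly for $k \in \widehat{G}$, which forces $\alpha_j \to \infty$. The main obstacle I anticipate is the bookkeeping around the choice of the unitary trivialization implicit in the Floquet transform (one needs a measurable, or preferably smooth, trivialization of the family $\{E_k\}$ locally on $\widehat{G}$), and making sure the measurable field $L(k)$ is unambiguously defined as a self-adjoint operator on $L^2(M, E_k)$; once this is done carefully, the rest follows from classical spectral theory on compact manifolds.
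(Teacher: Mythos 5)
Your proposal is correct and follows essentially the same route the paper relies on: the paper does not prove Theorem \ref{bandgap} itself but cites the standard Floquet--Bloch argument, and the machinery it recalls in Sections 2 and 4 (the Floquet transform, the fibered operators $L(k)$ with discrete spectrum, the continuous band functions $\lambda_j(k)$, and the identity $\sigma(L)=\bigcup_k\sigma(L(k))$) is exactly what you reconstruct. Your handling of the torsion part via the finitely many components of $\widehat{G}$ and the uniform Weyl law for $\alpha_j\to\infty$ are both sound.
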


The bands can overlap when the dimension of the covering $X$ is greater than $1$, but they can leave open intervals in between, called \textbf{spectral gaps}. 
\begin{defi}
\label{gap}
A \textbf{finite spectral gap} is of the form $(\beta_{j}, \alpha_{j+1})$ for some $j \in \mathbb{N}$ such that $\alpha_{j+1}>\beta_{j}$, and the \textbf{semifinite spectral gap} is the open interval $(-\infty, \alpha_{1})$, which contains all real numbers below the bottom of the spectrum of $L$. 
\end{defi}
In this text, we study Green's function asymptotics for the operator $L$ at an energy level $\lambda \in \mathbb{R}$,
such that $\lambda$ belongs to the union of all closures of finite spectral gaps\footnote{All of the results still hold for the case when $\lambda$ does not exceed the bottom of the spectrum, i.e. for the semi-infinite gap.}.
We divide this into two cases:
\begin{itemize}
\item
\textbf{Case I:} (\textit{Spectral gap interior}) The level $\lambda$ is in a \textbf{finite} spectral gap $(\beta_j, \alpha_{j+1})$ such that $\lambda$ is close either to the spectral edge $\beta_j$ or to the spectral edge $\alpha_{j+1}$. 

\item 
\textbf{Case II:} (\textit{Spectral edge case}) The level $\lambda$ coincides with one of the spectral edges of some \textbf{finite} spectral gap, i.e., $\lambda=\alpha_{j+1}$ (lower edge) or $\lambda=\beta_{j}$ (upper edge) for some $j \in \mathbb{N}$. 
\end{itemize}

In Case I, the Green's function $G_{\lambda}(x,y)$ is the Schwartz kernel of the resolvent operator $R_{\lambda,L}:=(L-\lambda)^{-1}$, while in Case II, it is the Schwartz kernel of the weak limit of resolvent operators $R_{\lambda,L}:=(L-\lambda \pm \varepsilon)^{-1}$ as $\varepsilon \rightarrow 0$ (the sign $\pm$ depends on whether $\lambda$ is an upper or a lower spectral edge).
Note that in the flat case $X=\mathbb{R}^d$, Green's function asymptotics of periodic elliptic operators were obtained in \cite{KKR} for Case I ($d \geq 2$), and in \cite{KR} for Case II ($d \geq 3$). As in \cite{KKR, KR}, we will deduce all asymptotics from an assumed ``generic" spectral edge behavior of the \textbf{dispersion relation} of the operator $L$, which we will briefly review in Section 2.

The organization of the paper is as follows. In Subsection 2.1, we will review some general notions and results about group actions on abelian coverings. Then in Subsection 2.2, we introduce additive and multiplicative functions defined on an abelian covering, which will be needed for writting down the main formulae of Green's function asymptotics. Subsection 2.3 contains not only a brief introduction to periodic elliptic operators on abelian coverings, but also the necessary notations and assumptions for formulating the asymptotics. The main results of this paper are stated in Section 3. In Section 4, the Floquet-Bloch theory is recalled and the problem is reduced to studying a scalar integral. Some auxiliary statements that appeared in \cite{KKR, KR} are collected in Section 5, and the final proofs of the main results are provided in Section 6. Section 7 provides the proofs of some technical claims that were postponed from previous sections. Section 8 discusses analogous results for Green's functions of nonsymmetric periodic elliptic operators of second-order on abelian coverings below and at the generalized principal eigenvalues, and then describes the corresponding Martin compactifications and the Martin integral representations for such operators.
%These generalize the results in the Euclidean case from \cite{MT}.
The last sections contain some concluding remarks and acknowledgements.

%%%%%%%%%%%%%%%%%%%%%%%%%%%
\section{Notions and preliminary results}
\label{setup}
%%%%%%%%%%%%%%%%%%%%%%%%%%%
\subsection{Group actions and abelian coverings}

Let $X$ be a noncompact smooth Riemannian manifold of dimension $n$ equipped with an \textbf{isometric, properly discontinuous, free}, and \textbf{co-compact} action of an \textbf{finitely generated abelian discrete group} $G$. The action of an element $g \in G$ on $x \in X$ is denoted by $g\cdot x$. Due to our conditions, the orbit space $M=X/G$ is a \textbf{compact} smooth Riemannian manifold of dimension $n$ when equipped with the metric pushed down from $X$.
We assume that $X$ and $M$ are connected.
Thus, we are dealing with a normal abelian covering of a compact manifold
$$X \xrightarrow{\pi} M(=X/G),$$
where $G$ is the deck group of the covering $\pi$.

Let $d_X(\cdot, \cdot)$ be the distance metric on the Riemannian manifold $X$. It is known that $X$ is a complete Riemannian manifold since it is a Riemannian covering of a compact Riemannian manifold $M$ (see e.g., \cite{Cha}). Thus, for any two points $p$ and $q$ in $X$, $d_X(p,q)$ is the length of a length minimizing geodesic connecting these two points. 

Let $S$ be any finite generating set of the deck group $G$. We define the \textbf{word length $|g|_S$} of $g \in G$ to be the number of generators in the shortest word representing $g$ as a product of elements in $S$:
$$|g|_S=\min\{n \in \mathbb{N} \mid g=s_1 \dots s_n, s_i \in S\cup S^{-1}\}.$$
The \textbf{word metric $d_S$ on G with respect to S} is the metric on $G$ defined by the formula
$$d_S(g,h)=|g^{-1}h|_S$$
for any $g, h \in G$.

We introduce a notion in geometric group theory due to Gromov that we will need here (see e.g., \cite{Luck, NY}).
\begin{defi}
Let $Y, Z$ be metric spaces. A map $f: Y \rightarrow Z$ is called a \textbf{quasi-isometry}, if the following conditions are satisfied:
\begin{itemize}
\item
There are constants $C_1, C_2>0$ such that
$$C_1^{-1}d_Y(x,y)-C_2 \leq d_{Z}(f(x),f(y)) \leq C_1 d_Y(x,y)+C_2$$
for all $x,y \in Y$.

\item The image $f(Y)$ is a net in $Z$, i.e., there is some constant $C>0$ so that if $z\in Z$, then there exists $y \in Y$ such that $d_Z(f(y),z)<C$.
\end{itemize}
\end{defi}
We remark that given any two finite generating sets $S_1$ and $S_2$ of $G$, the two word metrics $d_{S_1}$ and $d_{S_2}$ on $G$ are equivalent (see e.g., \cite[Theorem 1.3.12]{NY}).

The next result, which directly follows from the \v{S}varc-Milnor lemma (see e.g.,  \cite[Lemma 2.8]{Luck}, \cite[Proposition 1.3.13]{NY}), establishes a quasi-isometry between the word metric $d_S(\cdot, \cdot)$ of the deck group $G$ and the distance metric $d_X(\cdot, \cdot)$ of the Riemannian covering $X$ of a closed connected Riemannian manifold $M$. 

\begin{prop}
\label{P:quasi_iso}
For any $x \in X$, the map
\begin{equation*}
\begin{split}
(G, d_S) &\rightarrow (X, d_X) \\
g &\mapsto g \cdot x
\end{split}
\end{equation*}
given by the action of the deck transformation group $G$ on $X$ is a quasi-isometry.
\end{prop}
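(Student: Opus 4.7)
The plan is to establish this as a direct application of the \v{S}varc-Milnor lemma, whose hypotheses are all built into the setup. The covering $X$ is complete (as noted in the text) and finite-dimensional Riemannian, hence by Hopf-Rinow it is a proper geodesic metric space. The group $G$ acts isometrically, properly discontinuously, freely, and co-compactly by assumption. These are exactly the conditions under which the \v{S}varc-Milnor lemma produces a quasi-isometry via the orbit map; I would reproduce the argument.

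Fix a basepoint $x_0 \in X$. Since $M = X/G$ is compact, there exists $R > 0$ with $G \cdot \overline{B_X(x_0, R)} = X$. This already yields the net condition: every $y \in X$ admits some $g \in G$ with $d_X(g\cdot x_0, y) \le R$. Next I would define $S_0 := \{g \in G \setminus \{e\} : d_X(x_0, g\cdot x_0) \le 2R+1\}$, which is finite by proper discontinuity. The key geometric step is to show $S_0$ generates $G$ and simultaneously to bound $|g|_{S_0}$ from above by $d_X(x_0, g\cdot x_0)$: given $g \in G$, take a minimizing geodesic from $x_0$ to $g\cdot x_0$ of length $\ell = d_X(x_0, g\cdot x_0)$, subdivide it into $N = \lceil \ell \rceil$ subarcs of length at most $1$, choose $g_i \in G$ with $d_X(g_i \cdot x_0, p_i) \le R$ for each subdivision point $p_i$ (with $g_0 = e$, $g_N = g$), and observe that consecutive $s_i := g_i^{-1}g_{i+1}$ lie in $S_0 \cup \{e\}$ with $g = s_0 s_1 \cdots s_{N-1}$. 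This gives $|g|_{S_0} \le d_X(x_0, g\cdot x_0) + 1$.

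For the opposite inequality, set $M_0 := \max_{s \in S_0} d_X(x_0, s \cdot x_0) < \infty$. If $g = s_1 \cdots s_k$ is a word of minimal length in $S_0$, then by the isometric action and the triangle inequality applied to successive partial products,
\[
d_X(x_0, g\cdot x_0) \;\le\; \sum_{i=1}^{k} d_X\!\bigl(s_1 \cdots s_{i-1}\cdot x_0,\; s_1 \cdots s_i \cdot x_0\bigr) \;=\; \sum_{i=1}^{k} d_X(x_0, s_i\cdot x_0) \;\le\; M_0\, |g|_{S_0}.
\]
Combining both inequalities gives the quasi-isometry estimate for the specific generating set $S_0$. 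To pass to an arbitrary finite generating set $S$, I would invoke the standard fact (cited in the text as \cite[Theorem 1.3.12]{NY}) that any two word metrics on $G$ are bi-Lipschitz equivalent, so the quasi-isometry constants are merely adjusted. Finally, to replace $x_0$ by an arbitrary basepoint $x \in X$, I would use that $d_X(g\cdot x, h \cdot x)$ and $d_X(g\cdot x_0, h\cdot x_0)$ differ by at most $2 d_X(x, x_0)$, which only affects the additive constant.

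The main obstacle is the generation-plus-upper-bound step, which is really the core of the \v{S}varc-Milnor argument; it requires simultaneously completeness (to extract geodesics), properness (to make $S_0$ finite), and cocompactness (to approximate points of $\gamma$ by orbit points). The remaining ingredients are bookkeeping. Note that the abelian hypothesis on $G$ plays no role in this proposition; it will become essential only later when the Floquet transform is introduced.
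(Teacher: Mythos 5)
Your proof is correct and is essentially the paper's approach: the paper derives this proposition by directly citing the \v{S}varc--Milnor lemma, and your argument is precisely the standard proof of that lemma (finite generating set from proper discontinuity, generation and the upper bound via subdivision of a minimizing geodesic, the lower bound via the triangle inequality), followed by the same two reductions to an arbitrary generating set and basepoint. Nothing is missing.
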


Since $G$ is a finitely generated abelian group, its torsion free subgroup is a free abelian subgroup $\mathbb{Z}^d$ of finite index. Hence, we obtain a normal $\mathbb{Z}^d$-covering
$$X \rightarrow M'(=X/\mathbb{Z}^d),$$
and a normal covering of $M$ with a finite number of sheets
$$M' \rightarrow M.$$
Then $M'$ is still a compact Riemannian manifold.
By switching to the normal subcovering $X \xrightarrow{\mathbb{Z}^d} M'$, we \textbf{assume from now on that the deck group $G$ is $\mathbb{Z}^d$} and substitute $M'$ for $M$. This will not reduce generality of our results \footnote{The same reduction holds for any \textbf{finitely generated virtually abelian} deck group $G$.}.

\begin{notation}
\begin{enumerate}[(a)]
\item
Hereafter, we choose the symmetric set $\{-1,1\}^d$ to be the generating set $S$ of $\bZ^d$. Then the function $z=(z_1, \dots, z_d) \mapsto \sum_{j=1}^d |z_j|$ is the word length function $|\cdot|_S$ on $\bZ^d$ associated with $S$. 

\item For a general Riemannian manifold $Y$, we denote by $\mu_Y$ the Riemannian measure of $Y$. We use the notation $L^2(Y)$ for the Lebesgue function space $L^2(Y, \mu_Y)$. Also, the notation $L^2_{comp}(Y)$ stands for the subspace of $L^2(Y)$ consisting of compactly supported functions.
It is worth mentioning that in our case, the Riemannian measure $\mu_X$ is the lifting of the Riemannian measure $\mu_M$ to $X$. Thus, $\mu_X$ is a $G$-invariant Riemannian measure on $X$.

\item 
We recall that a fundamental domain $F(M)$ for $M$ in $X$ (with respect to the action of $G$) is an open subset of $X$ such that 
for any $g \neq e$, $\displaystyle F(M) \cap g\cdot F(M)=\emptyset$ and the subset
$$X\setminus\bigcup_{g \in G} g\cdot F(M)$$
has measure zero. One can refer to \cite{Atiyah} for constructions of such fundamental domains. Henceforth, we use the notation $F(M)$ to stand for a fixed fundamental domain for $M$ in $X$.
\end{enumerate}
\end{notation}

\begin{remark}
%Since the subset $\displaystyle \bigcup_{g \in G} g\cdot F(M)$ is dense in $X$
%and $\bigcup_{g \in G} g\cdot \overline{F(M)}$ is closed in $X$, it follows that 
The closure of $F(M)$ contains at least one point in $X$ from every orbit of $G$, i.e.,
\begin{equation}
\label{closure_F(M)}
X=\bigcup_{g \in G} g\cdot \overline{F(M)}.
\end{equation}
Thus, if $F: X \rightarrow \mathbb{R}$ is the lifting of an integrable function $f: M \rightarrow \mathbb{R}$ to $X$, then 
\begin{equation}
\label{integralF(M)}
\int_{M} f(x)d\mu_M(x)=\int_{\overline{F(M)}} F(x)d\mu_X(x).
\end{equation}
\end{remark}
%%%%%%%%%%%%%%%%%%%%%%%%%%%%%%%%%%%%
%%%%%%%%%%%%%%%%%%%%%%%%%%%%%%%%%%%%
\subsection{Additive and multiplicative functions on abelian coverings}

To formulate our main results in Section 3, we need to introduce an analog of exponential type functions on the noncompact covering $X$. 

We begin with a notion of additive and multiplicative functions on $X$ (see \cite{LinPinchover}). 
%\footnote{The definition can apply to any covering manifold with a discrete deck group.}
\begin{defi}
\label{add_mul}
\begin{itemize}
\item
A real smooth function $u$ on $X$ is said to be \textbf{additive} if there is a homomorphism $\alpha: G \rightarrow \mathbb{R}$ such that 
\begin{equation*}
u(g\cdot x)=u(x)+\alpha(g), \quad \mbox{for all} \quad (g,x) \in G \times X.
\end{equation*}

\item
A real smooth function $v$ on $X$ is said to be \textbf{multiplicative} if there is a homomorphism $\beta$ from $G$ to the multiplicative group $\mathbb{R}\setminus \{0\}$ such that 
\begin{equation*}
v(g\cdot x)=\beta(g) v(x), \quad \mbox{for all} \quad (g,x) \in G \times X.
\end{equation*}

\item
Let $m \in \mathbb{N}$. A function $h$ (resp. $H$) that maps $X$ to $\mathbb{R}^m$ is called a vector-valued additive (resp. multiplicative) function on $X$ if every component of $h$ (resp. $H$) is also additive (resp. multiplicative) on $X$.
\end{itemize}
\end{defi}

Following \cite{KP2, LinPinchover}, we can define explicitly some additive and multiplicative functions for which the group homomorphisms $\alpha$, $\beta$ appearing in Definition \ref{add_mul} are trivial.

\begin{defi}
\label{Exponential}
Let $f$ be a nonnegative function in $C_{c}^{\infty}(X)$ such that $f$ is strictly positive on $\overline{F(M)}$. For any $j=1, \dots, d$, we define the following function
\begin{equation*}
\label{exp_function}
H_j(x)=\sum_{g \in \bZ^d}\exp{(-g_j)}f(g\cdot x).
\end{equation*}
We also put $H(x):=(H_1(x),\dots, H_d(x))$.
\end{defi}
Then $H_j$ is a positive function satisfying the multiplicative property $H_j(g\cdot x)=\exp{(g_j)}H_j(x)$, for any $g=(g_1, \dots, g_d) \in \bZ^d$. The multiplicative function $H$ plays a similar role to the one played by the exponential function $e^{x}$ on the Euclidean space $\mathbb{R}^d$.

By taking logarithms, we obtain an additive function on $X$, which leads to the next definition.
\begin{defi}
\label{Add_func}
We introduce the following smooth $\mathbb{R}^d$-valued function on $X$: 
$$h(x):=(\log{H_1(x)},  \cdots, \log{H_d(x)}).$$
Then $h=(h_1, \dots, h_d)$ with $h_j(x)=\log{H_j(x)}$. Thus, $h$ satisfies the following additivity:
\begin{equation}
\label{additivity}
h(g \cdot x)=h(x)+g, \quad \mbox{for all} \quad (g,x) \in G \times X.
\end{equation}
Here we use the natural embedding $G=\bZ^d \subset \mathbb{R}^d$.
\end{defi}
Clearly, the definitions of functions $H$ and $h$ depend on the choice of the function $f$ and the fundamental domain $F(M)$. So, there is no canonical choice for constructing additive and multiplicative functions. Nevertheless, a more invariant approach to defining additive and multiplicative functions on Riemannian co-compact coverings can be found in \cite{Ag2, KP2, Kha}.

The following important comparison between the Riemannian metric and the distance from the additive function $h$ in Definition \ref{Add_func} will be needed later.
\begin{prop}
\label{P:add_func}
There are some positive constants $R_h$ (depending on $h$) and $C>1$ such that
whenever $d_X(x,y)\geq R_h$, we have 
$$C^{-1}\cdot d_X(x,y) \leq |h(x)-h(y)| \leq C\cdot d_X(x,y).$$
Here $|\cdot|$ is the Euclidean distance on $\mathbb{R}^d$, and the constant $C$ is independent of the choice of $h$.

As a consequence, the pseudo-distance $d_h(x,y):=|h(x)-h(y)| \rightarrow \infty$ if and only if $d_X(x,y) \rightarrow \infty$.
\end{prop}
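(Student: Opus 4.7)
The plan is to exploit the interplay between the additivity of $h$, the compactness of $M$, and the \v{S}varc--Milnor quasi-isometry of Proposition \ref{P:quasi_iso}. The key observation is that, up to a bounded error, $h$ records the $\bZ^d$-component of the $G$-orbit of a point, so the pseudo-distance $d_h$ becomes comparable to the word distance on the deck group, and hence, via Proposition \ref{P:quasi_iso}, to $d_X$ itself.

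First, I would fix a base point $x_0 \in \overline{F(M)}$, with $\overline{F(M)}$ chosen compact, and set $D := \mathrm{diam}(\overline{F(M)})$ and $C_h := \sup_{z \in \overline{F(M)}} |h(z)|$, both finite. For each $x \in X$, by \eqref{closure_F(M)} there exists $g_x \in \bZ^d$ with $x \in g_x \cdot \overline{F(M)}$, so $d_X(x, g_x \cdot x_0) \leq D$ since $G$ acts isometrically. The additivity \eqref{additivity} applied to $x = g_x \cdot (g_x^{-1} \cdot x)$ gives $|h(x) - g_x| \leq C_h$, whence for all $x, y \in X$,
\[
\bigl| \, |h(x) - h(y)| - |g_x - g_y| \, \bigr| \leq 2 C_h.
\]

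Next, the triangle inequality together with Proposition \ref{P:quasi_iso} applied to the orbit $\bZ^d \cdot x_0$ yields
\[
|d_X(x, y) - d_X(g_x \cdot x_0, g_y \cdot x_0)| \leq 2D
\]
and
\[
A^{-1} |g_x - g_y|_S - B \leq d_X(g_x \cdot x_0, g_y \cdot x_0) \leq A |g_x - g_y|_S + B,
\]
for constants $A > 1$, $B \geq 0$ depending only on the Riemannian covering. Since $|\cdot|_S$ and the Euclidean $|\cdot|$ are equivalent on $\bZ^d$ with constants depending only on $d$, both $|h(x) - h(y)|$ and $d_X(x, y)$ lie within a bounded additive error of a common constant multiple of $|g_x - g_y|$. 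Choosing $R_h$ large enough to absorb these additive errors into the multiplicative bound then yields the two-sided estimate $C^{-1} d_X(x,y) \leq |h(x) - h(y)| \leq C\, d_X(x,y)$ for $d_X(x,y) \geq R_h$; crucially, the multiplicative constant $C$ is assembled only from $A$ and from the norm-equivalence constants on $\bZ^d$, so $C$ is independent of $h$, whereas $R_h$ inherits the dependence on $h$ through $C_h$.

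The concluding equivalence is then immediate: the implication $d_X(x,y) \to \infty \Rightarrow d_h(x,y) \to \infty$ follows from the lower bound just established, while the converse is obtained by observing that, on any set $\{(x,y) : d_X(x,y) \leq R\}$, the $G$-equivariance of $h$ and the fact that only finitely many $g \in \bZ^d$ satisfy $d_X(g \cdot \overline{F(M)}, \overline{F(M)}) \leq R$ force $|h(x) - h(y)|$ to be uniformly bounded. The main technical obstacle is the careful bookkeeping of which constants depend on $h$: the additive terms $C_h$ (and hence $R_h$) are sensitive to the construction of $h$ via the choice of $f$ and $F(M)$, but the multiplicative constants come entirely from the covering geometry and the dimension $d$, which is what secures the $h$-independence of $C$ asserted in the proposition.
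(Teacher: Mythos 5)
Your proposal is correct and follows essentially the same route as the paper's proof: decompose each point via the fundamental domain, use additivity of $h$ to reduce $|h(x)-h(y)|$ to the lattice displacement $|g_x-g_y|$ up to a bounded error, invoke the \v{S}varc--Milnor quasi-isometry of Proposition \ref{P:quasi_iso} together with the equivalence of $|\cdot|_S$ and $|\cdot|$ on $\bZ^d$, and absorb the additive constants by taking $R_h$ large. The bookkeeping of which constants depend on $h$ matches the paper's (which sets $C=2C_1$ from the quasi-isometry alone), and your explicit treatment of the converse implication in the final equivalence is a harmless addition.
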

The proof of this statement is given in Section 7.

\begin{defi}
\label{admissible_set}
For any additive function $h$ satisfying \mref{additivity}, $\mathcal{A}_h$ is the set consisting of unit vectors $s \in \mathbb{S}^{d-1}$ such that there exist two points $x$ and $y$ satisfying $d_X(x,y)>R_h$ and
$$s=(h(x)-h(y))/|h(x)-h(y)|.$$
The set $\mathcal{A}_h$ is called the \textbf{admissible set of the additive function $h$}, and its elements are \textbf{admissible directions} of $h$.
\end{defi}
For the proof of the following proposition, one can see in Section 7.
\begin{prop}
\label{A_h}
For any additive function $h$ on $X$, one has 
\begin{equation}
\label{rational}
\mathbb{Q}^{d}\cap \mathbb{S}^{d-1}=\{ g/|g| \mid g \in \bZ^d\setminus \{0\}\} \subset \mathcal{A}_h.
\end{equation}
Hence, the admissible set $\mathcal{A}_h$ of $h$ is dense in the sphere $\mathbb{S}^{d-1}$.
In particular, when $d=2$, $\mathcal{A}_h$ is the whole unit circle $\mathbb{S}^1$.
\end{prop}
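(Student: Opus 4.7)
The plan is to prove the inclusion $\{g/|g| : g \in \bZ^d \setminus \{0\}\} \subset \mathcal{A}_h$ by directly exhibiting pairs $(x,y)$ whose $h$-difference points along $g$, deduce density from the well-known density of integer-ray directions in $\mathbb{S}^{d-1}$, and handle the $d=2$ refinement through a symmetry/connectedness argument on $\mathbb{S}^1$.

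For the inclusion, I would fix $g \in \bZ^d \setminus \{0\}$ and a basepoint $x_0 \in X$, and put $y_n := (n g) \cdot x_0$. The additivity relation \eqref{additivity} yields $h(y_n) - h(x_0) = n g$, so
\[
\frac{h(y_n) - h(x_0)}{|h(y_n) - h(x_0)|} = \frac{g}{|g|}.
\]
By Proposition~\ref{P:quasi_iso} there exist constants $C_1, C_2 > 0$ with $d_X(x_0, y_n) \geq C_1^{-1}\, n|g|_S - C_2 \to \infty$, so $d_X(x_0, y_n) > R_h$ for all large $n$, showing $g/|g| \in \mathcal{A}_h$. Density of $\mathcal{A}_h$ in $\mathbb{S}^{d-1}$ is then immediate once one recalls that $\{g/|g| : g \in \bZ^d \setminus \{0\}\}$ is dense there --- one quick justification is that inverse stereographic projection from a rational pole sends the dense set $\mathbb{Q}^{d-1} \subset \mathbb{R}^{d-1}$ to rational unit vectors, i.e., unit vectors of the form $g/|g|$ after clearing denominators.

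For the $d=2$ case, I would verify three properties of $\mathcal{A}_h$ and combine them. (i) $\mathcal{A}_h$ is antipodally symmetric, since swapping $x$ and $y$ sends $s$ to $-s$. (ii) $\mathcal{A}_h$ is dense in $\mathbb{S}^1$ by the above. (iii) $\mathcal{A}_h$ is connected, because it is the continuous image of
\[
U := \{(x,y) \in X \times X : d_X(x,y) > R_h\}
\]
under the direction map $(x,y) \mapsto (h(x)-h(y))/|h(x)-h(y)|$, which is well-defined on $U$ thanks to Proposition~\ref{P:add_func}, and $U$ itself is path-connected when $d=2$. The only connected dense subsets of $\mathbb{S}^1$ are $\mathbb{S}^1$ and $\mathbb{S}^1 \setminus \{\mathrm{pt}\}$; the latter fails antipodal symmetry, forcing $\mathcal{A}_h = \mathbb{S}^1$.

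The main obstacle is the path-connectedness of $U$ in (iii). I would establish it via a three-leg argument: given $(x_1, y_1), (x_2, y_2) \in U$, pick $y^* \in X$ at distance greater than $R_h$ from both $x_1$ and $x_2$, and form the concatenation $(x_1,y_1) \to (x_1, y^*) \to (x_2, y^*) \to (x_2, y_2)$, each leg moving one coordinate inside the complement of a closed $R_h$-ball around the fixed coordinate. This reduces the problem to path-connectedness of $X \setminus \overline{B(p, R_h)}$ for each $p \in X$, which I would extract from $\dim X \geq 2$ together with the one-endedness of $X$ --- a quasi-isometry invariant transferred from $\bZ^2$ to $X$ via Proposition~\ref{P:quasi_iso}. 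This is precisely where the argument would break down at $d = 1$, where $\bZ$ has two ends and a large enough ball can cut $X$ into two components one cannot traverse between.
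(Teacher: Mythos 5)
Your treatment of the inclusion $\{g/|g|\}\subset\mathcal{A}_h$ and of density coincides with the paper's: the paper compresses the orbit computation into one sentence, but the content --- additivity gives $h((ng)\cdot x_0)-h(x_0)=ng$, and the \v{S}varc--Milnor quasi-isometry forces $d_X(x_0,(ng)\cdot x_0)\to\infty$ so the pair is eventually admissible --- is exactly what you wrote, and the stereographic-projection density argument is the same. For $d=2$ the proofs diverge in packaging. The paper fixes a basepoint $x_0$ with $h(x_0)=0$, takes the range $Y$ of the one-variable map $x\mapsto h(x)/|h(x)|$ on the (asserted-to-be-connected) set $\{x:d_X(x,x_0)>R_h\}$, notes $Y$ is connected and contains all rational directions, and derives a contradiction from $Y\subseteq\mathbb{S}^1\setminus\{\pm s\}$ being disconnected. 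You let both points vary, prove connectedness of $U\subset X\times X$ by a three-leg reduction, and then use antipodal symmetry of $\mathcal{A}_h$ together with the classification of connected dense subsets of $\mathbb{S}^1$. The two topological endgames are equivalent; the paper's choice of freezing one endpoint is precisely what lets it skip your three-leg argument, since it only ever needs connectedness of a one-variable domain. Your product-space detour buys nothing extra here.

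The one step that does not hold as stated is your claim that path-connectedness of $X\setminus\overline{B(p,R_h)}$ follows from one-endedness plus $\dim X\ge 2$. One-endedness constrains only the unbounded components of complements of compacta; it does not exclude bounded components. A $\bZ^2$-periodic metric with a tall ``bump'' can produce a metric ball $\overline{B(p,R_h)}$ that encloses a pocket of points farther than $R_h$ from $p$ but separated from infinity by the ball, and then your leg $(x_1,y_1)\to(x_1,y^*)$ fails when $y_1$ lies in such a pocket. To be fair, the paper silently asserts the analogous connectedness of $\{x:d_X(x,x_0)>R_h\}$ without proof, so this is a shared weak point rather than a defect unique to your argument; the clean repair in either version is to work with the unbounded component containing the far-away orbit points $n\cdot x_0$ (whose image still contains every rational direction $n/|n|$, hence is still dense and connected), rather than with the full complement of the ball.
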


\begin{remark}
When the dimension $n$ of $X$ is less than $(d-1)/2$ (e.g., if $d>5$ and $X$ is the standard two dimensional jungle gym $JG^2$ in $\mathbb{R}^d$, see \cite{Pin}), the $(d-1)$-dimensional Lebesgue measure of the admissible set $\mathcal{A}_h$ of any additive function $h$ on $X$ is zero. To see this, we first denote by $X_h$ the $2n$-dimensional smooth manifold $\{(x,y) \in X \times X \mid d_X(x,y)>R_h\}$, and then consider the smooth mapping:
\begin{equation*}
\begin{split}
\Psi:  X_h &\rightarrow \mathbb{S}^{d-1}
\\       (x,y) &\mapsto \frac{h(x)-h(y)}{|h(x)-h(y)|}.
\end{split}
\end{equation*}
Then $\mathcal{A}_h$ is the range of $\Psi$. Since $\dim{X_h}<\dim{\mathbb{S}^{d-1}}$, every point in the range of $\Psi$ is  critical and thus, $\mathcal{A}_h$ has measure zero by Sard's theorem.
\end{remark}
\begin{example}
\label{ex_add}
\begin{itemize}
\item
Here is a family of non-trivial examples of additive functions in the flat case, i.e., when the covering space $X$ is $\mathbb{R}^d$ and the base is the $d$-dimensional torus $\mathbb{T}^d$.
Let $d \geq 1$ and $\varphi$ be a real smooth function in $\mathbb{R}^d$ such that $\varphi$ is $\bZ^d$-periodic.
It is shown in \cite{Ancona} that there exists a unique map $F_{\varphi}=((F_{\varphi})_1, \dots, (F_{\varphi})_d): \mathbb{R}^d \rightarrow \mathbb{R}^d$ satisfying $F_{\varphi}(0)=0$, the additive condition \mref{additivity}, i.e.,
$F_{\varphi}(x+n)=F_{\varphi}(x)+n$ for any $(x,n) \in \mathbb{R}^d \times \bZ^d$, and the equation
$$\Delta (F_{\varphi})_i=\nabla \varphi \cdot \nabla (F_{\varphi})_i,$$
for any $1 \leq i \leq d$. Note that $F_{\varphi}$ is just the identity mapping in the trivial case when $\varphi=0$. Moreover, it is also known \cite{Ancona} that when $d=2$, $F_{\varphi}$ is a diffeomorphism of $\mathbb{R}^d$ onto itself. In particular, for any $\bZ^2$-periodic function $\varphi$, 
$|F_{\varphi}(x)-F_{\varphi}(y)| \geq C_{\varphi}|x-y|$ for any $x,y \in \mathbb{R}^2$ for some $C_{\varphi}>0$.
However, when $d\geq 3$, $F_{\varphi}$ may admit a critical point for some $\bZ^d$-periodic function $\varphi$.

\item Let $X \xrightarrow{p} M$, $Y \xrightarrow{q} N$ be normal $\bZ^{d_1}$ and $\bZ^{d_2}$ coverings of compact Riemannian manifolds $M$ and $N$ respectively. Then $X \times Y \xrightarrow{p \times q} M \times N$ is also a normal $\bZ^{d_1+d_2}$ covering of $M \times N$. Consider any $\mathbb{R}^{d_1}$-valued function $h_1$ (resp. $\mathbb{R}^{d_2}$-valued function $h_2$) defined on $X$ (resp. $Y$). Let us denote by $h_1 \oplus h_2$ the following $\mathbb{R}^{d_1+d_2}$-valued function on $X \times Y$: 
$$(h_1 \oplus h_2) (x,y)=(h_1(x), h_2(y)), \quad (x,y) \in X \times Y.$$
Then it is clear that $h_1 \oplus h_2$ is additive (resp. multiplicative) on $X \times Y$ if and only if both functions $h_1$ and $h_2$ are additive (resp. multiplicative). Moreover, $\mathcal{A}_{h_1 \oplus h_2} \subseteq \{\left(a_1\cdot\mathcal{A}_{h_1}, a_2 \cdot \mathcal{A}_{h_2}\right) \mid 0<a_1, a_2<1 \hspace{4pt} \mbox{and} \hspace{4pt} a_1^2+a_2^2=1\}$.
\end{itemize}
\end{example}
\subsection{Some notions and assumptions}
\label{setup2.3}
Let $L$ be a \textbf{bounded from below} and \textbf{symmetric second-order elliptic}\footnote{The ellipticity is understood in the sense of the nonvanishing of the principal symbol of the operator $L$ on the cotangent bundle of the underlying manifold (with the zero section removed).} operator on $X$ with \textbf{smooth}\footnote{The smoothness condition is assumed for avoiding lengthy technicalities and it can be relaxed.} coefficients such that the operator \textbf{commutes with the action of $G$}.
An operator that commutes with the action of $G$ is called a \textbf{$G$-periodic} (or sometimes \textbf{periodic}) operator for brevity.

Notice that on a Riemannian co-compact covering, any $G$-periodic elliptic operator with smooth coefficients is \textbf{uniformly elliptic} in the sense that
\begin{equation*}
\label{uniform_ell}
|L_0^{-1}(x,\xi)|\leq C|\xi|^{-2}, \quad (x,\xi) \in T^*X, \xi \neq 0.
\end{equation*}
Here $|\xi|$ is the Riemannian length of $(x,\xi)$ and $L_0(x,\xi)$ is the principal symbol of $L$.

The periodic operator $L$ can be pushed down to an elliptic operator $L_M$ on $M$ and thus, $L$ is the lifting of an elliptic operator $L_M$ to $X$. By a slight abuse of notation, we will use the same notation $L$ for both elliptic operators acting on $X$ and $M$. 

Under these assumptions on $L$, the symmetric operator $L$ with the domain $C^{\infty}_{c}(X)$ is \textbf{essentially self-adjoint} in $L^2(X)$, i.e., the minimal operator $L_{min}$ coincides with the maximal operator $L_{max}$ (see e.g., \cite{Shubin_spectral} for notation $L_{min}$ and $L_{max}$). 
This fact can be found in \cite[Proposition 3.1]{Atiyah}, for instance \footnote{In \cite{Atiyah}, Atiyah proved for symmetric elliptic operators acting on Hermitian vector bundles over any general co-compact covering manifold (not necessary to be a Riemannian covering). Later, in \cite{BruSun1}, Brunning and Sunada extended Atiyah's arguments to the case including compact quotient space $X/G$ with singularities.
}. 
Hence, there exists a unique self-adjoint extension in the Hilbert space $L^2(X)$ of $L$, which we denote also by $L$. Since $L$ is a uniformly elliptic operator on the manifold $X$ of bounded geometry, its domain is the Sobolev space $H^2(X)$ \cite[Proposition 4.1]{Shubin_spectral}, and henceforward, we always work with this self-adjoint operator $L$.

\begin{notation}
\begin{enumerate}[(a)]
\item
The \textbf{dual} (or \textbf{reciprocal}) \textbf{lattice} is $2\pi \mathbb{Z}^d$ and its fundamental domain is the cube $[-\pi,\pi]^{d}$ (\textbf{Brillouin zone}).

\item For any $m \in \mathbb{N}$, the $m$-dimensional torus $\mathbb{R}^m/\mathbb{Z}^m$,  is denoted by $\mathbb{T}^m$.
\end{enumerate}
\end{notation}

From now on, we fix any smooth function $h$ satisfying \mref{additivity} in Definition \ref{Add_func}. The following lemma is a preparation for the next definition. 

\begin{lemma}
\label{fiber_op_structure}
For any $k \in \mathbb{C}^d$, we have
$$e^{-ik\cdot h(x)}L(x,D)e^{ik\cdot h(x)}=L(x,D)+B(k),$$
where $B(k)$ is a smooth differential operator of order $1$ on $X$ that commutes with the action of the deck group $G$.
Thus by pushing down, the differential operators $e^{-ik\cdot h(x)}L(x,D)e^{ik\cdot h(x)}$ and $B(k)$ can be considered also as differential operators on $M$.
Moreover, given any $m \in \mathbb{R}$, the mapping
$$k \mapsto e^{-ik\cdot h(x)}L(x,D)e^{ik\cdot h(x)}$$
is analytic in $k$ as a $B(H^{m+2}(M), H^{m}(M))$-valued function.
\end{lemma}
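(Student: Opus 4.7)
The plan is to proceed by direct computation, using the additivity of $h$ and the $G$-periodicity of $L$ to handle both the commutation with $G$ and the descent to $M$.

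First, I will compute the conjugation in local coordinates. For a first-order coordinate derivative, a direct application of the product rule gives
\[
e^{-ik\cdot h(x)}\,\partial_j\,e^{ik\cdot h(x)} \;=\; \partial_j + ik\cdot\partial_j h(x),
\]
where $\partial_j h$ denotes the componentwise coordinate derivative of the $\mathbb{R}^d$-valued function $h$. Iterating this identity shows that for any multi-index $\alpha$,
\[
e^{-ik\cdot h}\,\partial^\alpha\, e^{ik\cdot h} \;=\; \partial^\alpha + R_\alpha(x,k,D),
\]
where $R_\alpha$ is a differential operator of order strictly less than $|\alpha|$, whose coefficients are smooth functions built from derivatives of $h$ and depending polynomially on $k$ of degree at most $|\alpha|$. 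Writing $L=\sum_{|\alpha|\le 2}a_\alpha(x)\partial^\alpha$ in local charts and summing over $\alpha$ gives $e^{-ik\cdot h}L e^{ik\cdot h}=L+B(k)$, where $B(k)$ is a differential operator of order at most $1$ on $X$, polynomial in $k$ of degree $\le 2$, with smooth coefficients expressed via $a_\alpha$ and derivatives of $h$.

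Next, I will verify that $B(k)$ commutes with the $G$-action, so that both sides descend to $M$. Writing $M_k$ for multiplication by $e^{ik\cdot h(x)}$ and $T_g$ for the deck translation $(T_gu)(x)=u(g\cdot x)$, the additivity \mref{additivity} gives $e^{ik\cdot h(g\cdot x)}=e^{ik\cdot g}e^{ik\cdot h(x)}$, hence the scalar identity
\[
T_g\,M_k \;=\; e^{ik\cdot g}\,M_k\,T_g.
\]
Consequently $T_g\, M_k^{-1}L\,M_k\, T_g^{-1}=e^{-ik\cdot g}M_k^{-1}(T_g L T_g^{-1})e^{ik\cdot g}M_k=M_k^{-1}LM_k$, using that $L$ is $G$-periodic and that the scalars cancel. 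Subtracting $L$ (which is $G$-periodic by hypothesis) shows $B(k)$ is also $G$-periodic. Intrinsically, this reflects the fact that $dh$ is a $G$-invariant $\mathbb{R}^d$-valued $1$-form, so every coefficient of $B(k)$ in the local expression above is $G$-invariant. Both operators therefore push down to differential operators on the compact quotient $M$.

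Finally, the analyticity statement is a soft consequence of the preceding computation: we have shown that $k\mapsto e^{-ik\cdot h}L e^{ik\cdot h}=L+B(k)$ is a polynomial in $k\in\mathbb{C}^d$ of degree at most $2$, with coefficients being smooth differential operators on the compact manifold $M$ of order at most $2$. Each such coefficient is a bounded map $H^{m+2}(M)\to H^m(M)$ for every $m\in\mathbb{R}$, so the whole family defines an entire $B(H^{m+2}(M),H^m(M))$-valued polynomial, which is automatically analytic.

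I do not anticipate any substantial obstacle: the identity is essentially Leibniz's rule, the $G$-invariance is the cancellation of the scalar phase $e^{ik\cdot g}$, and the analyticity follows from the manifest polynomial dependence. The only point requiring a line of care is making the coefficient-by-coefficient descent to $M$ rigorous using the $G$-invariance of $dh$ and its higher derivatives.
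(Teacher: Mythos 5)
Your proposal is correct and follows essentially the same route as the paper's proof: identify $B(k)=e^{-ik\cdot h}Le^{ik\cdot h}-L$ as an order-one operator, verify $G$-equivariance from the additivity of $h$ and the periodicity of $L$, and obtain analyticity from the polynomial dependence $\sum_{|\alpha|\le 2}k^{\alpha}L_{\alpha}$ on $k$ (which the paper cites as a standard fact and you derive directly via the Leibniz rule). The only difference is that you spell out the coordinate computation and the scalar-phase cancellation $T_gM_k=e^{ik\cdot g}M_kT_g$ explicitly, which the paper leaves to the reader.
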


\begin{proof}
It is standard that the commutator $[L, e^{ik\cdot h(x)}]$ is a differential operator of order $1$ on $X$.
Now one can write 
\begin{equation*}
\label{B(k)}
B(k)=e^{-ik\cdot h(x)}Le^{ik\cdot h(x)}-L=e^{-ik\cdot h(x)}[L, e^{ik\cdot h(x)}]
\end{equation*}
to see that $B(k)$ is also a smooth differential operator of order $1$. Also, one can check that $B(k)$ commutes with the action of $G$ by using $G$-periodicity of the operator $L$ and additivity of $h$. This proves the first claim of the lemma. From a standard fact (see e.g., \cite[Theorem 2.2]{GUI}), the operator $e^{-ik\cdot h(x)}Le^{ik\cdot h(x)}$ defined on $X$ can be written as a sum $\sum_{|\alpha|\leq 2}k^{\alpha}L_{\alpha},$
where $L_{\alpha}$ is a $G$-periodic differential operator on $X$ of order $2-|\alpha|$ which is independent of $k$. By pushing the above sum down to a sum of operators on $M$, the claim about analyticity in $k$ is then obvious.
\end{proof}

\begin{defi}
\label{fiber_op}
For any $k \in \mathbb{C}^d$, we denote by $L(k)$ the elliptic operator 
\begin{equation*}
\label{conjugatingLk}
e^{-ik\cdot h(x)}L(x,D)e^{ik\cdot h(x)}
\end{equation*}
in $L^2(M)$ with the domain the Sobolev space $H^2(M)$.

In this definition, the vector $k$ is called the \textbf{quasimomentum} \footnote{The name comes from solid state physics \cite{AshMer}.}. 
\end{defi}

\begin{remark}
\label{remark_L(k)}
\begin{enumerate}[(a)]
\item
When dealing with real quasimomentum $k$, it is enough to consider $k$ in any shifted copy of the Brillouin zone $[-\pi, \pi]^d$, since the operators $L(k)$ and $L(k+2\pi \gamma)$ are unitarily equivalent, for any $\gamma \in \mathbb{Z}^d$. 

\item
The operator $L(k)$ is self-adjoint in $L^{2}(M)$ for each $k \in \mathbb{R}^d$, with the domain $H^{2}(M)$. Due to the ellipticity of $L$, each of the operators $L(k)$ ($k \in \mathbb{R}^d$) has discrete real spectrum and thus, we can list its eigenvalues in non-decreasing order:
\begin{equation*}
\label{eigenv}
\lambda_{1}(k) \leq \lambda_{2}(k) \leq ... \quad .
\end{equation*}
Hence, we can single out continuous and piecewise-analytic \textbf{band functions} $\lambda_{j}(k)$ for each $j \in \mathbb{N}$ \cite{Wilcox}.

\item
By Lemma \ref{fiber_op_structure}, the operators $L(k)$ are perturbations of the self-adjoint operator $L(0)$ by lower order operators $B(k)$ for each $k \in \mathbb{C}^d$. Consequently, the spectra of the operators $L(k)$ on $M$ are all discrete (see \cite[pp.180-190]{Agmon}).

\item
We now describe another equivalent model of the operators $L(k)$, which sometimes can be useful (see \cite{KP2}). For any quasimomentum $k \in \mathbb{C}^d$, we denote by $\gamma_k$ the character (i.e., a $1$-dimensional representation) $e^{ik\cdot g}$ of the abelian group $G$ and consider the  $1$-dimensional flat vector bundle $E_k$ over $M$ associated with this representation. For any real number $s$, let $H^{s}_k(X)$ be the space of $H^s$-sections of $E_k$. Since $L$ is $G$-periodic, $L$ maps continuously $H^2_k(X)$ into $L^2_k(X)$. This defines an elliptic operator over the space $\mathcal{E}(M, E_k)$ of smooth sections of $E_k$ over the compact manifold $M$. Moreover, this elliptic operator is unitarily equivalent to the operator $L(k)$ in Definition \ref{fiber_op}. 
Notice that in the case of abelian coverings, a third equivalent model using differential forms and the Jacobian torus $J(M)$ to define can be found in \cite{Sunada1}.
\end{enumerate}
\end{remark}

Now we can restate the band-gap structure of $\sigma(L)$ presented in Theorem \ref{bandgap} in more details.
\begin{thm}
 \cite{BruSun1, Ea, KOS,K, RS4,Sunada1,Sunada2}
The spectrum of $L$ is the union of all the spectra of $L(k)$ when $k$ runs over the Brillouin zone (or any its shifted copy), i.e.
\begin{equation}
\label{fl_spectrum}
\sigma(L)=\bigcup_{k \in [-\pi, \pi]^d}\sigma(L(k)).
\end{equation}
In other words, the spectrum of $L$ is the range of the multivalued function
\begin{equation*} 
\label{sp_function}
k \rightarrow \lambda(k):=\sigma(L(k)), \quad  k\in  [-\pi, \pi]^d,
\end{equation*}
 As a result, the range of the band function $\lambda_j$ (see remark \ref{remark_L(k)}) constitutes exactly the band $[\alpha_j, \beta_j]$ of the spectrum of $L$  shown in \mref{band-gap}.
\end{thm}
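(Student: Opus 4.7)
My plan is to reduce the spectral problem to a direct integral via the Floquet--Bloch transform adapted to the abelian covering $X\to M$ through the additive function $h$ of Definition \ref{Add_func}, and then invoke the standard spectral theory of decomposable operators. First I would define the Floquet transform on $C_c^\infty(X)$ by
\[
(\mathcal{U}u)(k,x) := e^{-ik\cdot h(x)}\sum_{g\in\bZ^d} u(g\cdot x)\,e^{-ik\cdot g}, \quad k\in[-\pi,\pi]^d,\ x\in X.
\]
The additivity $h(g\cdot x)=h(x)+g$ shows that $(\mathcal{U}u)(k,g\cdot x)=(\mathcal{U}u)(k,x)$, so for each $k$ the function $(\mathcal{U}u)(k,\cdot)$ descends to $M$. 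Splitting $\int_X|u|^2\,d\mu_X$ over the translates $g\cdot \overline{F(M)}$ (using \eqref{integralF(M)}) and applying Parseval on the Pontryagin dual of $\bZ^d$ (with the $(2\pi)^{-d/2}$ normalization) extends $\mathcal{U}$ to a unitary $\mathcal{U}\colon L^2(X)\xrightarrow{\sim}\int^{\oplus}_{[-\pi,\pi]^d} L^2(M)\,dk/(2\pi)^d$.

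Second, I would verify the intertwining $\mathcal{U}L\mathcal{U}^{-1}=\int^{\oplus}L(k)\,dk$. Because $L$ commutes with the $G$-action, it passes through the sum in $\mathcal{U}$, so $(\mathcal{U}Lu)(k,x)=e^{-ik\cdot h(x)}\,L\bigl(e^{ik\cdot h(\,\cdot\,)}(\mathcal{U}u)(k,\,\cdot\,)\bigr)(x)$, which by Definition \ref{fiber_op} is exactly $L(k)(\mathcal{U}u)(k,x)$. The standard theory of decomposable operators (see e.g.\ \cite{RS4}) then gives
\[
\lambda\in\sigma(L)\iff \forall\,\varepsilon>0,\ \bigl|\{k\in[-\pi,\pi]^d:\sigma(L(k))\cap(\lambda-\varepsilon,\lambda+\varepsilon)\neq\emptyset\}\bigr|>0.
\]
Combined with the discreteness of $\sigma(L(k))$ for real $k$ (Remark \ref{remark_L(k)}(c)) and the continuity of the band functions $\lambda_j(k)$ on the compact cube $[-\pi,\pi]^d$ (coming from the analytic perturbation structure of Lemma \ref{fiber_op_structure} together with min--max), this yields $\sigma(L)\subseteq\bigcup_{k}\sigma(L(k))$. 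The reverse inclusion is obtained by a Weyl-sequence argument: given an eigenpair $L(k_0)v_0=\lambda_0 v_0$ on $M$, the Bloch function $x\mapsto e^{ik_0\cdot h(x)}v_0(\pi(x))$ multiplied by smooth cutoffs supported on growing unions of fundamental domains produces approximate eigenfunctions of $L-\lambda_0$ in $L^2(X)$, the commutator with $L$ being a first-order operator against which the widening cutoff wins.

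The main obstacle I expect is justifying the equality \eqref{fl_spectrum} \emph{without} a closure on the right-hand side. By continuity each band $\lambda_j([-\pi,\pi]^d)=[\alpha_j,\beta_j]$ is a compact interval, and the uniform min--max bound $\lambda_j(k)\to\infty$ as $j\to\infty$ (a consequence of ellipticity together with the fact that $L(k)=L(0)+B(k)$ is a relatively bounded perturbation uniformly in $k\in[-\pi,\pi]^d$ by Lemma \ref{fiber_op_structure}) ensures that any bounded subset of $\bigcup_j[\alpha_j,\beta_j]$ intersects only finitely many bands. Hence the countable union is already closed, no closure is required, and one simultaneously recovers the band--gap description of Theorem \ref{bandgap}.
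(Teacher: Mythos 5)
Your proposal is correct and follows essentially the same route the paper relies on: the theorem is quoted there from the literature without proof, but the Floquet transform, its unitarity, and the direct-integral decomposition $\mathcal{F}L\mathcal{F}^{-1}=\int_{\mathcal{O}}^{\oplus}L(k)\,dk$ that you construct are exactly the machinery the paper itself sets up in Lemma \ref{L:floquet} and Remark \ref{direct_integral}, and the passage from the decomposition to \eqref{fl_spectrum} via the spectral criterion for decomposable operators, continuity of the band functions, and local finiteness of the bands is the standard argument of \cite{RS4, K}. The only inessential difference is that your Weyl-sequence argument for the reverse inclusion can be bypassed: if $\lambda_0=\lambda_j(k_0)$, continuity of $\lambda_j$ already shows that $\{k:\lambda_j(k)\in(\lambda_0-\varepsilon,\lambda_0+\varepsilon)\}$ is open and nonempty, hence of positive measure, so the same decomposable-operator criterion yields $\lambda_0\in\sigma(L)$ directly.
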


The notions which we will introduce now are important concepts in solid state physics (see e.g., \cite{AshMer}) as well as in general theory of periodic elliptic operators (see e.g., \cite{K}).
\begin{defi}
\begin{itemize}
\item
A \textbf{Bloch solution with quasimomentum $k$} of the equation $L(x,D)u=0$ is a solution of the form
\begin{equation*}
u(x)=e^{ik\cdot h(x)}\phi (x),
\end{equation*}
where $h$ is any fixed additive function on $X$ and the function $\phi$ is invariant under the action of the deck transformation group $G$.\footnote{It is easy to see that this definition is independent of the choice of $h$.}
\item
The \textbf{Bloch variety} $B_{L}$ of the operator $L$ consists of all pairs $(k,\lambda) \in \mathbb{C}^{d+1}$ such that the equation $Lu=\lambda u$ on $X$ has a non-zero Bloch solution $u$ with quasimomentum $k$. 
The Bloch variety $B_{L}$ can be seen as the graph of the multivalued function $\lambda(k)$, which is also called the \textbf{dispersion relation}:
\begin{equation*}
\label{disp_relation}
B_L=\{(k,\lambda): \lambda \in \sigma(L(k))\}.
\end{equation*}

\item
The \textbf{Fermi surface} $F_{L,\lambda}$ of the operator $L$ at the energy level $\lambda \in \mathbb{C}$ consists of all quasimomenta $k \in \mathbb{C}^{d}$ such that the equation $Lu=\lambda u$ on $X$ has a non-zero Bloch solution $u$ with quasimomentum $k$.  We shall write $F_{L}$ instead of $F_{L,0}$ when $\lambda=0$.
Equivalently, Fermi surfaces are level sets of the dispersion relation.
\end{itemize}
\end{defi}

The next statement can be found in \cite[Theorem 3.1.7]{K} (see also \cite[Lemma 8]{KP2}).
\begin{lemma}
\label{L:Bloch_variety}
There exist entire ($2\pi \mathbb{Z}^d$-periodic in $k$) functions of finite orders on $\mathbb{C}^{d}$ and on $\mathbb{C}^{d+1}$ such that the Fermi and Bloch varieties are the sets of all zeros of these functions respectively.
As a consequence, the band functions $\lambda_j(k)$ are piecewise analytic on $\mathbb{C}^d$. 
\end{lemma}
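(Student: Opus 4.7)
My plan is to exhibit entire functions of finite order on $\mathbb{C}^{d+1}$ and $\mathbb{C}^{d}$ whose zero sets are precisely the Bloch variety $B_L$ and the Fermi variety $F_L$, by applying a regularized Fredholm determinant to the holomorphic operator family $L(k)-\lambda$. The starting point is Lemma \ref{fiber_op_structure}: the map $k\mapsto L(k)$ is holomorphic on $\mathbb{C}^d$ as a family of elliptic operators with common domain $H^2(M)$, and $L(k)-L(0)=B(k)$ is a first-order operator depending polynomially on $k$. Since $L(0)$ is bounded below and self-adjoint on $L^2(M)$, the operator $L(0)+c$ is boundedly invertible for $c>0$ large enough, so I would factor
\[
L(k)-\lambda=(L(0)+c)\,\bigl(I+T(k,\lambda)\bigr),\qquad T(k,\lambda):=(L(0)+c)^{-1}\bigl(B(k)-\lambda-c\bigr).
\]
The point is that $(k,\lambda)\in B_L$ iff $L(k)-\lambda$ has nontrivial kernel in $L^2(M)$, iff $I+T(k,\lambda)$ is not invertible.

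Next I would exploit the Schatten-class structure of $T(k,\lambda)$. The resolvent $(L(0)+c)^{-1}$ is a pseudodifferential operator of order $-2$ on the $n$-dimensional compact manifold $M$, so by the Weyl law its singular values decay like $j^{-2/n}$, placing it in $S_p$ for all $p>n/2$. Composing with the first-order operator $B(k)-\lambda-c$ produces an operator of order $-1$ whose singular values decay like $j^{-1/n}$, so $T(k,\lambda)\in S_p$ for every $p>n$, and moreover $\|T(k,\lambda)\|_{S_p}$ is bounded by a polynomial in $|k|+|\lambda|$ of degree controlled by the orders of $B(k)$ and of the $\lambda$-term. Forming the regularized Fredholm determinant
\[
\mathcal{D}(k,\lambda):=\det\nolimits_{\lceil p\rceil}\bigl(I+T(k,\lambda)\bigr),
\]
one obtains an entire function on $\mathbb{C}^{d+1}$ vanishing exactly on $B_L$, and the standard estimate $|\det_p(I+T)|\le \exp(C\|T\|_{S_p}^{p})$ together with the polynomial Schatten bound gives finite order of growth. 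Fixing $\lambda$ and keeping only the dependence on $k$ yields the desired entire function on $\mathbb{C}^d$ cutting out $F_{L,\lambda}$.

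Periodicity in $k$ comes for free from the additivity of $h$: for any $\gamma\in\mathbb{Z}^d$, the function $e^{2\pi i\gamma\cdot h(x)}$ is $G$-invariant (since $h(g\cdot x)-h(x)=g\in\mathbb{Z}^d$ and $\gamma\cdot g\in\mathbb{Z}$), hence descends to a unitary multiplier $U_\gamma$ on $L^2(M)$ that conjugates $L(k)$ into $L(k+2\pi\gamma)$; the conjugation invariance of the regularized determinant then forces $\mathcal{D}(k+2\pi\gamma,\lambda)=\mathcal{D}(k,\lambda)$. Finally, the piecewise analyticity of the band functions $\lambda_j(k)$ follows by applying the Weierstrass preparation theorem to $\mathcal{D}(k,\lambda)$ near each real $(k_0,\lambda_j(k_0))$: locally $\mathcal{D}=U\cdot P$ with $U$ a nonvanishing unit and $P$ a distinguished polynomial in $\lambda$ with analytic coefficients in $k$, so the $\lambda_j(k)$ are algebraic functions of $k$, analytic off a proper analytic subvariety. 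Equivalently, one can invoke Rellich--Kato analytic perturbation theory directly for the self-adjoint analytic family $L(k)$.

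The step I expect to be the main obstacle is the quantitative Schatten-norm bound that upgrades mere membership of $T(k,\lambda)$ in $S_p$ to an explicit polynomial growth estimate in $(k,\lambda)$; this is what ultimately delivers the \emph{finite} order of $\mathcal{D}$. It requires some care with the pseudodifferential calculus on $M$ and Seeley-type spectral asymptotics to track how the Schatten norms of products depend on the symbols of the factors, but it is standard elliptic theory rather than anything genuinely new, and the remaining steps (defining $\mathcal{D}$, verifying its zero set, periodicity, Weierstrass preparation) are formal consequences.
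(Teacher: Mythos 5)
The paper does not prove this lemma at all; it cites \cite{K}*{Theorem 3.1.7} and \cite{KP2}*{Lemma 8}, and your regularized-determinant construction is exactly the approach of those references. Most of your argument is sound: the factorization $L(k)-\lambda=(L(0)+c)(I+T(k,\lambda))$, the identification of the Bloch variety with the non-invertibility set of $I+T$, the Schatten-class membership $T(k,\lambda)\in S_p$ for $p>n$ with polynomially growing $S_p$-norm (note $B(k)$ is quadratic in $k$, so the bound is $C(1+|k|^2+|\lambda|)$, which is still fine), the finite-order estimate for $\det_{\lceil p\rceil+1}$, and the Weierstrass-preparation derivation of piecewise analyticity all go through. (Your closing remark that one could ``equivalently'' invoke Rellich--Kato is too glib for $d\ge 2$: multiparameter self-adjoint analytic families do not have analytic eigenvalue branches, and piecewise analyticity is precisely the nontrivial content of Wilcox's theorem; the preparation-theorem route is the correct one.)

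The genuine gap is the periodicity step. Writing $U_\gamma$ for multiplication by the $G$-invariant function $e^{2\pi i\gamma\cdot h(x)}$, one has $L(k+2\pi\gamma)-\lambda=U_\gamma^*(L(k)-\lambda)U_\gamma$, hence
$I+T(k+2\pi\gamma,\lambda)=\bigl[(L(0)+c)^{-1}U_\gamma^*(L(0)+c)\bigr]\,(I+T(k,\lambda))\,U_\gamma$.
This is \emph{not} a similarity transform of $I+T(k,\lambda)$, because the product of the two outer factors is $(L(0)+c)^{-1}U_\gamma^*(L(0)+c)U_\gamma=I+(L(0)+c)^{-1}B(2\pi\gamma)\neq I$ ($U_\gamma$ does not commute with the differential operator $L(0)$). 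Using that $CD$ and $DC$ have the same nonzero spectrum, the best one gets directly is that $I+T(k+2\pi\gamma,\lambda)$ has the same nonzero spectrum as $(I+T(k,\lambda))(I+K_\gamma)$ for a fixed $K_\gamma\in S_p$, and since regularized determinants are only multiplicative up to an explicit nonvanishing exponential correction $\exp(\mathrm{tr}\,Q_p(T,K_\gamma))$ that depends on $(k,\lambda)$, your $\mathcal{D}$ satisfies $\mathcal{D}(k+2\pi\gamma,\lambda)=\mathcal{D}(k,\lambda)e^{g_\gamma(k,\lambda)}$ with $g_\gamma$ entire but not identically zero. So the zero set is periodic but the function is not, whereas the lemma asserts the existence of a \emph{periodic} entire function of finite order. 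Repairing this requires an additional argument --- either solving the resulting additive cocycle equation to correct $\mathcal{D}$ by a nonvanishing entire factor while preserving finite order, or setting up the determinant from the start with an operator family that transforms by genuine unitary conjugation under $k\mapsto k+2\pi\gamma$, as is done in \cite{K}.
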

Note that the piecewise analyticity of the band functions is shown initially in \cite{Wilcox} for Schr\"odinger operators in the flat case.

Without loss of generality, it is enough to assume henceforth that $0$ is the spectral edge of interest (by adding a constant into the operator $L$ if neccessary) and there is a spectral gap below this spectral edge $0$. Therefore,
$0$ is the lower spectral edge of some spectral band \footnote{The upper spectral edge case is treated similarly.}, i.e., $0$ is the minimal value of some band function $\lambda_j(k)$ for some $j \in \mathbb{N}$  over the Brillouin zone. 

As in \cite{KKR, KR}, the following analytic assumptions are imposed on the band function $\lambda_j$:\\

\textbf{Assumption A}\\

\emph{There exists $k_0 \in [-\pi, \pi]^d$ and a band function $\lambda_{j}(k)$ such that:}\\

\textbf{A1} $\lambda_{j}(k_0)=0$.\\

\textbf{A2} $\min_{k \in \mathbb{R}^d, i \neq j}|\lambda_{i}(k)|>0$.\\

\textbf{A3} \emph{$k_0$ is the only (modulo $2\pi \mathbb{Z}^d$) minimum of $\lambda_{j}$}.\\

\textbf{A4} \emph{The Hessian matrix $H:=\Hess{(\lambda_j)}(k_0)$ of $\lambda_j$ at $k_0$  is positive-definite}.\\

\textbf{A5} \emph{All components of the quasimomentum $k_0$ are equal to either $0$ or $\pi$.
}\\

\begin{remark}
\begin{enumerate}[(a)]
\item
For the flat case, the main theorem in \cite{KloppRalston} shows that the conditions \textbf{A1} and \textbf{A2} are `generically' satisfied, i.e., they can be achieved by small perturbation of the potential of a periodic Schr\"odinger operator. The same proof in \cite{KloppRalston} still works for periodic Schr\"odinger operators on a general abelian covering. 

\item
In mathematics and physics literature, the conditions \textbf{A3} and \textbf{A4} are commonly believed to be `generically' true (see e.g., \cite[Conjecture 5.1]{KP2}). In particular, \textbf{A4} is often assumed to define \textbf{effective masses} of Bloch electrons \cite{AshMer}. Additionally, we remark that the condition \textbf{A3} can be relaxed (see Section 9).

\item
It is known \cite{HKSW} that spectral edges could occur deeply inside the Brillouin zone, however, the condition \textbf{A5} holds in many practical cases. We shall only use this condition for the \textit{spectral gap interior} case.

\item
Due to results of \cite{KS} (in the flat case) and of \cite{KOS} (in the general case), all these assumptions \textbf{A1-A5} hold at the bottom of the spectrum for non-magnetic Schr\"odinger operators.
\end{enumerate}
\end{remark}

Here are some notations that will be used thoughout this paper.
\begin{notation}
\label{notation}
\begin{enumerate}[(a)]
\item The real parts of a complex vector $z$ and of a complex matrix $A$ are denoted by $\Re(z)$ and $\Re(A)$, respectively.  

\item For any two functions $f$ and $g$ defined on $X \times X$, if there exist constants $C>0$ and $R>0$ such that $|f(x,y)|\leq C|g(x,y)|$ whenever $d_X(x,y)>R$, we write $f(x,y)=O(g(x,y))$. 
\end{enumerate}
\end{notation}

We say that a set $W$ in $\mathbb{C}^d$ is \textbf{symmetric} 
if for any $z \in W$, we have $\overline{z} \in W$.

The following proposition will play a crucial role in establishing Theorem \ref{main}. 
\begin{prop}
\label{analytic_perturbation}
There exists an $\epsilon_0>0$ and a symmetric open subset $V \subset \mathbb{C}^d$ containing the quasimomentum $k_0$ from Assumption A such that the band function $\lambda_j$ in Assumption A has an analytic continuation into a neighborhood of $\overline{V}$, and the following properties hold for any $z$ in a symmetric neighborhood of $\overline{V}$:
\begin{enumerate}[\bf{(P}1)]
\item $\lambda_{j}(z)$ is a simple eigenvalue of $L(z)$.

\item $|\lambda_j(z)|<\epsilon_0$ and $\displaystyle \overline{B}(0,\epsilon_0) \cap \sigma(L(z))=\{\lambda_j(z)\}$.

\item  There is a nonzero $G$-periodic function $\phi_z$ defined on $X$ such that $$L(z)\phi_z=\lambda_j(z)\phi_z.$$
Moreover, $z \mapsto \phi_z$ can be chosen analytic as a $H^2(M)$-valued function.

\item $\displaystyle 2\Re(\Hess{(\lambda_{j})}(z))>\min \sigma(\Hess{(\lambda_{j})}(k_0))\cdot I_{d \times d}$. 

\item
$\displaystyle F(z):=(\phi_{z}(\cdot),\phi_{\overline{z}}(\cdot))_{L^{2}(M)} \neq 0.$
\end{enumerate}
\end{prop}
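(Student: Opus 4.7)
The plan is to deduce the proposition from Kato--Rellich analytic perturbation theory applied to the holomorphic family $z \mapsto L(z)$. The starting point is simplicity and isolation of the target eigenvalue at $z = k_0$: Assumption \textbf{A1} gives $\lambda_{j}(k_0) = 0$, while Assumption \textbf{A2} yields $\delta > 0$ with $|\lambda_{i}(k_0)| \geq \delta$ for every $i \neq j$. Since the band functions enumerate the eigenvalues of the self-adjoint operator $L(k_0)$ with multiplicity, $0$ is a \emph{simple} eigenvalue of $L(k_0)$, isolated from the rest of its discrete spectrum by a gap of size at least $\delta$. Fix any $\epsilon_{0} \in (0, \delta/3)$ and let $\Gamma = \{\mu \in \mathbb{C} : |\mu| = \epsilon_{0}\}$.

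Next, by Lemma \ref{fiber_op_structure}, $z \mapsto L(z)$ is a holomorphic family of type (A) with common domain $H^{2}(M)$, and $L(z) - L(k_0) = B(z) - B(k_0)$ is a first-order differential operator whose relative bound with respect to $L(k_0)$ tends to zero as $z \to k_0$ (by the elliptic a priori estimate on the compact manifold $M$). Standard perturbation theory therefore yields joint holomorphy of $(L(z) - \mu)^{-1}$ in $(z, \mu)$ on some $V_{1} \times \Gamma$, where $V_{1}$ is an open neighborhood of $k_0$ in $\mathbb{C}^{d}$. The Riesz projector
\begin{equation*}
P(z) := \frac{1}{2\pi i}\oint_{\Gamma}(L(z) - \mu)^{-1}\,d\mu
\end{equation*}
is then holomorphic on $V_{1}$, and because $\mathrm{rank}\,P(z)$ is locally constant with $\mathrm{rank}\,P(k_0) = 1$, the projector $P(z)$ is one-dimensional throughout $V_{1}$. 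Setting $\lambda_{j}(z) := \mathrm{tr}(L(z)P(z))$ and $\phi_{z} := P(z)\phi_{k_0}$ (nonzero after shrinking $V_{1}$, since $P(z)\phi_{k_0}$ stays close to $\phi_{k_0}$ by continuity) gives the desired analytic extension of the band function together with an analytic choice of eigenfunction, proving \textbf{(P1)}--\textbf{(P3)}.

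Properties \textbf{(P4)} and \textbf{(P5)} are then harvested by continuity. For \textbf{(P4)}, $\Hess(\lambda_{j})(z)$ is holomorphic in $z$, and at the real point $z = k_0$ the matrix $H := \Hess(\lambda_{j})(k_0)$ is real, symmetric, and positive definite by \textbf{A4}, so every eigenvalue of $2H$ exceeds the smallest eigenvalue $\min \sigma(H) > 0$; the strict matrix inequality $2H > \min\sigma(H)\cdot I_{d\times d}$ therefore persists under small complex perturbation. For \textbf{(P5)}, at $z = k_0$ we have $F(k_0) = \|\phi_{k_0}\|_{L^{2}(M)}^{2} > 0$; writing $\phi_{z}(x) = u(z,x)$ with $u$ holomorphic in $z$, the pointwise identity $u(w,x) = \sum_{n} a_{n}(x) w^{n}$ yields $\overline{u(\bar z,x)} = \sum_{n} \overline{a_{n}(x)}\, z^{n}$, which is again holomorphic in $z$; hence $F(z) = \int_{M} u(z,x)\,\overline{u(\bar z,x)}\,d\mu_{M}(x)$ is holomorphic and so nonzero in a neighborhood of $k_0$. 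Finally, take $V$ to be any symmetric open set precompactly contained in $V_{1} \cap \{\bar z : z \in V_{1}\}$ small enough that all five properties persist on a symmetric open neighborhood of $\overline{V}$. The main technical step is the joint holomorphy of the resolvent on $V_{1} \times \Gamma$, which rests on the infinitesimal relative bound of the first-order $B(z) - B(k_0)$ against the second-order elliptic operator $L(k_0)$; once this is in hand, every remaining claim is a routine consequence of analyticity and continuity.
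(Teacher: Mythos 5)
Your proposal is correct and follows essentially the same route as the paper: both establish that $\{L(z)\}$ is an analytic family of type (A) via Lemma \ref{fiber_op_structure} and Remark \ref{remark_L(k)}, then invoke Kato--Rellich perturbation theory with \textbf{A1}, \textbf{A2}, \textbf{A4} for \textbf{(P1)}--\textbf{(P4)} and continuity of $F$ with $F(k_0)=\|\phi_{k_0}\|^2_{L^2(M)}>0$ for \textbf{(P5)}. The only difference is that you spell out the standard details (Riesz projector, rank constancy, infinitesimal relative bound) that the paper leaves to the references.
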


\begin{proof}
Due to Remark \ref{remark_L(k)}, for any $z \in \mathbb{C}^d$, the operator $L(z)$ has discrete spectrum and thus, it is a closed operator with nonempty resolvent set. Moreover, the operator domain $H^2(M)$ of $L(z)$ is independent of $z$. Also, by Lemma \ref{fiber_op_structure}, for any $\phi \in H^2(M)$, $L(z)\phi$ is a $L^2(M)$-valued analytic function of $z$. These imply that $\{L(z)\}_{z \in \mathbb{C}^d}$ is an analytic family of type $\mathcal{A}$ (see e.g., \cite{Ka, RS4}).
Now \textbf{(P1)-(P4)} would follow easily from analytic perturbation theory \cite{RS4} using conditions \textbf{A1}, \textbf{A2} and \textbf{A4}, while \textbf{(P5)} is due to \textbf{(P3)} and the inequality $F(k_0)=\|\phi_{k_{0}}\|^{2}_{L^{2}(M)}>0$.
\end{proof}
Define $\mathcal{V}:=\{\beta \in \mathbb{R}^d \mid k_0+i\beta \in \overline{V}\}.$
Now we introduce the function 
$E(\beta):=\lambda_{j}(k_0+i\beta)$, which is defined on $\mathcal{V}$.

The next lemma (see \cite{KKR}) is the only place in this paper where the condition \textbf{A5} is used.
\begin{lemma}
\label{func_E}
Assume that the operator $L$ is \textbf{real} \footnote{Namely, $Lu$ is real whenever $u$ is real.} and the condition \textbf{A5} is satisfied. Then $E$ is a \textbf{real-valued} function. By reducing the neighborhood $V$ in Proposition \ref{analytic_perturbation} if necessary, the function $E$ can be assumed real analytic and strictly concave function from $\mathcal{V}$ to $\mathbb{R}$ such that its Hessian at any point $\beta$ in $\mathcal{V}$ is negative-definite.
\end{lemma}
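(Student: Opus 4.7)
The plan is to prove the reality of $E$ via a symmetry induced by $L$ being real combined with condition \textbf{A5}, and then get concavity by transferring the positive-definite Hessian from \textbf{A4} through the imaginary direction.

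First I would establish a general symmetry of the fiber operators. Since $L$ has real coefficients, $h$ is $\mathbb{R}^d$-valued, and $L(k)=e^{-ik\cdot h(x)}Le^{ik\cdot h(x)}$, complex conjugation gives $\overline{L(k)u}=L(-\overline{k})\overline{u}$ for any $u$ and any $k\in\mathbb{C}^d$. Applied to the analytic eigenvalue branch of Proposition \ref{analytic_perturbation}, this shows that $\overline{\lambda_j(k)}$ is an eigenvalue of $L(-\overline{k})$ with eigenfunction $\overline{\phi_k}$. Using the simplicity and isolation statements \textbf{(P1)}, \textbf{(P2)}, together with the fact that $V$ is symmetric and the branch $\lambda_j$ is the unique eigenvalue of $L(z)$ in the disk $\overline{B}(0,\epsilon_0)$, one concludes by analytic continuation the identity
\begin{equation*}
\overline{\lambda_j(k)}=\lambda_j(-\overline{k}), \qquad k\in \overline{V}.
\end{equation*}

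Next, I would plug $k=k_0+i\beta$ with $\beta\in\mathcal{V}$ (so $\overline{k}=k_0-i\beta$, $-\overline{k}=-k_0+i\beta$) and invoke \textbf{A5}: each component of $k_0$ lies in $\{0,\pi\}$, hence $-k_0\equiv k_0\pmod{2\pi\mathbb{Z}^d}$. Because $L(k)$ and $L(k+2\pi\gamma)$ are unitarily equivalent (Remark \ref{remark_L(k)}(a)), $\lambda_j$ is $2\pi\mathbb{Z}^d$-periodic, so $\lambda_j(-k_0+i\beta)=\lambda_j(k_0+i\beta)$. Combining with the identity above yields $\overline{E(\beta)}=E(\beta)$, i.e.\ $E$ is real-valued on $\mathcal{V}$. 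Real analyticity of $E$ on $\mathcal{V}$ follows at once from analyticity of $\lambda_j$ on $V$.

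For concavity, since $\lambda_j$ is holomorphic in a neighborhood of $k_0$, the chain rule gives
\begin{equation*}
\frac{\partial^2 E}{\partial \beta_a\,\partial \beta_b}(0)=i^2\frac{\partial^2 \lambda_j}{\partial k_a\,\partial k_b}(k_0)=-H_{ab},
\end{equation*}
so $\Hess(E)(0)=-H$ is negative-definite by Assumption \textbf{A4}. By continuity of second derivatives, $\Hess(E)(\beta)$ remains negative-definite on some open neighborhood of $0$; shrinking $V$ (and hence $\mathcal{V}$) to a symmetric convex neighborhood on which this holds, strict concavity of $E\colon\mathcal{V}\to\mathbb{R}$ is immediate.

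The main delicate point is passing from the pointwise eigenvalue identification $\overline{\lambda_j(k)}\in\sigma(L(-\overline{k}))$ to the identity $\overline{\lambda_j(k)}=\lambda_j(-\overline{k})$ as analytic functions, since in general $\sigma(L(-\overline{k}))$ contains many eigenvalues. Here simplicity \textbf{(P1)} together with the isolation \textbf{(P2)} (with $V$ already symmetric) does the job: both $\overline{\lambda_j(k)}$ and $\lambda_j(-\overline{k})$ are the unique eigenvalue of $L(-\overline{k})$ in $\overline{B}(0,\epsilon_0)$. Everything else is straightforward perturbation theory applied to the holomorphic family $L(z)$.
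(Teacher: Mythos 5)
Your proof is correct and follows essentially the same route as the argument the paper defers to in \cite{KKR}: the conjugation symmetry $\overline{L(k)u}=L(-\overline{k})\overline{u}$ of the real operator, combined with \textbf{A5} and the $2\pi\mathbb{Z}^d$-periodicity to identify $L(-k_0+i\beta)$ with $L(k_0+i\beta)$, forces $\overline{E(\beta)}=E(\beta)$ via the simplicity and isolation in \textbf{(P1)}--\textbf{(P2)}, and the Hessian computation $\Hess(E)(0)=-H$ plus continuity and shrinking $\mathcal{V}$ to a convex set gives strict concavity. No gaps.
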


For $\lambda \in \mathbb{R}$, we put

$$K_{\lambda}:=\{\beta \in \mathcal{V}: E(\beta)\geq \lambda \}$$
and
$$\Gamma_{\lambda}:=\{\beta \in \mathcal{V}: E(\beta)=\lambda \}$$

Due to Lemma \ref{func_E}, $K_{\lambda}$ is a strictly convex $d$-dimensional compact set in $\mathbb{R}^d$, and its boundary $\Gamma_{\lambda}$ is a compact hypersurface in $\mathbb{R}^d$ whose Gauss-Kronecker curvature is nowhere zero. 
Therefore, there exists a diffeomorphism $\beta$ from $\mathbb{S}^{d-1}$ onto $\Gamma_{\lambda}$ such that 
\begin{equation*}
\label{E:gradient_E_s}
\nabla E(\beta_{s})=-|\nabla E(\beta_s)|s.
\end{equation*}
In addition, $$\lim_{|\lambda| \rightarrow 0}\max_{s \in \mathbb{S}^{d-1}}|\beta_s|=0.$$

By letting $|\lambda|$ be sufficiently small, 
we will suppose that there is an $r_0>0$ (independent of $s$) such that 
\begin{equation}
\label{E:beta_s_in_V}
\{k+it\beta_{s} \mid (t,s) \in [0,1] \times \mathbb{S}^{d-1}, \hspace{3pt} |k-k_0|\leq r_0\} \subset V.
\end{equation}

%%%%%%%%%%%%%%%%%%%%%%%%%%%%%%%%%%%%%%
%%%%%%%%%%%%%%%%%%%%%%%%%%%%%%%%%%%%%%
\section{The main results}
\label{mainresults}
We recall that $h$ is a fixed additive function satisfying \mref{additivity} in Definition \ref{Add_func}.

First, we consider the case when $\lambda$ is inside a gap and is near to one of the edges of the gap.
The following result is an analog for abelian coverings of compact Riemannian manifolds of \cite[Theorem 2.11]{KKR}.
\begin{thm}
\label{main} 
(\textit{Spectral gap interior})

Suppose that $d \geq 2$, $L$ is real, and the conditions \textbf{A1-A5} are satisfied.
For $\lambda<0$ sufficiently close to $0$ (depending on the dispersion branch $\lambda_j$ and the operator $L$), the Green's function $G_{\lambda}$ of $L$ at $\lambda$ admits the following asymptotics as $d_{X}(x,y) \rightarrow \infty$:
\begin{equation}
\label{main_asymp}
\begin{split}
G_{\lambda}(x,y)&=\frac{e^{(h(x)-h(y))(ik_{0}-\beta_{s})}}{(2\pi|h(x)-h(y)|)^{(d-1)/2}}\cdot\frac{|\nabla E(\beta_s)|^{(d-3)/2}}{\det{(-\mathcal{P}_s \Hess{(E)}(\beta_{s})\mathcal{P}_s)}^{1/2}}\\& \times \frac{\phi_{k_{0}+i\beta_{s}}(x)\overline{\phi_{k_{0}-i\beta_{s}}(y)}}{(\phi_{k_{0}+i\beta_{s}},\phi_{k_{0}-i\beta_{s}})_{L^{2}(M)}}
+e^{(h(y)-h(x))\cdot \beta_{s}}r(x,y).
\end{split}
\end{equation}
Here $$\displaystyle s=(h(x)-h(y))/|h(x)-h(y)| \in \mathcal{A}_h,$$ 
and $\mathcal{P}_s$ is the projection from $\mathbb{R}^{d}$ onto the tangent space of the unit sphere $\mathbb{S}^{d-1}$ at the point $s$.
Also, there is a constant $C>0$ (independent of $s$ and of the choice of $h$) such that the remainder term $r$ satisfies $$|r(x,y)| \leq Cd_X(x,y)^{-d/2},$$ when $d_X(x,y)$ is large enough.
\end{thm}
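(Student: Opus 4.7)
The plan is to adapt the Floquet--Bloch / contour-shift / stationary-phase strategy of \cite{KKR} to the covering setting, with the additive function $h$ from Definition \ref{Add_func} playing the role of the Euclidean coordinate. First, the Floquet--Bloch theory recalled in Section \ref{setup2.3} and Remark \ref{remark_L(k)} allows one to express the Green's function as
\begin{equation*}
G_\lambda(x,y)\;=\;\frac{1}{(2\pi)^d}\int_{[-\pi,\pi]^d} e^{ik\cdot(h(x)-h(y))}\,R_k(x,y)\,dk,
\end{equation*}
where $R_k(x,y)$ is the Schwartz kernel of $(L(k)-\lambda)^{-1}$ on $M$, interpreted as a kernel on a single fundamental domain in $X\times X$. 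Since $\lambda\notin\sigma(L)$, the integrand is smooth in $k$.

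Next I would reduce the problem to a scalar oscillatory integral. By Proposition \ref{analytic_perturbation}, over the complex neighborhood $V$ of $k_0$ the resolvent $(L(k)-\lambda)^{-1}$ splits as $P_k/(\lambda_j(k)-\lambda) + Q_k(\lambda)$, with $P_k$ the rank-one Riesz projection onto $\mathrm{span}(\phi_k)$ and $Q_k(\lambda)$ bounded and analytic in $k$; Assumption \textbf{A2} keeps every other band function uniformly away from $\lambda$. A smooth cutoff supported near $k_0$ separates $G_\lambda$ into a main piece carrying the pole and three remainders (the $Q_k$ term, the tail of $k$ outside $V$, and the other bands). Repeated integration by parts in $k$, using $\nabla_k e^{ik\cdot(h(x)-h(y))} = i(h(x)-h(y))\,e^{ik\cdot(h(x)-h(y))}$ together with Proposition \ref{P:add_func}, shows that all three remainders decay faster than any negative power of $d_X(x,y)$. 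What survives is the scalar integral
\begin{equation*}
I(x,y)\;=\;\frac{1}{(2\pi)^d}\int_{|k-k_0|<r_0}\frac{e^{ik\cdot(h(x)-h(y))}\,\phi_k(x)\,\overline{\phi_{\bar k}(y)}}{F(k)\,(\lambda_j(k)-\lambda)}\,dk.
\end{equation*}

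The third step is the contour deformation. For $s=(h(x)-h(y))/|h(x)-h(y)|\in\mathcal{A}_h$, I would shift the real contour to $k=k_0+\xi+i\tau\beta_s$ with $\tau\in[0,1]$; by \mref{E:beta_s_in_V} and properties \textbf{(P1)}--\textbf{(P4)} of Proposition \ref{analytic_perturbation} the whole family stays inside $V$, the numerator remains analytic, and the pole locus $\{\lambda_j(k)=\lambda\}$ is crossed transversally along the translated Fermi surface $k_0+i\Gamma_\lambda$. Applying the residue theorem then converts $I(x,y)$, modulo an error of size $e^{-\beta_s\cdot(h(x)-h(y))}\,d_X(x,y)^{-d/2}$, to the $(d-1)$-dimensional oscillatory surface integral
\begin{equation*}
\frac{e^{(ik_0-\beta_s)\cdot(h(x)-h(y))}}{(2\pi)^{d-1}}\int_{\Gamma_\lambda}\frac{e^{i\xi(\beta)\cdot(h(x)-h(y))}\,\phi_{k_0+i\beta}(x)\,\overline{\phi_{k_0-i\beta}(y)}}{F(k_0+i\beta)\,|\nabla E(\beta)|}\,d\sigma(\beta).
\end{equation*}

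Finally, I would perform stationary phase on the strictly convex hypersurface $\Gamma_\lambda$ (strict convexity from Lemma \ref{func_E}). The defining relation $\nabla E(\beta_s)=-|\nabla E(\beta_s)|\,s$ says precisely that $\beta_s$ is the unique nondegenerate critical point of the phase, and projecting $\Hess(E)(\beta_s)$ onto the tangent plane $T_{\beta_s}\Gamma_\lambda$ via $\mathcal{P}_s$ produces the determinant factor appearing in \mref{main_asymp}. Combining the Jacobian of the Gauss parametrization $\mathbb{S}^{d-1}\ni s\mapsto \beta_s\in\Gamma_\lambda$ with the surface measure $d\sigma/|\nabla E|$ accounts for the unusual exponent $(d-3)/2$ on $|\nabla E(\beta_s)|$; the standard $|h(x)-h(y)|^{-(d-1)/2}$ factor and the evaluation of the amplitude at $\beta_s$ supply the remaining ingredients. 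I expect the main obstacle to be uniform control of all error estimates as the direction $s$ varies in $\mathcal{A}_h$ (which is merely dense in $\mathbb{S}^{d-1}$) and as $\lambda\to 0^-$; this forces one to use the quantitative equivalence between $d_X(x,y)$ and $|h(x)-h(y)|$ from Proposition \ref{P:add_func}, together with the uniform analyticity and nondegeneracy supplied by Proposition \ref{analytic_perturbation}.
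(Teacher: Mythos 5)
Your overall architecture --- Floquet reduction, the imaginary shift of the quasimomentum by $\beta_s$, isolation of the rank-one piece coming from the branch $\lambda_j$, and reduction to a scalar $k$-integral --- is the same as the paper's (Sections 4--6). The final step, however, contains a genuine flaw. By Proposition \ref{singularity}, for $(t,k)\in[0,1]\times\mathcal{O}$ one has $\lambda\in\sigma(L(k+it\beta_s))$ only at $(k,t)=(k_0,1)$. Hence the deformation $k\mapsto k+i\tau\beta_s$, $\tau\colon 0\to 1$, never crosses the complex Fermi surface: it merely brings the single point $k_0+i\beta_s$ onto it at the endpoint. The residue theorem therefore produces no $(d-1)$-dimensional surface integral over $k_0+i\Gamma_\lambda$; what you are left with at $\tau=1$ is the $d$-dimensional integral \mref{E:formula_G0}, whose integrand has an isolated, integrable singularity at $k=k_0$ (this is precisely why Lemma \ref{L:bilin} is needed to pass to the limit $t\to 1^-$). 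Extracting the $|a|^{-(d-1)/2}$ decay, the factor $|\nabla E(\beta_s)|^{(d-3)/2}$ and the curvature determinant from that singular integral is the real work, and the paper does it via the Weierstrass preparation factorization of $\lambda_j(k+i\beta_s)-\lambda$ into a free-case normal form (Proposition \ref{P:asymp_KKR}, imported from \cite{KKR}), not by stationary phase on $\Gamma_\lambda$. Even if one could legitimately arrive at a surface integral over $\Gamma_\lambda$ (which would require a $\xi$-dependent deformation sweeping through $K_\lambda$, not the single-direction shift you describe), the $\beta$-dependence of the exponential is $e^{-\beta\cdot a}$, so the relevant tool would be Laplace's method at the minimizer of $\beta\cdot s$ on $\Gamma_\lambda$, not stationary phase.

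A secondary but nontrivial gap: discarding the non-singular part of the resolvent by ``repeated integration by parts in $k$'' requires uniform sup-norm bounds on $D_k^\alpha$ of its Schwartz kernel on $M\times M$; boundedness and analyticity of $Q_k(\lambda)$ as an operator on $L^2(M)$ do not give this. On a compact manifold one needs the order-$(-2)$ parametrix construction and the kernel smoothing estimates of Theorem \ref{microlocal} and Proposition \ref{kernel_estimate} (Section 7.2), which occupy a substantial portion of the paper and cannot be treated as routine.
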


By using rational admissible directions (see \mref{rational}) in the formula \mref{main_asymp}, the large scale behaviors of the Green's function along orbits of the $G$-action admit the following nice form in which the additive function $h$ is absent.
%For rational admissible directions  in \mref{rational}, the formula \mref{main_asymp} can be rewritten immediately in a form that reflects how the deck transformation group effects the large scale behaviors of the Green's function along orbits of the $G$-action.
\begin{cor}
Under the same notations and hypotheses of Theorem \ref{main} and suppose that $\lambda<0$ is close enough to $0$, as $|g| \rightarrow \infty$ ($g \in \bZ^d$), we have
\begin{equation}
\label{main_rational}
\begin{split}
G_{\lambda}(x,g \cdot x)&=\frac{e^{g\cdot (ik_{0}-\beta_{g/|g|})}}{(2\pi|g|)^{(d-1)/2}}\cdot\frac{|\nabla E(\beta_{g/|g|})|^{(d-3)/2}}{\det{(-\mathcal{P}_{g/|g|} \Hess{(E)}(\beta_{g/|g|})\mathcal{P}_{g/|g|})}^{1/2}}\\& \times \frac{\phi_{k_{0}+i\beta_{g/|g|}}(x)\overline{\phi_{k_{0}-i\beta_{g/|g|}}(g \cdot x)}}{(\phi_{k_{0}+i\beta_{g/|g|}},\phi_{k_{0}-i\beta_{g/|g|}})_{L^{2}(M)}}
+e^{g\cdot \beta_{s}}O(|g|^{-d/2}).
\end{split}
\end{equation}
\end{cor}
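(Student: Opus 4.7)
The corollary is a direct specialization of Theorem \ref{main} to $y = g \cdot x$. My plan is to apply the theorem to $G_\lambda(g \cdot x, x)$ rather than $G_\lambda(x, g \cdot x)$ and use $G$-periodicity of the Bloch eigenfunctions to put the resulting asymptotic in the stated form.

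By Proposition \ref{A_h}, the unit vector $g/|g|$ lies in $\mathcal{A}_h$ for every $g \in \bZ^d \setminus \{0\}$; by Proposition \ref{P:quasi_iso}, $|g| \to \infty$ is equivalent to $d_X(x, g \cdot x) \to \infty$, so Theorem \ref{main} is applicable along orbits. Since $L$ is real and self-adjoint and $\lambda \in \mathbb{R}$, the resolvent kernel is real, and combined with the self-adjointness identity $G_\lambda(x,y) = \overline{G_\lambda(y,x)}$ this yields the symmetry
\[
G_\lambda(x, g \cdot x) \;=\; G_\lambda(g \cdot x, x).
\]
I would now apply Theorem \ref{main} to the right-hand side with first point $g \cdot x$ and second point $x$. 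The additivity identity \mref{additivity} gives $h(g \cdot x) - h(x) = g$, so $|h(g \cdot x) - h(x)| = |g|$ and the unit direction in \mref{main_asymp} equals $s = g/|g|$. Direct substitution reproduces the phase $e^{g \cdot (ik_0 - \beta_{g/|g|})}$, the prefactor $(2\pi|g|)^{-(d-1)/2}$, and the gradient and determinant factors exactly as written in \mref{main_rational}.

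The Bloch factor from substitution reads $\phi_{k_0 + i\beta_{g/|g|}}(g \cdot x)\, \overline{\phi_{k_0 - i\beta_{g/|g|}}(x)}/(\phi_{k_0 + i\beta_{g/|g|}}, \phi_{k_0 - i\beta_{g/|g|}})_{L^2(M)}$. By $G$-periodicity of the Bloch eigenfunctions on $X$, I would replace $\phi_{k_0 + i\beta_{g/|g|}}(g \cdot x)$ by $\phi_{k_0 + i\beta_{g/|g|}}(x)$ and $\phi_{k_0 - i\beta_{g/|g|}}(x)$ by $\phi_{k_0 - i\beta_{g/|g|}}(g \cdot x)$, giving the Bloch factor in the form stated in \mref{main_rational}. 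Finally, the remainder $e^{(h(x) - h(g \cdot x)) \cdot \beta_{g/|g|}}\, O(d_X(x,g \cdot x)^{-d/2})$ reduces to $e^{-g \cdot \beta_{g/|g|}}\, O(|g|^{-d/2})$, which matches $e^{g \cdot \beta_s}\, O(|g|^{-d/2})$ with $s = -g/|g|$ once one invokes the evenness $\beta_{-s} = -\beta_s$ (this follows from Lemma \ref{func_E} together with the self-adjointness identity $\lambda_j(\overline{z}) = \overline{\lambda_j(z)}$, which force $E$ to satisfy $E(-\beta) = \overline{E(\beta)} = E(\beta)$ on $\mathcal{V}$).

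I do not anticipate a serious obstacle. Once the symmetry $G_\lambda(x, g \cdot x) = G_\lambda(g \cdot x, x)$ is in hand and Theorem \ref{main} is applied in the natural direction $s = g/|g|$, the remainder of the argument is essentially bookkeeping dictated by additivity of $h$ and $G$-periodicity of $\phi_z$. The only mildly subtle point is the relabeling in the error-term exponent, which needs the evenness of $E$ noted above; should one instead apply Theorem \ref{main} directly to $G_\lambda(x, g\cdot x)$ with $s = -g/|g|$, the phases would have to be matched through $e^{-2ig \cdot k_0} = 1$ (from $k_0 \in \pi\{0,1\}^d$, i.e.\ condition \textbf{A5}), which complicates the exposition without changing the content.
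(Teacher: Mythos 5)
Your proposal is correct and amounts to the same argument the paper intends (the corollary is presented as an immediate specialization of Theorem \ref{main} to $y=g\cdot x$ via the rational admissible directions of Proposition \ref{A_h}): your detour through the kernel symmetry $G_\lambda(x,g\cdot x)=G_\lambda(g\cdot x,x)$ is just an alternative bookkeeping for the sign of $s$, equivalent to substituting $y=g\cdot x$ directly with $s=-g/|g|$ and then using the evenness of $E$. The identities you supply to reconcile the two readings --- $\beta_{-s}=-\beta_s$ from $E(-\beta)=\overline{E(\beta)}=E(\beta)$, the $G$-periodicity of $\phi_z$ from \textbf{(P3)}, and $e^{2ig\cdot k_0}=1$ from \textbf{A5} --- are exactly the (unstated) steps the paper's one-line derivation relies on, and your reading of the remainder's exponent as $e^{g\cdot\beta_{-g/|g|}}=e^{-g\cdot\beta_{g/|g|}}$ is the only one consistent with the size of the leading term.
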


We also give another interpretation of \cite[Theorem 2.11]{KKR} in the special case $X=\mathbb{R}^2$ as follows:
\begin{cor}
Let $\varphi$ be any real, $\bZ^2$-periodic and smooth function on $\mathbb{R}^2$, and we recall the notation $F_{\varphi}$ from Example \ref{ex_add}. Let $s$ be any unit vector in $\mathbb{R}^2$ and $y \in \mathbb{R}^2$. Then as $|t| \rightarrow \infty$ $(t \in \mathbb{R})$, the Green's function $G_{\lambda}$ of $L$ at $\lambda$ ($\approx 0$) has the following asymptotics 
\begin{equation*}
\label{main_flat}
\begin{split}
G_{\lambda}(F_{\varphi}^{-1}(ts+F_{\varphi}(y)),y)&=\frac{e^{ts\cdot(ik_{0}-\beta_{s})}}{(2\pi |\nabla E(\beta_s)|\cdot \det{(-\mathcal{P}_s \Hess{(E)}(\beta_{s})\mathcal{P}_s)\cdot |t|})^{1/2}}\\& \times \frac{\phi_{k_{0}+i\beta_{s}}(F_{\varphi}^{-1}(ts+F_{\varphi}(y)))\overline{\phi_{k_{0}-i\beta_{s}}(y)}}{(\phi_{k_{0}+i\beta_{s}},\phi_{k_{0}-i\beta_{s}})_{L^{2}(\mathbb{T}^2)}}
+e^{ts\cdot \beta_{s}}O(|t|^{-1}).
\end{split}
\end{equation*}
\end{cor}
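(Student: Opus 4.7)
The plan is to specialize Theorem \ref{main} to the flat setting $X = \mathbb{R}^2$, $M = \mathbb{T}^2$, $G = \bZ^2$, taking the additive function $h$ of Definition \ref{Add_func} to be $F_\varphi$ from Example \ref{ex_add}, and then to evaluate (\ref{main_asymp}) at the point $x = F_\varphi^{-1}(ts + F_\varphi(y))$.

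The first step is to verify that $F_\varphi$ is a legitimate substitute for the canonical additive function in Theorem \ref{main}. Example \ref{ex_add} ensures that $F_\varphi$ is smooth and satisfies (\ref{additivity}); since $d=2$, it is moreover a global diffeomorphism of $\mathbb{R}^2$ with $|F_\varphi(x) - F_\varphi(y)| \geq C_\varphi |x-y|$, and the reverse Lipschitz bound follows from smoothness and $\bZ^2$-periodicity of $DF_\varphi$. Hence $F_\varphi$ satisfies the bi-Lipschitz comparison of Proposition \ref{P:add_func} and has $\mathcal{A}_{F_\varphi} = \mathbb{S}^1$. A direct computation shows that the combination
$$e^{(h(x)-h(y))(ik_0-\beta_s)}\phi_{k_0+i\beta_s}(x)\overline{\phi_{k_0-i\beta_s}(y)}\big/(\phi_{k_0+i\beta_s},\phi_{k_0-i\beta_s})_{L^2(M)}$$
is invariant under the replacement of $h$ by another additive function $h'$: the conjugation $L(k) \mapsto e^{iz(h-h')} L(k) e^{-iz(h-h')}$ twists the eigenfunctions by exponential factors that cancel exactly against the change of the prefactor, while $(h-h')$ descends to $M$ so the normalizing inner product is unchanged. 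The amplitude $|h(x)-h(y)|^{-(d-1)/2}$ differs from $|h'(x)-h'(y)|^{-(d-1)/2}$ only by a factor $1+O(|x-y|^{-1})$, absorbable into the remainder. Thus Theorem \ref{main} applies with $h = F_\varphi$.

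Next, substitute $x = F_\varphi^{-1}(ts + F_\varphi(y))$ into (\ref{main_asymp}): one has $h(x) - h(y) = F_\varphi(x) - F_\varphi(y) = ts$, hence $|h(x)-h(y)| = |t|$, and the bi-Lipschitz property of $F_\varphi$ makes $d_X(x,y)$ comparable to $|t|$, so $|t| \to \infty$ is equivalent to the hypothesis $d_X(x,y) \to \infty$ of Theorem \ref{main}. For $t > 0$, the unit direction is $s$, and specializing (\ref{main_asymp}) to $d=2$ (so $(d-1)/2 = 1/2$, $(d-3)/2 = -1/2$, $d/2 = 1$) and collecting constants under a single square root yields the asserted leading term; the remainder $e^{(h(y)-h(x))\cdot \beta_s}r(x,y)$ becomes $e^{-ts\cdot \beta_s} O(|t|^{-1})$. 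For $t < 0$ the admissible direction is $-s$; the symmetry $\beta_{-s} = -\beta_s$, which follows from Lemma \ref{func_E} together with $-k_0 \equiv k_0 \pmod{2\pi\bZ^2}$ and the reality of $L$ (so that $E(-\beta) = E(\beta)$), then converts the conclusion of Theorem \ref{main} into the form displayed in the corollary.

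The main obstacle I anticipate is the first step --- justifying that Theorem \ref{main}, whose statement fixes the canonical additive function of Definition \ref{Add_func}, remains valid when that function is replaced by the geometrically natural $F_\varphi$. The cleanest resolution is to verify directly, as sketched above, that both the leading-order factorization and the structure of the remainder in (\ref{main_asymp}) depend on the additive function only through its additivity relation and its bi-Lipschitz comparison with $d_X$ --- properties which $F_\varphi$ enjoys. Once this invariance of (\ref{main_asymp}) under the choice of additive function within the bi-Lipschitz class is granted, the corollary reduces to routine substitution and a dimension count.
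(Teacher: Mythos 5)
Your proposal is correct and is essentially the argument the paper intends: the corollary is presented as a direct specialization of Theorem \ref{main} to $X=\mathbb{R}^2$, $M=\mathbb{T}^2$ with the additive function $h=F_{\varphi}$, and your verification that $F_{\varphi}$ qualifies (it satisfies \mref{additivity}, and Proposition \ref{P:add_func} then supplies the comparison with $d_X$ for \emph{any} such $h$, so the invariance discussion is a harmless extra) together with the substitution $h(x)-h(y)=ts$ and the dimension count $d=2$ is all that is required. One remark: the remainder you obtain, $e^{-ts\cdot\beta_s}O(|t|^{-1})$, is the correct one --- it is subordinate to the leading term, which also carries the factor $e^{-ts\cdot\beta_s}$, and since $s\cdot\beta_s>0$ by concavity of $E$ the printed factor $e^{ts\cdot\beta_s}$ would swamp the main term --- so the sign in the displayed statement should be read as a typo rather than as evidence against your computation.
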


We now switch to the case when $\lambda$ is on the boundary of the spectrum. Recall that we assume the spectral edge $\lambda$ is zero. The following result is a  generalization of \cite[Theorem 2]{KR}.
\begin{thm}
\label{main_KR} 
(\textit{Spectral edge case})

Assume that $d \geq 3$ and the operator $L$ satisfies the assumptions \textbf{A1-A4}. For a small  $\varepsilon>0$, we denote by $R_{-\varepsilon}=(L+\varepsilon)^{-1}$ the resolvent of $L$ near the spectral edge $\lambda=0$
(which exists, due to Assumption A). Then:
\begin{enumerate}[i)]
\item For any $\phi, \varphi \in L^2_{comp}(X)$, as $\varepsilon \rightarrow 0$, we have:
$$\langle R_{-\varepsilon}\phi, \varphi \rangle \rightarrow \langle R\phi, \varphi \rangle.$$
for an operator $R: L^2_{comp}(X) \rightarrow L^2_{loc}(X)$.
\\

\item The Schwartz kernel $G(x,y)$ of the operator $R$, which we call  \textbf{the Green's function of $L$} (at the spectral edge 0), has the following asymptotics when $d_X(x,y) \rightarrow \infty$:

\begin{equation}
\label{main_asymp_KR}
\begin{split}
G(x,y)&=\frac{\Gamma(\frac{d-2}{2})e^{i(h(x)-h(y))\cdot k_0}}{2\pi^{d/2}\sqrt{\det H}|H^{-1/2}(h(x)-h(y))|^{d-2}}\cdot \frac{\phi_{k_0}(x)\overline{\phi_{k_0}(y)}}{\|\phi_{k_0}\|_{L^2(M)}^2}\\& \times \left(1+O\left(d_X(x,y)^{-1}\right)\right)+O\left(d_{X}(x,y)^{1-d}\right),
\end{split}
\end{equation}
where $H$ is the Hessian matrix of $\lambda_j$ at $k_0$.
\end{enumerate}
\end{thm}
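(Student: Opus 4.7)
The plan is to follow the template of \cite{KR}, adapting each step to the abelian covering via the additive function $h$ and Proposition \ref{P:add_func}. The main tool is the Floquet--Bloch decomposition, which reduces the problem to a family of scalar integrals over the Brillouin zone.

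By the theory recalled in Section 4, the Schwartz kernel of $R_{-\varepsilon}=(L+\varepsilon)^{-1}$ admits an expression
$$G_{-\varepsilon}(x,y)=\frac{1}{(2\pi)^d}\int_{[-\pi,\pi]^d} e^{ik\cdot(h(x)-h(y))}\,K_k^\varepsilon(x,y)\,dk,$$
where $K_k^\varepsilon$ is the Schwartz kernel on $M\times M$ of $(L(k)+\varepsilon)^{-1}$, pulled back to $X\times X$. Introduce a smooth cutoff $\chi$ supported in a small neighborhood of $k_0$ (within the set where Proposition \ref{analytic_perturbation} applies) and split $G_{-\varepsilon}=I^{\rm sing}_\varepsilon+I^{\rm reg}_\varepsilon$. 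On the support of $1-\chi$, Assumption \textbf{A2} combined with \textbf{(P1)--(P2)} ensures $(L(k)+\varepsilon)^{-1}$ is uniformly bounded from $L^2(M)$ to $H^2(M)$ as $\varepsilon\to 0$, and depends analytically on $k$. Repeated integration by parts in $k$, using the directional derivatives corresponding to the components of $h(x)-h(y)$, produces arbitrarily fast decay in $|h(x)-h(y)|$; by Proposition \ref{P:add_func} this gives $I^{\rm reg}_\varepsilon(x,y)=O(d_X(x,y)^{1-d})$, which is absorbed in the error.

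For $I^{\rm sing}_\varepsilon$, apply the Riesz decomposition from Proposition \ref{analytic_perturbation} to write
$$(L(k)+\varepsilon)^{-1}=\frac{P(k)}{\lambda_j(k)+\varepsilon}+R_{\rm smooth}(k,\varepsilon),\qquad P(k)=\frac{\phi_k(\cdot)\,\overline{\phi_k(\cdot)}}{(\phi_k,\phi_{\overline{k}})_{L^2(M)}},$$
where $R_{\rm smooth}$ is analytic in $(k,\varepsilon)$ in a complex neighborhood. The $R_{\rm smooth}$ piece is jointly smooth in $k$ and gives an $O(d_X(x,y)^{-N})$ remainder by the same integration-by-parts scheme. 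The dominant contribution is
$$\frac{e^{ik_0\cdot \eta}}{(2\pi)^d}\int_{\mathbb{R}^d}\chi(k_0+\xi)\,\frac{e^{i\xi\cdot \eta}\,\phi_{k_0+\xi}(x)\,\overline{\phi_{k_0+\xi}(y)}}{(\lambda_j(k_0+\xi)+\varepsilon)\,(\phi_{k_0+\xi},\phi_{k_0-\xi})_{L^2(M)}}\,d\xi,\quad \eta:=h(x)-h(y).$$

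Freeze $\phi_{k_0+\xi}$ and the normalizing inner product at $\xi=0$ (using \textbf{(P3)}, \textbf{(P5)}) and Taylor expand $\lambda_j(k_0+\xi)=\tfrac12 \xi^T H\xi+O(|\xi|^3)$. The frozen leading integral, after the substitution $\xi=H^{-1/2}\zeta$ and the classical evaluation of the Fourier transform of $2/|\zeta|^2$ on $\mathbb{R}^d$ (valid for $d\ge 3$), produces
$$\frac{\Gamma(\tfrac{d-2}{2})\,e^{ik_0\cdot \eta}}{2\pi^{d/2}\sqrt{\det H}\,|H^{-1/2}\eta|^{d-2}}\cdot\frac{\phi_{k_0}(x)\overline{\phi_{k_0}(y)}}{\|\phi_{k_0}\|^2_{L^2(M)}},$$
in the limit $\varepsilon\to 0$, which is the main term once $|\eta|$ is converted to $d_X(x,y)$ via Proposition \ref{P:add_func}. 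The two frozen-numerator errors (replacing $\phi_{k_0+\xi}$ by $\phi_{k_0}$, and replacing $\lambda_j$ by its quadratic form) contribute, after a standard scalar computation of Fourier transforms of $(\text{smooth})/|\xi|^2$ with numerator vanishing at $\xi=0$, a relative factor $O(|\eta|^{-1})=O(d_X(x,y)^{-1})$. The remaining piece $(1-\chi)/(\lambda_j)$ of the extended integrand contributes $O(d_X(x,y)^{1-d})$ by the same integration-by-parts argument on $\mathbb{R}^d$. Assembling these gives the claimed asymptotics in part (ii); part (i) follows from dominated convergence once the pointwise bound on $G(x,y)$ is in hand.

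The main obstacle is that the phase $e^{ik\cdot h(x)}$ is \emph{not} a character of the position variable on $X$ in the way $e^{ik\cdot x}$ is on $\mathbb{R}^d$, so the integration-by-parts manipulations in $k$ must be performed entirely on the Brillouin zone side, and the resulting $|h(x)-h(y)|^{-N}$ bounds converted to $d_X$-bounds via Proposition \ref{P:add_func}. A second difficulty is obtaining the error bounds uniformly in the direction of $\eta$, i.e.\ over all admissible directions $s\in\mathcal{A}_h$: this requires the scalar-integral lemmata collected in Section 5 (taken from \cite{KR}) to be stated with constants independent of direction, which is where the hypothesis that $H$ is positive-definite (\textbf{A4}) and Proposition \ref{P:add_func} play a decisive role.
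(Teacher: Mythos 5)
Your proposal follows essentially the same route as the paper's proof: Floquet reduction, a cutoff near $k_0$ together with the Riesz (rank-one) spectral decomposition of $(L(k)+\varepsilon)^{-1}$ isolating $P(k)/(\lambda_j(k)+\varepsilon)$, rapid decay of the complementary piece, Taylor expansion of the frozen numerator, and the scalar Fourier-integral asymptotics (which the paper simply cites as Proposition \ref{P:asymp_KR} rather than recomputing via the Riesz kernel). The one caveat is that the pointwise kernel bounds needed to run your integration-by-parts argument on the regular part require more than uniform $L^2(M)\to H^2(M)$ boundedness of the resolvent --- one needs the $H^{-m}(M)\to H^{m}(M)$ gain for $k$-derivatives supplied by the parametrix machinery behind Theorem \ref{microlocal} --- but the decomposition and the leading-term computation are the same as in the paper.
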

Here the notation $\Gamma(z)$ means the Gamma function $\Gamma(z)=\displaystyle\int_{0}^{\infty}x^{z-1}e^{-x}d{x}$.
\begin{remark}
\begin{enumerate}[(a)]
\item
An interesting feature in the main results is that the dimension $n$ of the covering manifold $X$ does not explicitly enter into the  asymptotics \mref{main_asymp} and \mref{main_asymp_KR} (especially, see also \mref{main_rational}). Anyway, it certainly influences the geometry of the dispersion curves and therefore the asymptotics too. However, as the Riemannian distance between $x$ and $y$ becomes larger, one can see that in the asymptotics, the role of the dimension $n$ is rather limited, while the influence of the rank $d$ of the torsion-free subgroup of the deck group $G$ is stronger.

\item
Note that for a periodic elliptic operator of second order on $\mathbb{R}^d$, at the bottom of its spectrum, the operator is known to be critical when the dimension $d \leq 2$ (see \cite{LinPinchover, MT, Pinsky}). 
This also holds true for the Laplacian on a co-compact Riemannian covering (see \cite[Theorem 5.2.1]{Davies-heat}).
Therefore, the assumption $d \geq 3$ is needed in Theorem \ref{main_KR}.

\item The asymptotics \mref{main_asymp} and \mref{main_asymp_KR} can be described in terms of the Albanese map and the Albanese pseudo-distance on the abelian covering $X$ (see these definitions in \cite[Section 2]{KoSu}), provided that the additive function $h$ is chosen to be harmonic (see also \cite{Kha}).
\end{enumerate}
\end{remark}

Proving Theorem \ref{main_KR} by generalizing \cite[Theorem 2]{KR} is similar to establishing Theorem \ref{main} by generalizing \cite[Theorem 2.11]{KKR}. Thus, after finishing the proof of Theorem \ref{main}, we will sketch briefly the proof of Theorem \ref{main_KR} in Section 6.

We outline the general strategy of both the proofs of Theorem \ref{main} and Theorem \ref{main_KR}.
As in \cite{KKR, KR}, the idea is to show that only one branch of the dispersion relation $\lambda_j$ appearing in the Assumption A will control the asymptotics. 
\begin{itemize}
\item
\textbf{Step 1:} We use the Floquet transform to reduce the problems of finding asymptotics of Green's functions to the problems of obtaining asymptotics of some integral expressions with respect to the quasimomentum $k$. 
\item
\textbf{Step 2:} We localize these expressions around the quasimomentum $k_0$ and then we cut an ``infinite-dimensional" part of the operator to deal only with the multiplication operator by the dispersion branch $\lambda_j$. 
\item
\textbf{Step 3:} The dispersion curve around this part is almost a paraboloid according to the assumption \textbf{A4}, thus,
we can reduce this piece of operator to the normal form in
the free case. In this step, we obtain some scalar integral expressions which are close to the ones arising when dealing with the Green's function of the Laplacian operator at the level $\lambda$. Our remaining task is devoted to computing the asymptotics of these scalar integrals.
\end{itemize}
%%%%%%%%%%%%%%%%%%%%%%%%%%%%%%%%%%%%
%%%%%%%%%%%%%%%%%%%%%%%%%%%%%%%
%%%%%%%%%%%%%%%%%%%%%%%
\section{A Floquet-Bloch reduction of the problem}
\begin{notation}
First, we introduce the following fundamental domain $\mathcal{O}$ (with respect to the dual lattice $2\pi \bZ^d$ and the quasimomentum $k_0$ in Assumption A):
\beq
\label{fund-dom}
\mathcal{O}=k_0+[-\pi, \pi]^d.
\eeq
In another word, $\mathcal{O}$ is just a shifted version of the Brillouin zone so that the quasimomentum $k_0$ is its center of symmetry. 

If $k_0$ is a high symmetry point of the reduced Brillouin zone (i.e., $k_0$ satisfies Assumption \textbf{A5}), then $k_0=(\delta_1 \pi, \delta_2 \pi,..., \delta_d \pi)$, where $\delta_{j} \in \{0,1\}$ for $j \in \{1,...,d\}$. Hence, \mref{fund-dom} becomes:
%For localizing around $k_0$, we introduce the following fundamental domain (with respect to the dual lattice $2\pi \bZ^d$) 
$$\mathcal{O}=\prod_{j=1}^d [(\delta_{j}-1)\pi, (\delta_j+1)\pi].$$
\end{notation}
\subsection{The Floquet transforms on abelian coverings}
The following transform will play the role of the Fourier transform for the periodic case. Indeed, it is a version of the Fourier transform on the group $\mathbb{Z}^d$ of periods. 
\begin{defi}
\label{Floquet transforms}
The \textbf{Floquet transform} $\mathcal{F}$ (which depends on the choice of $h$)
\begin{equation*}
%\label{E:transform}
f(x) \rightarrow \widehat{f}(k,x)
\end{equation*}
maps a compactly supported function $f$ on $X$ into a function $\widehat{f}$ defined on $\mathbb{R}^d \times X$ in the following way:

\begin{equation*}
%\label{eqn:hat f(k)}
\widehat{f}(k,x):=\sum_{\gamma \in \mathbb{Z}^d}f(\gamma \cdot x)e^{-i h(\gamma \cdot x) \cdot k}.
\end{equation*}
\end{defi}
From the above definition, one can see that $\widehat{f}$ is $\mathbb{Z}^d$-periodic in the $x$-variable and satisfies a cyclic condition with respect to $k$:
\[   \left\{
\begin{array}
{ll}
\label{E:periodic}
      \widehat{f}(k,\gamma \cdot x)=\widehat{f}(k,x), \quad \forall \gamma \in \mathbb{Z}^d \\
      \widehat{f}(k+2\pi \gamma,x)=e^{-2\pi i \gamma \cdot h(x)}\widehat{f}(k,x), \quad \forall \gamma \in \mathbb{Z}^d \\
\end{array} 
.\right. \]

Thus, it suffices to consider the Floquet transform $\widehat{f}$ as a function defined on $\mathcal{O} \times M$. Usually, we will regard $\widehat{f}$ as a function $\widehat{f}(k,\cdot)$ in $k$-variable in $\mathcal{O}$ with values in the function space $L^{2}(M)$.

The next lemma lists some well-known results of the Floquet transform. Although the lemma is stated for abelian coverings, its proof does not require any change from the proof for the flat case. We omit the details since these can be found in \cite{K}, for instance.
\begin{lemma}
\label{L:floquet}
\begin{enumerate}[i)]
\item The transform $\mathcal{F}$ is an isometry of $L^{2}(X)$ onto 
\begin{equation*}
%\label{E:direct_integral1}
\int_{\mathcal{O}}^{\oplus}L^{2}(M)=L^{2}(\mathcal{O},L^{2}(M))
\end{equation*}
and of $H^{2}(X)$ onto 
\begin{equation*}
%\label{E:direct_integral2}
\int_{\mathcal{O}}^{\oplus}H^{2}(M)=L^{2}(\mathcal{O},H^{2}(M)).
\end{equation*}

\item The following two equivalent inversion formulae $\mathcal{F}^{-1}$ are given by
\begin{equation}
\label{E:inversion1}
f(x)=(2\pi)^{-d}\int_{\mathcal{O}}e^{ik \cdot h(x)}\widehat{f}(k,x)dk, \quad x \in X.
\end{equation}
and
\begin{equation}
\label{E:inversion2}
f(x)=(2\pi)^{-d}\int_{\mathcal{O}}e^{ik \cdot h(x)}\widehat{f}(k,\gamma^{-1} \cdot x)dk, \quad x \in \gamma \cdot \overline{F(M)}.
\end{equation}

\item The action of any periodic elliptic operator $P$ in $L^{2}(X)$ under the Floquet transform $\mathcal{F}$ is given by 
\begin{equation*}
%\label{E:conjugate_floquet}
\mathcal{F}P(x,D)\mathcal{F}^{-1}=\int_{\mathcal{O}}^{\oplus}P(k)dk,
\end{equation*}
where $P(k)(x,D)=e^{-ik\cdot h(x)}P(x,D)e^{ik\cdot h(x)}$.
In other words, 
\begin{equation*}
\widehat{Pf}(k)=P(k)\widehat{f}(k), \quad \forall f \in H^{2}(X).
\end{equation*}
\end{enumerate}
\end{lemma}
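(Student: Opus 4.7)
The plan is to exploit the identity
\[
\widehat{f}(k,x)=e^{-ik\cdot h(x)}\sum_{g\in\bZ^d}f(g\cdot x)\,e^{-ik\cdot g},
\]
which follows immediately from the additivity \mref{additivity} of $h$. Up to the unimodular factor $e^{-ik\cdot h(x)}$, the right-hand side is the ordinary Fourier series on the torus $\mathbb{R}^d/2\pi\bZ^d$, whose coefficients are $\{f(g\cdot x)\}_{g\in\bZ^d}$ indexed by a free orbit of $G$. Once this is in hand, every assertion of the lemma reduces to a classical fact about Fourier series, and only has to be recombined with the orbit decomposition $X=\bigsqcup_{g\in\bZ^d}g\cdot F(M)$ (modulo a null set).

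First I would verify the cocycle relations: $\widehat{f}(k,\gamma\cdot x)=\widehat{f}(k,x)$ follows by reindexing $g\mapsto g-\gamma$ in the sum and using additivity to absorb the shift, while $\widehat{f}(k+2\pi\gamma,x)=e^{-2\pi i\gamma\cdot h(x)}\widehat{f}(k,x)$ is immediate from $e^{-2\pi i\gamma\cdot g}=1$. For the $L^2$ isometry in (i), I would integrate $|\widehat{f}(k,x)|^2$ over $\mathcal{O}\times F(M)$, apply Parseval on $\mathcal{O}\cong(\mathbb{R}/2\pi\bZ)^d$ in the $k$-variable fiber-by-fiber (since $|e^{-ik\cdot h(x)}|=1$ on real $k$), and recognize $\sum_g|f(g\cdot x)|^2$ integrated over $F(M)$ as $\|f\|_{L^2(X)}^2$ via \mref{integralF(M)}. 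Density of $L^2_{comp}(X)$ in $L^2(X)$ extends $\mathcal{F}$ to a full isometry, and surjectivity follows either from completeness of the Fourier system or from checking that the inversion formula \mref{E:inversion2} recovers $f$.

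For (ii), the inversion \mref{E:inversion2} is now a one-line computation: for $x=\gamma\cdot x'$ with $x'\in\overline{F(M)}$, Fourier inversion on the torus applied to $\sum_g f(g\cdot x')e^{-ik\cdot g}=e^{ik\cdot h(x')}\widehat{f}(k,x')$ yields exactly $f(\gamma\cdot x')=(2\pi)^{-d}\int_{\mathcal{O}}e^{ik\cdot(h(x')+\gamma)}\widehat{f}(k,x')\,dk$, and $h(x')+\gamma=h(x)$; the formula \mref{E:inversion1} then follows from the already-established $\bZ^d$-invariance of $\widehat{f}$ in $x$. For (iii), I would compute directly:
\[
P(k)\widehat{f}(k,x)=e^{-ik\cdot h(x)}P\!\left(\sum_g f(g\cdot x)\,e^{-ik\cdot g}\right)=e^{-ik\cdot h(x)}\sum_g e^{-ik\cdot g}(Pf)(g\cdot x)=\widehat{Pf}(k,x),
\]
where the second equality uses $G$-periodicity of $P$ (so that $P$ commutes with the shift $f\mapsto f^g$) and the third uses additivity to fold $e^{-ik\cdot g}e^{-ik\cdot h(x)}$ into $e^{-ik\cdot h(g\cdot x)}$.

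The only genuinely non-automatic point is the $H^2$-isometry in (i). The cleanest route is to pick any self-adjoint, positive, $G$-periodic second-order elliptic operator on $X$ (for instance $-\Delta_X+1$ with $\Delta_X$ the Laplace-Beltrami operator lifted from $M$); by elliptic regularity on the manifold $X$ of bounded geometry, it induces an equivalent norm on $H^2(X)$, and by part (iii) it is transformed under $\mathcal{F}$ into the direct integral of the corresponding fiber operators on $M$, which in turn induce equivalent norms on $H^2(M)$ uniformly in $k\in\mathcal{O}$ (by continuous dependence of the symbol and compactness of $\mathcal{O}$). Combining with the $L^2$ isometry already proved, this yields the $H^2$ statement. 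All remaining details are identical to the flat case treated in \cite{K}.
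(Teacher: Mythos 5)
Your proof is correct, and it is exactly the standard argument (Fourier series along the free $\bZ^d$-orbits combined with the fundamental-domain decomposition and a periodic elliptic operator for the $H^2$ statement) that the paper itself does not write out but delegates to the flat-case treatment in \cite{K}, noting that no change is needed for abelian coverings. The only cosmetic caveat is that the ``isometry'' onto $L^2(\mathcal{O},H^2(M))$ is, as your argument correctly produces, an isomorphism with equivalent norms (and an exact isometry on the $L^2$ level only up to the $(2\pi)^{-d}$ normalization of $dk$), which is the standard convention.
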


\begin{remark}
\label{direct_integral}
The direct integral decomposition of $P$ in Lemma \ref{L:floquet} (iii) has an important consequence that the spectrum of any periodic elliptic operator $P$ on $X$ is the union of the spectra of operators $P(k)$ on $M$ over the fundamental domain $\mathcal{O}$.
\end{remark}

From now on, we will consider the Green's function $G_{\lambda}(x,y)$ at the level $\lambda$ in Case I (i.e., \textit{Spectral gap interior}) in the rest of this section. 
%%%%%%%%%%%%%%%%%%%%%%%%%%%
\subsection{A Floquet reduction of the problem}
\label{fl-reduction}
We begin with the following proposition, which says roughly that if one starts moving $k$ from some shifted copy of the Brillouin zone along the direction $i\beta_s$, then $k_0+i\beta_s$ is the \textbf{first} quasimomentum $k$ that \textbf{belongs to the Fermi surface $F_{L, \lambda}$}.

\begin{prop} 
\label{singularity}
If $|\lambda|$ is small enough (depending on the dispersion branch $\lambda_j$ and $L$), then for any $(t,s) \in [0,1] \times \mathbb{S}^{d-1}$, we have $\lambda \in \sigma(L(k+it\beta_s))$ if and only if $(k,t)=(k_0,1)$.
\end{prop}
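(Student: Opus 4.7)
The plan is to split the proof according to whether $k$ lies in the small ball $|k-k_0|\leq r_0$ from \mref{E:beta_s_in_V} or in its complement inside the Brillouin zone $\mathcal{O}$. The first region is the delicate one, where the analytic perturbation machinery of Proposition~\ref{analytic_perturbation} must be combined with the discrete symmetry afforded by assumption \textbf{A5}; the second is handled by a routine continuity/perturbation argument. The ``if'' direction is immediate from \textbf{(P3)}, since $\phi_{k_0+i\beta_s}$ is a nonzero eigenfunction of $L(k_0+i\beta_s)$ with eigenvalue $\lambda_j(k_0+i\beta_s)=E(\beta_s)=\lambda$.

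For the ``only if'' direction in Region 1 ($|k-k_0|\leq r_0$), \mref{E:beta_s_in_V} places $z:=k+it\beta_s$ inside the neighborhood of $\overline{V}$ where \textbf{(P1)}--\textbf{(P5)} hold. Choosing $|\lambda|<\epsilon_0$, property \textbf{(P2)} forces any spectral value of $L(z)$ inside $\overline{B}(0,\epsilon_0)$ to coincide with $\lambda_j(z)$, so I must show that $\lambda_j(k_0+r+it\beta_s)=\lambda$ (with $r=k-k_0\in\mathbb{R}^d$) admits only the solution $r=0,\ t=1$. Setting $f_s(r,t):=\Re\lambda_j(k_0+r+it\beta_s)$, property \textbf{(P4)} gives $\nabla_r^{2}f_s = \Re\Hess(\lambda_j)(z) > \tfrac{1}{2}\min\sigma(H)\cdot I$, so $f_s(\cdot,t)$ is strictly convex in $r$. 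On the other hand, since $L$ is real, conjugating $e^{-ik\cdot h}Le^{ik\cdot h}$ yields $\overline{L(k)}=L(-\overline{k})$, and \textbf{A5} gives $-k_0\equiv k_0\pmod{2\pi\mathbb{Z}^d}$, so by periodicity of $\lambda_j$ in $k$ one obtains the reflection identity
\[
\lambda_j(k_0-r+it\beta_s) \;=\; \overline{\lambda_j(k_0+r+it\beta_s)}.
\]
Taking real parts, $f_s$ is even in $r$; together with strict convexity this forces $f_s(r,t)\geq f_s(0,t)=E(t\beta_s)$ with equality iff $r=0$. If now $\lambda_j(z)=\lambda$, then $E(t\beta_s)\leq \lambda$; but $0\in\operatorname{int}(K_\lambda)$ (since $E(0)=0>\lambda$) and $\beta_s\in\partial K_\lambda$, and $K_\lambda$ is strictly convex by Lemma~\ref{func_E}, so the segment $\{t\beta_s:0\leq t\leq 1\}$ meets $\partial K_\lambda$ only at $t=1$, giving $E(t\beta_s)\geq \lambda$ with equality iff $t=1$. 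Hence $r=0$ and $t=1$.

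In Region 2 ($|k-k_0|>r_0$ inside $\mathcal{O}$), assumptions \textbf{A2} and \textbf{A3}, together with continuity of the band functions and compactness, yield a constant $c>0$ with $\sigma(L(k))\subset[c,\infty)$ uniformly in this region. Since $\max_{s\in\mathbb{S}^{d-1}}|\beta_s|\to 0$ as $\lambda\to 0$, the analyticity in $z$ of the family $L(z)$ provided by Lemma~\ref{fiber_op_structure} and standard Kato perturbation bounds give $\operatorname{dist}(\sigma(L(k+it\beta_s)),\sigma(L(k)))\leq Ct|\beta_s|$, uniformly for $k$ in this compact region and $t\in[0,1]$. Thus for $|\lambda|$ small enough, $\sigma(L(k+it\beta_s))$ stays inside $\{\mu:|\mu|\geq c/2\}$ while $|\lambda|<c/2$, so $\lambda\notin\sigma(L(k+it\beta_s))$.

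The main technical obstacle is Region 1, where one must couple the complex-analytic convexity coming from \textbf{(P4)} with the discrete reflection symmetry coming from \textbf{A5} to eliminate the ``real'' degree of freedom $r$; the strict concavity of $E$ then eliminates $t$. Once these two ingredients are in place the convexity argument is clean, and Region 2 is a soft perturbation estimate relying only on the smallness of $|\beta_s|$.
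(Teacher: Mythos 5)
Your proof is correct and takes essentially the same route as the paper, which simply defers to \cite[Proposition 4.1]{KKR} and names as its two main ingredients precisely your two pillars: upper semicontinuity of the spectra of the analytic family $\{L(z)\}$ (your Region 2), and the realness of $E$ together with the negative definiteness of its Hessian, which rests on the same reflection identity $\lambda_j(2k_0-\bar z)=\overline{\lambda_j(z)}$ (from realness of $L$ and \textbf{A5}) that you combine with \textbf{(P4)} in Region 1. Two cosmetic points only: in Region 2 the conclusion from \textbf{A2}--\textbf{A3} should read $\operatorname{dist}(0,\sigma(L(k)))\ge c$ rather than $\sigma(L(k))\subset[c,\infty)$ (there are bands below the gap), and for the unbounded non-self-adjoint perturbation only the one-sided, upper-semicontinuity form of your spectral-distance estimate is justified --- but that is the only direction your argument actually uses.
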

This statement is proven in \cite[Proposition 4.1]{KKR} for the flat case. The case of an abelian covering does not require any change in the proof. The main ingredients in the proof are the upper-semicontinuity of the
spectra of the analytic family $\{L(k)\}_{k \in \mathbb{C}^d}$ and the fact that $E$ is a \textbf{real} function, whose Hessian is negative definite (Lemma \ref{func_E}).

We consider the following real, smooth linear elliptic operators on $X$:
\begin{equation*}
\label{E:L_t_s}
L_{t,s}=e^{t\beta_s\cdot h(x)}Le^{-t\beta_s\cdot h(x)}, \quad (t,s) \in [0,1] \times \mathbb{S}^{d-1}.
\end{equation*}
Notice that these operators are $G$-periodic, and when pushing $L_{t,s}$ down to $M$, we get the operator $L(-i\beta_s)$. We also use the notation $L_s$ for $L_{1,s}$.

Due to Remark \ref{direct_integral}, we can apply the identity \mref{fl_spectrum} to the operator $L_{t,s}$ to obtain
\begin{equation}
\label{E:band_functions_L_t_s}
\sigma(L_{t,s})=\bigcup_{k \in \mathcal{O}} \sigma(L_{t,s}(k))=\bigcup_{k \in \mathcal{O}} \sigma(L(k+it\beta_s)) \supseteq \{\lambda_{j}(k+it\beta_s)\}_{k\in \mathcal{O}}.
\end{equation}

We now fix a real number $\lambda$ such that the statement of Proposition \ref{singularity} holds. 
By \mref{E:band_functions_L_t_s} and Proposition \ref{singularity}, $\lambda$ is in the resolvent set of $L_{t,s}$ for any $(t,s) \in [0,1) \times \mathbb{S}^{d-1}$.
Let $R_{t,s,\lambda}$ be the resolvent operator $(L_{t,s}-\lambda)^{-1}$.
Using Lemma \ref{L:floquet} (iii), for any $f \in L^{2}_{comp}(X)$, we have
\begin{equation*}
\widehat{R_{t,s,\lambda}f}(k)=(L_{t,s}(k)-\lambda)^{-1}\widehat{f}(k), \quad (t,k) \in [0,1) \times \mathcal{O}.
\end{equation*}
Due to Lemma \ref{L:floquet} (i), the sesquilinear form $(R_{t,s,\lambda}f,\varphi)$ is equal to
\begin{equation*}
(2\pi)^{-d}\int_{\mathcal{O}}\left( (L_{t,s}(k)-\lambda)^{-1}\widehat{f}(k), \widehat{\varphi}(k) \right)dk,
\end{equation*}
where $\varphi \in L^2_{comp}(X)$. 

In the next lemma, the weak convergence as $t \nearrow 1$ of the operator $R_{t,s,\lambda}$ in $L^{2}_{comp}(X)$ is proved and thus, we can introduce the limit operator $\displaystyle R_{s,\lambda}:=\lim_{t \rightarrow 1^{-}}R_{t,s,\lambda}$.

\begin{lemma}
\label{L:bilin}
Let $d\geq 2$. Under Assumption A, for $f,\varphi$ in $L^{2}_{comp}(X)$, the following equality holds:
\begin{equation}
\label{E:limit}
\lim_{t \rightarrow 1^{-}}(R_{t,s,\lambda}f,\varphi)=(2\pi)^{-d}\int_{\mathcal{O}}\left( L_{s}(k)-\lambda)^{-1}\widehat{f}(k),\widehat{\varphi}(k)\right)dk.
\end{equation}
The integral in the right hand side of \mref{E:limit} is absolutely convergent. 
\end{lemma}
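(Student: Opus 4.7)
The plan is to split the quasimomentum cube $\mathcal{O}$ into a small open ball $U$ centered at $k_0$ and its complement, and to treat the two regions separately. On $\mathcal{O}\setminus U$, Proposition \ref{singularity} combined with upper-semicontinuity of the spectrum of the analytic family $\{L(z)\}_{z\in\mathbb{C}^d}$ and compactness of $(\mathcal{O}\setminus U)\times[0,1]$ keeps $\lambda$ uniformly inside the resolvent set of $L_{t,s}(k)=L(k+it\beta_s)$. Hence $(L_{t,s}(k)-\lambda)^{-1}$ is uniformly norm-bounded in $\mathcal{B}(L^2(M),H^2(M))$ and converges in norm to $(L_s(k)-\lambda)^{-1}$ as $t\to 1^-$, uniformly in $k\in\mathcal{O}\setminus U$. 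Since $\widehat f,\widehat\varphi\in L^2(\mathcal{O},L^2(M))$, ordinary dominated convergence passes the limit inside the integral over this region.

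On $U$ the resolvent becomes singular in the limit, and I would use the spectral decomposition provided by Proposition \ref{analytic_perturbation}. Shrinking $V$ if needed, $\lambda_j(z)$ is a simple eigenvalue isolated from the rest of $\sigma(L(z))$ for $z$ near $k_0+i\beta_s$, which yields the Riesz-type decomposition
\[
(L(z)-\lambda)^{-1} = (\lambda_j(z)-\lambda)^{-1}P(z)+T(z),
\]
with $P(z)$ the rank-one spectral projection and $T(z)$ the complementary regular part, both analytic in $z$ at $k_0+i\beta_s$. Specialized to $z=k+it\beta_s$, the $T$-piece contributes a bounded integrand depending continuously on $(k,t)$, and its limit is handled immediately by dominated convergence. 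The whole problem thus reduces to the scalar singular factor $(\lambda_j(k+it\beta_s)-\lambda)^{-1}$.

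To control this factor I would Taylor-expand $\lambda_j$ at $k_0+i\beta_s$. Using $\lambda_j(k_0+i\beta_s)=E(\beta_s)=\lambda$ together with the fact that the real-valuedness of $E$ (Lemma \ref{func_E}) forces $\nabla\lambda_j(k_0+i\beta_s)=-i\nabla E(\beta_s)$, and writing $\xi=k-k_0$, $\tau=1-t$, the expansion becomes
\[
\lambda_j(k+it\beta_s)-\lambda = -\tau\,\nabla E(\beta_s)\cdot\beta_s - i\,\nabla E(\beta_s)\cdot\xi + \tfrac12(\xi-i\tau\beta_s)^T A(\xi-i\tau\beta_s) + O(|\xi|^3+\tau^3),
\]
where $A=\Hess(\lambda_j)(k_0+i\beta_s)$. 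Strict concavity of $E$ (Lemma \ref{func_E}) yields $\nabla E(\beta_s)\cdot\beta_s<0$, and property \textbf{(P4)} provides a lower bound on $\Re A$; taking real parts and absorbing the $\tau|\xi|$ cross term produces the uniform estimate $\Re(\lambda_j(k+it\beta_s)-\lambda)\geq c(\tau+|\xi|^2)$ on $U\times[1-\tau_0,1]$. Hence $|\lambda_j(k+it\beta_s)-\lambda|^{-1}\leq C|\xi|^{-2}$ uniformly, pointwise convergence as $t\to 1^-$ is clear for $\xi\neq 0$ (by Proposition \ref{singularity}), and for $d\geq 3$ the dominating function $|\xi|^{-2}$ is integrable on $U$, so dominated convergence finishes the argument.

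The main obstacle is the critical dimension $d=2$, where $|\xi|^{-2}$ fails to be locally integrable. There I would additionally exploit the imaginary part of the expansion, whose leading contribution is the nonzero linear form $-\nabla E(\beta_s)\cdot\xi$, to obtain the refined bound $|\lambda_j(k+it\beta_s)-\lambda|^{-1}\leq C(|\xi|^2+|\nabla E(\beta_s)\cdot\xi|)^{-1}$; polar coordinates aligned with $\nabla E(\beta_s)$ show that this function is integrable on $U\subset\mathbb{R}^2$, which yields both the dominated-convergence step and the asserted absolute convergence of the right-hand side of \mref{E:limit}.
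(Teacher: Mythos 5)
Your proposal is correct and follows essentially the same route as the proof the paper defers to in \cite{KKR}: isolate the rank-one singular part of the resolvent via the Riesz projection, bound the scalar denominator from below by $c\left(|k-k_0|^2+|\nabla E(\beta_s)\cdot(k-k_0)|\right)$ uniformly in $t$ via the Taylor expansion of $\lambda_j$ (using concavity of $E$ and \textbf{(P4)}), and conclude by dominated convergence. The refined bound you introduce for $d=2$ is precisely the estimate used there, and it in fact handles all $d\geq 2$ at once.
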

This lemma is a direct corollary of Lemma \ref{L:Bloch_variety}, Proposition \ref{singularity} and the Lebesgue Dominated Convergence Theorem as being shown in \cite{KKR}. We skip the proof.

For any $(t,s) \in [0,1) \times \mathbb{S}^{d-1}$, let $G_{t,s, \lambda}$ 
be the Green's function of $L_{t,s}$ at $\lambda$, which is the kernel of $R_{t,s,\lambda}$.
Thus, $$G_{t,s,\lambda}(x,y)=e^{t\beta_s \cdot (h(x)-h(y))}G_{\lambda}(x,y).$$
Taking the limit and applying Lemma \ref{L:bilin}, we conclude that the function 
$$G_{s, \lambda}(x,y):=e^{\beta_s \cdot (h(y)-h(x))}G_{\lambda}(x,y)$$ 
is the integral kernel of the operator $R_{s,\lambda}$ defined as follows:
\begin{equation}
\label{E:R_s}
\widehat{R_{s,\lambda}f}(k)=(L_{s}(k)-\lambda)^{-1}\widehat{f}(k).
\end{equation}
Hence, the problem of finding asymptotics of $G_{\lambda}$ is now equivalent 
to obtaining asymptotics of any function $G_{s,\lambda}$, 
where $s$ is an admissible direction in $\mathcal{A}_h$. 

In addition, by \mref{E:inversion1} and  \mref{E:R_s}, the function $G_{s,\lambda}$, which is also the Green's function of the operator $L_{s}$ at $\lambda$, is the integral kernel of the operator $R_{s,\lambda}$ that acts on $L^{2}_{comp}(X)$ in the following way:
\begin{equation}
\label{E:R_s_exp}
R_{s,\lambda}f(x)=(2\pi)^{-d}\int_{\mathcal{O}}e^{ik\cdot h(x)}(L_{s}(k)-\lambda)^{-1}\widehat{f}(k,x)dk, \quad x \in X.
\end{equation}
This accomplishes Step 1 in our strategy of the proof.
%%%%%%%%%%%%%%%%%%%%%%%%%%%%%%%%
%%%%%%%%%%%%%%%%%%%%%%%%%%%%%%%%
\subsection{Isolating the leading term in $R_{s,\lambda}$ and a reduced Green's function}
\label{isolate-green}
The purpose of this part is to complete Step 2, i.e., to localize the part of the integral in \mref{E:R_s_exp}, that is responsible for the leading term of the Green's function asymptotics. 

\begin{defi}
\label{spectral_proj}
For any $z \in V$, we denote by $P(z)$ the spectral projector $\chi_{B(0, \varepsilon_0)}(L(z))$, i.e., 
\begin{equation*}
\label{E:Riesz_projector}
P(z)=-\frac{1}{2\pi i}\oint_{|\alpha|=\epsilon_0}(L(z)-\alpha)^{-1}d\alpha.
\end{equation*}
By \textbf{(P2)}, $P(z)$ projects $L^{2}(M)$ onto the eigenspace spanned by $\phi_z$. We also put $Q(z):=I-P(z)$ and denote by $R(P(z))$, $R(Q(z))$ the ranges of the projectors $P(z)$, $Q(z)$ correspondingly.
\end{defi}

Using \textbf{(P6)} and the fact that $P(k+i\beta_s)^*=P(k-i\beta_s)$, 
we can deduce that if $|k-k_0| \leq r_0$ (see \mref{E:beta_s_in_V}), the following equality holds
\begin{equation}
\label{E:form_P_s}
P(k+i\beta_s)u=\frac{(u, \phi_{k-i\beta_s})_{L^{2}(M)}}{(\phi_{k+i\beta_s}, \phi_{k-i\beta_s})_{L^{2}(M)}}\phi_{k+i\beta_s}, 
\quad \forall u \in L^{2}(M).
\end{equation}

Let $\eta$ be a cut-off smooth function on $\mathcal{O}$ supported on $\{k \in \mathcal{O} \mid |k-k_0|<r_0\}$ and equal to $1$ around $k_0$.  

According to \mref{E:R_s_exp}, for any $f \in C^{\infty}_{c}(X)$, we want to find $u$ such that 
$$(L_s(k)-\lambda)\widehat{u}(k)=\widehat{f}(k).$$
Then the Green's function $G_{s, \lambda}$ satisfies
$$\int_{X}G_{s, \lambda}(x,y)f(y)d\mu_{X}(y)=\mathcal{F}^{-1}\widehat{u}(k,x)=u(x),$$
where $\mathcal{F}$ is the Floquet transform introduced in Definition \ref{L:floquet}.

By Proposition \ref{singularity}, the operator $L_s(k)-\lambda$ is invertible for any $k$ such that $k \neq k_0$. Hence, we can decompose $\widehat{u}(k)=\widehat{u_0}(k)+(L_s(k)-\lambda)^{-1}(1-\eta(k))\widehat{f}(k)$, where $\widehat{u_0}$ satisfies the equation
$$(L_s(k)-\lambda)\widehat{u_0}(k)=\eta(k)\widehat{f}(k).$$

Observe that $R(P(z))$ and $R(Q(z))$ are invariant subspaces for the operator $L(z)$ for any $z \in V$. Thus, if $u_1, u_2$ are functions such that $\widehat{u_1}(k)=P(k+i\beta_s)\widehat{u_0}(k)$ and $\widehat{u_2}(k)=Q(k+i\beta_s)\widehat{u_0}(k)$, we must have 
\begin{equation}
\label{u1}
(L_s(k)-\lambda)P(k+i\beta_s)\widehat{u_1}(k)=\eta(k)P(k+i\beta_s)\widehat{f}(k)
\end{equation}
and
$$(L_s(k)-\lambda)Q(k+i\beta_s)\widehat{u_2}(k)=\eta(k)Q(k+i\beta_s)\widehat{f}(k).$$

Due to \textbf{(P2)}, when $k$ is close to $k_0$, $\lambda=\lambda_{j}(k_0+i\beta_s)$ must belong to the resolvent of the operator $L_{s}(k)|_{R(Q(k+i\beta_s))}$.
Hence, we can write $\widehat{u_2}(k)=\eta(k)(L_s(k)-\lambda)^{-1}Q(k+i\beta_s)\widehat{f}(k)$. 
Therefore, $\widehat{u}(k)$ equals
$$\widehat{u_1}(k)+\left((1-\eta(k))(L_s(k)-\lambda)^{-1}+\eta(k)((L_s(k)-\lambda)|_{R(Q(k+i\beta_s))})^{-1}Q(k+i\beta_s)\right)\widehat{f}(k).$$

The next theorem shows that for finding the asymptotics, we can ignore the infinite-dimensional part of the operator $R_{s, \lambda}$, i.e., the second term in the above sum of two operators.
\begin{thm}
\label{microlocal}
Define
$$T_{s}(k):=(1-\eta(k))(L_s(k)-\lambda)^{-1}+\eta(k)((L_s(k)-\lambda)|_{R(Q(k+i\beta_s))})^{-1}Q(k+i\beta_s).$$

Let $T_{s}$ be the operator acting on $L^2(X)$ as follows:
$$T_s=\mathcal{F}^{-1}\left((2\pi)^{-d}\int^{\oplus}_{\mathcal{O}}T_{s}(k)dk\right)\mathcal{F}.$$

Then the Schwartz kernel $K_{s}(x,y)$ of the operator $T_{s}$ is continuous away from the diagonal of $X$, and moreover, it is also rapidly decaying in a uniform way with respect to $s \in \mathbb{S}^{d-1}$, 
i.e., for any $N>0$,
$$\sup_{s \in \mathbb{S}^{d-1}}|K_{s}(x,y)|=O(d_X(x,y)^{-N}).$$ 
\end{thm}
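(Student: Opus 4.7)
The plan is to express the kernel $K_s(x,y)$ as an oscillatory integral over the Brillouin zone $\mathcal{O}$ in the quasimomentum $k$, and then integrate by parts in $k$ arbitrarily many times to extract rapid decay in $v := h(x)-h(y)$; by Proposition \ref{P:add_func}, decay in $|v|$ is equivalent to decay in $d_X(x,y)$.

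First, I would combine Lemma \ref{L:floquet}~(iii) with the inversion formula \mref{E:inversion2} to write
$$K_s(x,y) = (2\pi)^{-d}\int_{\mathcal{O}} e^{ik\cdot (h(x)-h(y))}\,\tau_s(k; x,y)\,dk,$$
where $\tau_s(k;\cdot,\cdot)$ is the Schwartz kernel of $T_s(k)$ on $M\times M$, lifted to $X\times X$ via the $G$-action. For $(x,y)$ off the diagonal of $X\times X$, the operator $T_s$ (a $G$-periodic pseudodifferential operator of order $-2$) has smooth kernel, so $\tau_s(k;x,y)$ is a genuine smooth function there. A crucial observation is that although $e^{ik\cdot v}$ alone is not $2\pi\mathbb{Z}^d$-periodic in $k$, the full integrand \emph{is}, because of the quasi-periodicity $T_s(k+2\pi\gamma)=U_\gamma^{-1}T_s(k)U_\gamma$, where $U_\gamma$ denotes multiplication by $e^{2\pi i\gamma\cdot h(\cdot)}$ --- a well-defined function on $M$ by the additivity \mref{additivity}. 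This periodicity kills the boundary terms in all subsequent integrations by parts over $\mathcal{O}$.

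Second, I would establish smoothness of $k\mapsto T_s(k)$ with uniform control in $s\in\mathbb{S}^{d-1}$. Since $\{L(z)\}_{z\in\mathbb{C}^d}$ is an analytic family (Lemma \ref{fiber_op_structure} and Proposition \ref{analytic_perturbation}), Proposition \ref{singularity} guarantees that $L_s(k)-\lambda$ is invertible on $\mathrm{supp}(1-\eta)$ with resolvent analytic and uniformly bounded, using compactness of $\overline{\mathcal{O}}$ and continuity of $s\mapsto\beta_s$. On $\mathrm{supp}(\eta)$, property \textbf{(P2)} combined with the Riesz projector of Definition \ref{spectral_proj} identifies the restriction of $L_s(k)-\lambda$ to $R(Q(k+i\beta_s))$ with an analytic and uniformly bounded inverse. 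Elliptic regularity on $M$ yields uniform bounds on $\|\partial_k^{\alpha}T_s(k)\|_{L^2(M)\to H^N(M)}$ for every multi-index $\alpha$ and every $N$; Sobolev embedding then gives pointwise control of $\partial_k^{\alpha}\tau_s(k;x,y)$ off the diagonal.

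Finally, iterating the identity $e^{ik\cdot v}=(i|v|^2)^{-1}(v\cdot\nabla_k)e^{ik\cdot v}$ and integrating by parts $N$ times (boundary terms vanishing by the periodicity above) gives
$$K_s(x,y) = \frac{(-1)^N}{(i|v|^2)^N}\int_{\mathcal{O}} e^{ik\cdot v}\,(v\cdot\nabla_k)^N\tau_s(k;x,y)\,dk,$$
and the uniform bound $|(v\cdot\nabla_k)^N\tau_s|\leq C_N|v|^N$ yields $|K_s(x,y)|\leq C'_N|v|^{-N}=O(d_X(x,y)^{-N})$, uniformly in $s$. The principal obstacle I expect is the uniformity in $s$ in the second step, together with the passage from operator-norm bounds on $T_s(k)$ to pointwise bounds on its kernel off the diagonal; handling this cleanly will require viewing $T_s(k)$ as a family of classical pseudodifferential operators on $M$ of order $-2$ depending analytically on $(k,s)$, and exploiting the compactness of $\mathbb{S}^{d-1}$ together with the continuity of $s\mapsto\beta_s$.
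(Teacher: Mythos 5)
Your overall skeleton is the right one and matches the paper's: write the kernel as an oscillatory integral over $\mathcal{O}$ with integrand the Schwartz kernel of $T_s(k)$ on $M\times M$, observe that the quasi-periodicity $T_s(k+2\pi\gamma)=e^{-2\pi i\gamma\cdot h}T_s(k)e^{2\pi i\gamma\cdot h}$ kills all boundary terms, integrate by parts $N$ times in $k$, and convert $|h(x)-h(y)|^{-N}$ into $d_X(x,y)^{-N}$ via Proposition \ref{P:add_func}. The gap is in your second step, and it is the heart of the matter. The operator $T_s(k)$ is (up to a smoothing correction) a pseudodifferential operator of order $-2$ on the $n$-manifold $M$; it maps $L^2(M)$ into $H^2(M)$ and no further, so the claimed uniform bounds on $\|\partial_k^\alpha T_s(k)\|_{L^2(M)\to H^N(M)}$ ``for every $\alpha$ and every $N$'' are false for $\alpha=0$, and its Schwartz kernel $K_s(k,\cdot,\cdot)$ is \emph{singular on the diagonal of $M\times M$}. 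Crucially, smoothness of the kernel of $T_s$ off the diagonal of $X\times X$ does not help you here: the integrand of your oscillatory integral is evaluated at $(\pi(x),\pi(y))\in M\times M$, and the pairs $(x,y)=(x,g\cdot x)$ with $|g|$ large --- exactly the pairs for which you must prove decay --- project onto the diagonal of $M\times M$, where $K_s(k,\cdot,\cdot)$ blows up. So as written your integration by parts manipulates an integrand that is not even finite at the points of interest.

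The mechanism that repairs this, and which the paper spends most of Section 7 establishing, is that \emph{each $k$-derivative lowers the pseudodifferential order of $T_s(k)$ by one}: $D_k^\alpha T_s(k)$ has order $-2-|\alpha|$ on the Sobolev scale, uniformly in $(s,k)\in\mathbb{S}^{d-1}\times\mathcal{O}$ (this is the content of the classes $\mathcal{S}_\ell(M)$ and of Proposition \ref{P:pseudo_ops}). Once $|\alpha|\geq n$ one gets $D_k^\alpha T_s(k):H^{-m}(M)\to H^m(M)$ boundedly for some $m>n/2$, and then Agmon's kernel estimate (Lemma \ref{L:kernel_estimate_agmon}, Corollary \ref{kernel_in_S}) shows that $D_k^\alpha K_s(k,x,y)$ is continuous and uniformly bounded on \emph{all} of $M\times M$, diagonal included; only then is the integrated-by-parts identity legitimate and the bound $|K_s(x,y)|\leq C|h(x)-h(y)|^{-N}$ available. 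Proving the order-improvement uniformly in $s$ is not a soft consequence of analytic perturbation theory: it requires constructing a family of parametrices for $L_s(k)-\lambda$ with parameter-dependent toroidal symbol classes, gluing them over a coordinate cover of $M$, and checking that the projector and resolvent pieces of $T_s(k)$ respect these classes (Theorems \ref{parametrices_euclidean} and \ref{parametrices_manifold}). This entire layer is missing from your proposal, and without it the key uniform estimate on $\partial_k^\alpha\tau_s(k;x,y)$ at the diagonal of $M\times M$ is unjustified.
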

A proof using microlocal analysis will be mentioned in Section 7.

Now let $V_{s}:=R_{s, \lambda}-T_{s}$. Then
the Schwartz kernel $G_0(x,y)$ of the operator $V_{s}$ satisfies the following relation:
\begin{equation}
\label{G_0}
\int_X G_0(x,y)f(y)d\mu_{X}(y)=\mathcal{F}^{-1}\widehat{u_1}(k,x)=u_1(x).
\end{equation}
In what follows, we will find an integral representation of the kernel $G_0$. We will see that $G_0$ provides the leading term of the asymptotics of the kernel $G_{s, \lambda}$. For this reason, $G_0$ is called \textbf{a reduced Green's function}.

To find $u_1$, we use the equation \mref{u1} and apply \mref{E:form_P_s} to deduce
\begin{equation*}
(\lambda_{j}(k+i\beta_s)-\lambda)(\widehat{u_1}(k),\phi_{k-i\beta_s})_{L^{2}(M)}=\eta(k)(\widehat{f}(k),\phi_{k-i\beta_s})_{L^{2}(M)}.
\end{equation*}
Using $\widehat{u_1}(k)=P(k+i\beta_s)\widehat{u_1}(k)$ and \mref{E:form_P_s} again, the above identity becomes
\begin{equation*}
\label{E:def_u1}
\widehat{u_1}(k,x):=\frac{\eta(k)\phi_{k+i\beta_s}(x)(\widehat{f}(k),\phi_{k-i\beta_s})_{L^{2}(M)}}{(\phi_{k+i\beta_s},\phi_{k-i\beta_s})_{L^2(M)}(\lambda_j(k+i\beta_s)-\lambda)}, \quad k \neq k_0.
\end{equation*}
By the inverse Floquet transform \mref{E:inversion1}, for any $x \in X$,
\begin{equation*}
u_{1}(x)=(2\pi)^{-d}\int_{\mathcal{O}}e^{ik\cdot h(x)}\frac{\eta(k)\phi_{k+i\beta_s}(x)(\widehat{f}(k),\phi_{k-i\beta_s})_{L^{2}(M)}}{(\phi_{k+i\beta_s},\phi_{k-i\beta_s})_{L^2(M)}(\lambda_j(k+i\beta_s)-\lambda)}dk.
\end{equation*}

Now we repeat some calculations in \cite{KKR, KR} to have
\begin{equation*}
\begin{split}
u_1(x)&=\frac{1}{(2\pi)^{d}}\int_{\mathcal{O}}\int_{M}\frac{e^{ik\cdot h(x)}\eta(k)\widehat{f}(k,y)\overline{\phi_{k-i\beta_s}(y)}\phi_{k+i\beta_s}(x)}{(\phi_{k+i\beta_s},\phi_{k-i\beta_s})_{L^2(M)}(\lambda_j(k+i\beta_s)-\lambda)}d\mu_{M}(y)dk\\
&=\frac{1}{(2\pi)^{d}}\int_{\mathcal{O}}\int_{\overline{F(M)}}\sum_{\gamma \in G}\frac{e^{ik\cdot (h(x)-h(\gamma^{-1}\cdot y))}\eta(k)\overline{\phi_{k-i\beta_s}(y)}\phi_{k+i\beta_s}(x)}{(\phi_{k+i\beta_s},\phi_{k-i\beta_s})_{L^2(M)}(\lambda_j(k+i\beta_s)-\lambda)}d\mu_{X}(y)dk\\
&=\frac{1}{(2\pi)^{d}}\int_{\mathcal{O}}\sum_{\gamma \in G}\int_{\gamma \cdot \overline{F(M)}}f(y)\frac{e^{ik\cdot (h(x)-h(y))}\eta(k)\overline{\phi_{k-i\beta_s}(\gamma^{-1} \cdot y)}\phi_{k+i\beta_s}(x)}{(\phi_{k+i\beta_s},\phi_{k-i\beta_s})_{L^2(M)}(\lambda_j(k+i\beta_s)-\lambda)}d\mu_{X}(y)dk\\
&=\frac{1}{(2\pi)^{d}}\int_{X}f(y)\left(\int_{\mathcal{O}}\frac{e^{ik\cdot (h(x)-h(y))}\eta(k)\overline{\phi_{k-i\beta_s}(y)}\phi_{k+i\beta_s}(x)}{(\phi_{k+i\beta_s},\phi_{k-i\beta_s})_{L^2(M)}(\lambda_j(k+i\beta_s)-\lambda)}dk\right)d\mu_{X}(y).
\end{split}
\end{equation*}
In the second equality above, we use the identity \mref{integralF(M)}.

Consequently, from \mref{G_0}, we conclude that our reduced Green's function is
\begin{equation}
\label{E:formula_G0}
G_0(x,y)=\frac{1}{(2\pi)^{d}}\int_{\mathcal{O}}e^{ik\cdot (h(x)-h(y))}\eta(k)\frac{\phi_{k+i\beta_s}(x)\overline{\phi_{k-i\beta_s}(y)}}{(\phi_{k+i\beta_s},\phi_{k-i\beta_s})_{L^2(M)}(\lambda_j(k+i\beta_s)-\lambda)}dk.
\end{equation}
%%%%%%%%%%%%%%%%%%%%%%

%%%%%%%%%%%%%%%%%%%%%%%%%%%%%%%%%%%%%%%%%%%%%%%
\section{Some auxiliary statements}
In this part, we provide the analogs of some results from \cite{KKR, KR}, which do not require any significant change in the proofs when dealing with the case of abelian coverings. Instead of repeating the details, we will make some brief comments about the main ingredients of these results.

The first result studies the local smoothness in $(z,x)$ of the eigenfunctions $\phi_z(x)$ of the operator $L(z)$ with the eigenvalue $\lambda_j(z)$.
\begin{lemma}
\label{L:joint_continuity}
Suppose that $B \subset \mathbb{R}^d$ is the open ball centered at $k_0$  with radius $r_0$ (see \mref{E:beta_s_in_V}). Then for each $s \in \mathbb{S}^{d-1}$, the functions $\displaystyle \phi_{k\pm i\beta_s}(x)$ are smooth on a neighborhood of $\overline{B} \times M$ in $\mathbb{R}^d \times M$. In addition, for any multi-index $\alpha$, the functions $\displaystyle D^{\alpha}_{k}\phi_{k\pm i\beta_s}(x)$ are also jointly continuous in $(s,k,x)$. In particular, we have
\begin{equation*}
\sup_{(s,k,x) \in \mathbb{S}^{d-1}\times \overline{B}  \times M} |D^{\alpha}_{k}\phi_{k \pm i\beta_s}(x)|<\infty.
\end{equation*}
\end{lemma}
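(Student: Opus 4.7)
The plan is to upgrade the $H^2(M)$-valued analyticity of $z \mapsto \phi_z$ established in Proposition \ref{analytic_perturbation}\textbf{(P3)} to joint smoothness of $(z,x) \mapsto \phi_z(x)$ on a complex neighborhood of $\overline{V}\times M$, and then compose with the smooth map $(s,k) \mapsto k \pm i\beta_s$, which by \mref{E:beta_s_in_V} sends $\mathbb{S}^{d-1}\times\overline{B}$ into $V$.

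First I would show that $z \mapsto \phi_z$ is in fact $H^{m}(M)$-valued analytic for every $m\ge 2$. The natural tool is the Riesz projector $P(z)$ from Definition \ref{spectral_proj}. By Lemma \ref{fiber_op_structure}, $L(z)$ is $B(H^{m+2}(M),H^m(M))$-valued analytic, and by property \textbf{(P2)} together with standard elliptic regularity on the compact manifold $M$, for each $z\in V$ and each $\alpha$ on the contour $|\alpha|=\varepsilon_0$, the resolvent $(L(z)-\alpha)^{-1}$ extends to a bounded map $H^m(M)\to H^{m+2}(M)$ depending analytically on $z$ uniformly in $\alpha$. Integrating over the contour, $P(z)$ is $B(L^2(M),H^{m+2}(M))$-valued analytic. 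Fixing $z_0$ near any chosen point and renormalizing $P(z)\phi_{z_0}$, which is nonzero for $z$ close to $z_0$, yields an $H^{m}(M)$-valued analytic representative of $\phi_z$; by uniqueness of the one-dimensional range of $P(z)$, this agrees (up to an analytic scalar) with the $H^2(M)$-valued representative from \textbf{(P3)}, so we may take $z\mapsto \phi_z$ to be $H^m(M)$-valued analytic on a neighborhood of $\overline{V}$ for every $m$.

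Next, by the Sobolev embedding $H^m(M)\hookrightarrow C^k(M)$ (valid for $m>k+n/2$ on the compact smooth manifold $M$), $z\mapsto \phi_z$ is $C^k(M)$-valued analytic for every $k$. In particular, $(z,x)\mapsto \phi_z(x)$ is jointly smooth on a neighborhood of $\overline{V}\times M$, and every holomorphic partial derivative $\partial^\alpha_z \phi_z(x)$ is jointly continuous in $(z,x)$ there. Because $s\mapsto \beta_s$ is smooth (it is the inverse Gauss map of the strictly convex hypersurface $\Gamma_\lambda$ established after Lemma \ref{func_E}), the map $(s,k)\mapsto k\pm i\beta_s$ is smooth from $\mathbb{S}^{d-1}\times \overline{B}$ into $V$. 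For fixed $s$, analyticity of $\phi_z$ in $z$ gives by the chain rule
\[
D^\alpha_k \phi_{k\pm i\beta_s}(x) \;=\; (\partial^\alpha_z \phi_z)(x)\big|_{z=k\pm i\beta_s},
\]
so the function $(k,x)\mapsto \phi_{k\pm i\beta_s}(x)$ is smooth on a neighborhood of $\overline{B}\times M$. The same identity, together with continuity of $(s,k)\mapsto k\pm i\beta_s$ and joint continuity of $\partial^\alpha_z \phi_z(x)$, shows that $D^\alpha_k \phi_{k\pm i\beta_s}(x)$ is jointly continuous in $(s,k,x)$, and the supremum bound is immediate from compactness of $\mathbb{S}^{d-1}\times \overline{B}\times M$.

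The main technical point to verify carefully is the first step, namely the $H^m$-analyticity of the Riesz projector uniformly on the $\alpha$-contour, which rests on elliptic regularity being compatible with the analytic dependence of $L(z)$ on $z$ in the operator norms $B(H^{m+2}(M),H^m(M))$; everything afterwards is a routine composition and compactness argument.
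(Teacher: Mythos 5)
Your proposal is correct and follows essentially the same route as the paper, which (citing the proof of Proposition 9.6 in \cite{KKR}) rests on exactly the ingredients you use: analyticity of $L(z)$ between Sobolev spaces (Lemma \ref{fiber_op_structure}), the analytic eigenfunction family from \textbf{(P3)}, and elliptic regularity on the compact manifold $M$ to bootstrap to $H^m$-valued analyticity for all $m$, followed by Sobolev embedding and compactness. Your only variation is to run the bootstrap through the Riesz projector $P(z)$ rather than directly through the eigenvalue equation with coercive estimates, which is an equally valid implementation of the same idea.
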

To obtain Lemma \ref{L:joint_continuity}, one can modify the proof of \cite[Proposition 9.6]{KKR} without any significant change.
Indeed, the three main ingredients in the proof are the smoothness in $z$ of the family of operators $\{L(z)\}_{z \in V}$ acting between Sobolev spaces (Lemma \ref{fiber_op_structure}), the property \textbf{(P3)} for bootstraping regularity of eigenfunctions in $k$, and the standard coercive estimates of elliptic operators $L(z)$ on the compact manifold $M$ (see e.g., \cite[estimate (11.29)]{Taylor1}) for bootstraping regularity in $x$.

The next result is the asymptotics of the scalar integral expression obtained from the integral representation \mref{E:formula_G0} of the reduced Green's function $G_0$.
\begin{prop}
\label{P:asymp_KKR}
Suppose that $d\geq 2$ and $B$ is the open ball defined in Proposition \ref{L:joint_continuity}. Let $\eta(k)$ be a smooth cut off function around the point $k_0$, and $\{\mu_{s}(k,x,y)\}_{s \in \mathbb{S}^{d-1}}$ be a family of smooth $\mathbb{C}^d$-valued functions defined on $\overline{B} \times M \times M$. We also use the same notation $\mu_{s}(k,x,y)$ for its lift to $\overline{B} \times X \times X$.
For each quadruple $(s,a,x,y) \in \mathbb{S}^{d-1} \times \mathbb{R}^d \times X \times X$, we define
$$I(s,a):=\frac{1}{(2\pi)^d}\int_{\mathcal{O}}e^{ik\cdot a}\frac{\eta(k)}{\lambda_j(k+i\beta_s)-\lambda}dk$$
and
$$J(s,a, x, y):=\frac{1}{(2\pi)^d}\int_{\mathcal{O}}e^{ik\cdot a}\frac{\eta(k)(k-k_0)\cdot \mu_s(k,x,y)}{\lambda_j(k+i\beta_s)-\lambda}dk.$$
Assume that the size of the support of $\eta$ is small enough. Fix a direction $s \in \mathbb{S}^{d-1}$ and consider all vectors $a$ such that $\displaystyle s=\frac{a}{|a|}$. Then when $|a|$ is large enough, we have
\begin{equation}
\label{E:asymp_I}
I(s,a)=\frac{e^{ik_0 \cdot a}|\nabla E(\beta_s)|^{(d-3)/2}}{(2\pi|a|)^{(d-1)/2}\det{(-\mathcal{P}_s \Hess{(E)}(\beta_s)\mathcal{P}_s)}^{1/2}}+O(|a|^{-d/2})
\end{equation}
and 
\begin{equation}
\label{E:asymp_J}
\sup_{(x,y) \in X \times X}|J(s,a,x,y)|=O(|a|^{-d/2}).
\end{equation}
Moreover, if all derivatives of $\mu_s(k,x,y)$ with respect to $k$ are uniformly bounded in $s \in \mathbb{S}^{d-1}$,
then all the terms $O(\cdot)$ in \mref{E:asymp_I} and \mref{E:asymp_J} are also uniform in $s \in \mathbb{S}^{d-1}$ when $|a| \rightarrow \infty$.
\end{prop}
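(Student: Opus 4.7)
The approach will be the standard residue-plus-stationary-phase calculation used in \cite{KKR, KR}. For $I(s,a)$, I will first shift $k = k_0 + k'$ to isolate the factor $e^{ik_0\cdot a}$ and Taylor-expand the denominator at $k' = 0$. Using the chain rule for the analytic function $\beta \mapsto \lambda_j(k_0 + i\beta)$, namely $\partial_{\beta_m}E(\beta) = i\,\partial_{k_m}\lambda_j(k_0 + i\beta)$, together with the relation $\nabla E(\beta_s) = -|\nabla E(\beta_s)|s$, I obtain
\begin{equation*}
\lambda_j(k_0+k'+i\beta_s) - \lambda = i|\nabla E(\beta_s)|\,s \cdot k' + \tfrac{1}{2}(k')^T H_s k' + O(|k'|^3),
\end{equation*}
where $H_s := -\Hess E(\beta_s)$ is \emph{real} and \emph{positive-definite} by Lemma \ref{func_E}. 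This structure, with a purely imaginary linear part in the direction $s$ and a positive-definite real quadratic part, is the key analytic input to the residue analysis.

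I will then introduce orthogonal coordinates $k' = \tau s + k''$ with $k'' \perp s$. Since $a = |a|s$, the oscillating factor becomes $e^{i\tau|a|}$, independent of $k''$. For each fixed small $k''$, I view the $\tau$-integral as a contour integral along the real axis and deform it into the upper half-plane, where $e^{i\tau|a|}$ decays. In that region the denominator has a single simple pole at
\begin{equation*}
\tau_*(k'') = \frac{i\,(k'')^T H_s|_\perp k''}{2|\nabla E(\beta_s)|} + O(|k''|^3),
\end{equation*}
whose residue in $\tau$ equals $1/(i|\nabla E(\beta_s)|)$. The contour deformation is justified by Proposition \ref{singularity}, which excludes other zeros of $\lambda_j(\,\cdot\,+i\beta_s) - \lambda$ in the deformation region, and by the analyticity furnished by Proposition \ref{analytic_perturbation}(P1). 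The pole contribution is thus
\begin{equation*}
\frac{2\pi}{|\nabla E(\beta_s)|}\,\eta(k_0+k'')\,\exp\!\left(-\frac{|a|\,(k'')^T H_s|_\perp k''}{2|\nabla E(\beta_s)|}\right)\bigl(1+O(|k''|)+O(|a|^{-1})\bigr),
\end{equation*}
while the endpoint contributions along the $\tau$-contour (where $\eta$ is cut off) decay faster than any power of $|a|$ via repeated integration by parts in $\tau$.

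The remaining $(d-1)$-dimensional integral in $k''$ is a Gaussian with covariance matrix proportional to $|\nabla E(\beta_s)|/|a|$ times $(H_s|_\perp)^{-1}$, producing the factor $(2\pi|\nabla E(\beta_s)|/|a|)^{(d-1)/2}\det(H_s|_\perp)^{-1/2}$. Combining this with the prefactor $(2\pi)^{-d}$, the residue prefactor $2\pi/|\nabla E(\beta_s)|$, the phase $e^{ik_0\cdot a}$, and the identification $\mathcal{P}_s H_s \mathcal{P}_s = -\mathcal{P}_s \Hess E(\beta_s)\mathcal{P}_s$ reproduces exactly the leading term in \mref{E:asymp_I}. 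The remainder $O(|a|^{-d/2})$ comes from three sources: cubic Taylor corrections to the denominator (each producing an extra $|a|^{-1/2}$ under Gaussian scaling in $k''$), the higher-order shift in the pole location, and cross terms in the quadratic form involving $\tau k''$. Uniformity in $s \in \mathbb{S}^{d-1}$ follows because $\beta_s$, $\phi_{k\pm i\beta_s}$, $H_s$, and $|\nabla E(\beta_s)|$ depend continuously on $s$ on the compact sphere via Lemma \ref{L:joint_continuity}, so the support of $\eta$ and all error constants can be chosen uniformly.

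For $J(s,a,x,y)$, the reduction is identical, but with the extra numerator factor $(k-k_0)\cdot \mu_s(k,x,y)$. Decomposing $(k-k_0)\cdot\mu_s(k,x,y) = k''\cdot\mu_s(k_0,x,y) + \tau\,s\cdot\mu_s(k_0,x,y) + O(|k'|^2)$, the first piece gives an odd Gaussian integral in $k''$ that vanishes exactly; the second picks up $\tau_* = O(|k''|^2)$ from the residue and then contributes $O(|a|^{-(d+1)/2})$ after the $k''$-Gaussian; the $O(|k'|^2)$ correction rescales to $O(|a|^{-(d+1)/2})$ as well. This yields $J = O(|a|^{-d/2})$, uniform in $(x,y) \in X \times X$ because $\mu_s$ lifts to a $G$-invariant function on the fundamental domain, and uniform in $s$ under the stated hypothesis on $k$-derivatives of $\mu_s$. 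I expect the main obstacle to be the uniform control of all error terms: verifying that the pole location $\tau_*(k'')$ stays strictly inside the contour region and that the cubic Taylor remainder together with the support of $\eta$ can be handled with constants independent of $s \in \mathbb{S}^{d-1}$; the compactness of $\mathbb{S}^{d-1}$ and the joint continuity from Lemma \ref{L:joint_continuity} are what ultimately make this feasible.
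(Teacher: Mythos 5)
Your proposal is essentially the computation that the paper delegates to \cite[Section 6]{KKR}: split $k'=\tau s+k''$, pick up the residue of the simple pole $\tau_*(k'')$ in the upper half-plane, and evaluate the resulting $(d-1)$-dimensional Gaussian in $k''$. Your Taylor expansion of the denominator, the location and residue of the pole, and the final bookkeeping of constants all check out and reproduce \eqref{E:asymp_I} exactly; the treatment of $J$ via parity of the Gaussian is also the right mechanism for the extra half-power of decay.

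The one substantive difference from the cited proof is that where you Taylor-expand and then \emph{assert} that ``the denominator has a single simple pole at $\tau_*(k'')$'' in the deformation region, \cite{KKR} invokes the Weierstrass Preparation Lemma to write $\lambda_j(k+i\beta_s)-\lambda=(\tau-\tau_*(k''))\,U(\tau,k'')$ with $\tau_*$ analytic and $U$ nonvanishing, uniformly for $k''$ small and $s\in\mathbb{S}^{d-1}$. That factorization is not cosmetic: it is what guarantees (i) that there are no other zeros of the analytic continuation in the complex $\tau$-strip you deform through (Proposition \ref{singularity} only controls real $k$ and the segment $t\beta_s$, $t\in[0,1]$, so it does not by itself exclude nearby complex zeros), and (ii) that the residue $1/\partial_\tau D(\tau_*,k'')$ and the remainder after subtracting the pole term depend analytically and uniformly on $(k'',s)$, which is exactly what your error accounting needs. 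An implicit-function/Rouch\'e argument at $(\tau,k'')=(0,0)$, using $\partial_\tau D(0,0)=i|\nabla E(\beta_s)|\neq 0$, would serve the same purpose, but some such step must be made explicit.

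A second soft spot: the cutoff $\eta$ is not analytic, so you cannot literally deform the contour of the full integrand; and if you deform only on the region where $\eta\equiv 1$, the vertical segments of the shifted contour contribute $O(|a|^{-1})$, not rapidly decaying terms, which would swamp $O(|a|^{-d/2})$ for $d\ge 3$. The standard fix (implicit in \cite{KKR}) is to subtract the exact pole term $c(k'')/(\tau-\tau_*(k''))$ from the integrand first: the pole term is integrated exactly (its Fourier transform in $\tau$ is $2\pi i\,e^{i\tau_*|a|}$ up to rapidly decaying cutoff corrections), while the difference is smooth in $\tau$ with derivatives bounded uniformly in $(k'',s)$ by Cauchy estimates, so repeated integration by parts in $\tau$ gives arbitrary decay there. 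With these two points tightened, your argument is a complete and correct substitute for the citation.
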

The proof of Proposition \ref{P:asymp_KKR} can be extracted from \cite[Section 6]{KKR}. The main ingredient (see \cite[Proposition 6.1]{KKR}) is an application of the Weierstrass Preparation Lemma in several complex variables to have a factorization of the denominator $\lambda_j(k+i\beta_s)-\lambda$ of the integrands of $I, J$ into a form that is close to the normal form in the free case. This trick was used in \cite{Woess} in the discrete setting.

The next result \cite[Theorem 3.3]{KR} will be needed in the proof of Theorem \ref{main_KR}.
\begin{prop}
\label{P:asymp_KR}
Assume $d \geq 3$. Let $a \in \mathbb{R}^d$. Let $\eta$ be a smooth function satisfying the assumptions of Proposition \ref{P:asymp_KKR}, and let $\mu(k,x,y)$ be a smooth $G$-periodic function from a neighborhood of $\overline{B} \times X \times X$ to $\mathbb{C}^d$. Then the following asymptotics hold when $|a| \rightarrow \infty$:
\begin{equation*}
\label{E:asymp_KR}
\begin{split}
&\frac{1}{(2\pi)^{d}}\int_{\mathcal{O}}e^{ik\cdot a}\frac{\eta(k)}{\lambda_j(k)}dk=\frac{\Gamma(\frac{d}{2}-1)e^{ik_0 \cdot a}}{2\pi^{d/2}(\det{H})^{1/2}|H^{-1/2}(a)|^{d-2}}(1+O(|a|^{-1}),\\
\mbox{and}&\\
&\sup_{x,y \in X}\left|\int_{\mathcal{O}}e^{ik\cdot a}\frac{\eta(k)(k-k_0)\cdot \mu(k,x,y)}{\lambda_j(k)}dk\right|=O(|a|^{-d+1}).
\end{split}
\end{equation*} 
\end{prop}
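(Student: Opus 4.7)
This is the spectral-edge analog of Proposition \ref{P:asymp_KKR}: now $\lambda_j(k_0)=0$ with positive-definite Hessian $H$ (Assumptions \textbf{A1}, \textbf{A4}), so the integrand has an algebraic singularity of order $|k-k_0|^{-2}$ at a single point rather than on a Fermi hypersurface. The plan is to freeze this singularity to the model $1/(k-k_0)^{T} H (k-k_0)$, evaluate the resulting model Fourier integral via the classical Riesz formula, and control the error by local integrability -- which is exactly where $d\geq 3$ is used.

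First I would shift $k \mapsto k_0 + k$, factoring $e^{ik_0 \cdot a}$ out of both integrals and recentering $\eta$ at the origin. Using analyticity of $\lambda_j$ near $k_0$ (Proposition \ref{analytic_perturbation}), Taylor expansion gives $\lambda_j(k_0+k) = \tfrac{1}{2}\,k^{T} H k + O(|k|^3)$, so on $\operatorname{supp}\eta$ (taken small enough) one has $\lambda_j(k_0+k) \asymp |k|^2$ and
$$\frac{1}{\lambda_j(k_0+k)} = \frac{2}{k^{T} H k} + \rho(k), \qquad \rho(k) = O(|k|^{-1}),$$
with $\rho$ smooth off the origin. The first integral splits accordingly into a principal piece with denominator $k^{T} H k$ and a remainder whose singularity is only of order $|k|^{-1}$.

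For the principal piece I would replace $\eta$ by $1$ at the cost of $O(|a|^{-N})$ for every $N$ (integration by parts against the smooth compactly supported $1-\eta$), then substitute $k = H^{-1/2}\ell$, which reduces the quadratic form to $|\ell|^2$ with Jacobian $(\det H)^{-1/2}$. The remaining integral is the Fourier transform of the Riesz kernel $|\ell|^{-2}$ on $\mathbb{R}^d$, equal to $2^{d-2}\pi^{d/2}\Gamma(\tfrac{d-2}{2})\,|H^{-1/2}a|^{2-d}$ for $d\geq 3$; collecting the prefactors $2$, $(2\pi)^{-d}$, $(\det H)^{-1/2}$, and $e^{ik_0\cdot a}$ reproduces exactly the leading coefficient claimed. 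The remainder from $\rho$, as well as the entire second integral (whose numerator carries the extra factor $(k-k_0)\cdot \mu$ vanishing to first order at $k_0$), become after the same substitution Fourier integrals of compactly supported functions with at worst an $|\ell|^{-1}$ singularity at the origin; such Fourier transforms decay as $|a|^{-(d-1)}$, yielding both the $O(|a|^{-1})$ relative error in the first formula and the $O(|a|^{-d+1})$ bound in the second. Uniformity in $(x,y) \in X \times X$ for the second estimate follows from the $G$-periodicity of $\mu$, which controls $\mu$ and all its $k$-derivatives by their sup-norms on the compact quotient $M \times M$.

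The main technical bookkeeping is the exact constant in the Riesz-kernel Fourier formula together with the verification that the cutoff replacement and the Taylor remainder $\rho$ genuinely produce only $O(|a|^{-1})$ \emph{relative} corrections and do not leak into the main term. The structure is identical to that of \cite[Theorem 3.3]{KR}; the abelian-covering framework enters only through the smoothness and $G$-periodicity of the eigendata recorded in Lemma \ref{L:joint_continuity}, so no conceptually new ingredient beyond what Proposition \ref{P:asymp_KKR} already required is needed here.
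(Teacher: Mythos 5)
The paper does not actually prove this proposition: it is imported verbatim as \cite[Theorem 3.3]{KR}, so there is no in-paper argument to compare against. Your reconstruction is, however, essentially the standard proof of that cited result, and the bookkeeping checks out: after centering at $k_0$ and substituting $k=H^{-1/2}\ell$, the prefactors $2\cdot(2\pi)^{-d}\cdot(\det H)^{-1/2}$ combined with the Riesz formula $\int e^{i\ell\cdot b}|\ell|^{-2}\,d\ell=2^{d-2}\pi^{d/2}\Gamma(\tfrac{d-2}{2})|b|^{2-d}$ give exactly $\Gamma(\tfrac{d}{2}-1)/(2\pi^{d/2}\sqrt{\det H}\,|H^{-1/2}a|^{d-2})$, and the dyadic estimate ``compactly supported, singularity of homogeneity $-s$ at the origin $\Rightarrow$ Fourier decay $|a|^{s-d}$'' correctly handles both the Taylor remainder $\rho$ (with $s=1$) and the second integral, whose numerator vanishes to first order at $k_0$. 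Two small points deserve correction. First, $1-\eta$ is \emph{not} compactly supported; what saves the cutoff-removal step is that $(1-\eta(k_0+\cdot))/(k^{T}Hk)$ is a smooth symbol of order $-2$ supported away from the origin, whose (distributional) Fourier transform is rapidly decaying --- and relatedly, $\int_{\mathbb{R}^d}e^{i\ell\cdot a}|\ell|^{-2}d\ell$ is not absolutely convergent at infinity for any $d$, so the model integral must be read as a tempered-distribution Fourier transform throughout. Second, you should record that $\rho$ lies in the symbol-type class $|D^{\alpha}\rho(k)|\lesssim|k|^{-1-|\alpha|}$ (which follows from the quotient structure $\rho=-4r/((k^{T}Hk)(k^{T}Hk+2r))$ with $r=O(|k|^3)$ smooth), since the $|a|^{1-d}$ decay estimate needs control of derivatives, not just of $\rho$ itself. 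With these repairs the argument is complete and correctly yields the $O(|a|^{-1})$ relative error.
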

%%%%%%%%%%%%%%%%%%%%%%%%%%%%%%%%%%%%%%%%%%%%%%%
\section{Proofs of the main results}
\begin{prmainKKR}
We fix an admissible direction $s$ of the additive function $h$ and consider any $x,y \in X$ such that 
$$\displaystyle \frac{h(x)-h(y)}{|h(x)-h(y)|}=s \in \mathcal{A}_h.$$

As we discussed in Section 4, the Green's function $G_{\lambda}$ satisfies
\begin{equation}
\label{G_s}
G_{\lambda}(x,y)=e^{\beta_s \cdot (h(y)-h(x))}G_{s, \lambda}(x,y),
\end{equation}
where $G_{s, \lambda}$ is the Schwartz kernel of the resolvent operator $R_{s}$. Also, $R_{s, \lambda}=V_{s}+T_{s}$. Due to Theorem \ref{microlocal}, the Schwartz kernel of $T_{s}$ decays rapidly (uniformly in $s$) when $d_X(x,y)$ is large enough. Hence, to find the asymptotics of the kernel of $R_{s, \lambda}$, it suffices to consider the kernel $G_0$ of the operator $V_{s}$. 
Define 
\begin{equation}
\label{a}
a:=h(x)-h(y)
\end{equation}
and
$$\tilde{\mu}_{\omega}(k,p,q):=\frac{\phi_{k+i\beta_{\omega}}(p)\overline{\phi_{k-i\beta_{\omega}}(q)}}{(\phi_{k+i\beta_{\omega}},\phi_{k-i\beta_{\omega}})_{L^2(M)}}, \quad (\omega,p,q) \in \mathbb{S}^{d-1} \times M \times M.$$
By Lemma \ref{L:joint_continuity}, $\tilde{\mu}_{\omega}$ is a smooth function on $\overline{B} \times M \times M$. By Taylor expanding around $k_0$, $\tilde{\mu}_{\omega}(k,p,q)=\tilde{\mu}_{\omega}(k_0,p,q)+(k-k_0)\cdot \mu_{\omega}(k,p,q)$ for some smooth $\mathbb{C}^d$-valued function $\mu_{\omega}(k,p,q)$ defined on $\overline{B} \times M \times M$. From Lemma \ref{L:joint_continuity} and the definition of $\tilde{\mu_{\omega}}$, 
$$\sup_{(\omega,k,x,y) \in  \mathbb{S}^{d-1} \times\overline{B} \times M \times M} |D^{\alpha}_{k}\tilde{\mu}_{\omega}(k,x,y)|<\infty,$$
for any multi-index $\alpha$. 
Thus, all derivatives of $\mu_{\omega}$ with respect to $k$ are also uniformly bounded in $\omega \in \mathbb{S}^{d-1}$.
 We now can rewrite \mref{E:formula_G0} as follows:
\begin{equation*}
\label{E:formula_G0_1}
\begin{split}
G_0(x,y)&=\frac{1}{(2\pi)^{d}}\int_{\mathcal{O}}e^{ik\cdot a}\frac{\eta(k)}{\lambda_j(k+i\beta_s)-\lambda}\left(\tilde{\mu}_s(k_0,x,y)+(k-k_0)\cdot \mu_s(k,x,y)\right)dk\\
&=I(s,a)\frac{\phi_{k_0+i\beta_s}(x)\overline{\phi_{k_0-i\beta_s}(y)}}{(\phi_{k_0+i\beta_s},\phi_{k_0-i\beta_s})_{L^2(M)}}+J(s,a, x, y).
\end{split}
\end{equation*}
Here the integrals $I(s,a)$ and $J(s,a,x,y)$ are defined in Proposition \ref{P:asymp_KKR}. By using Proposition \ref{P:asymp_KKR}, we obtain the following asymptotics whenever $|a|$ is large enough:
\begin{equation}
\label{E:formula_G0_2}
\begin{split}
G_0(x,y)&=\Big(\frac{e^{ik_0 \cdot a}|\nabla E(\beta_s)|^{(d-3)/2}}{(2\pi|a|)^{(d-1)/2}\det{(-\mathcal{P}_s \Hess{(E)}(\beta_s)\mathcal{P}_s)}^{1/2}}+O(|a|^{-d/2})\Big)\\
&\times \frac{\phi_{k_0+i\beta_s}(x)\overline{\phi_{k_0-i\beta_s}(y)}}{(\phi_{k_0+i\beta_s},\phi_{k_0-i\beta_s})_{L^2(M)}}+O(|a|^{-d/2}),
\end{split}
\end{equation}
where all the terms $O(\cdot)$ are uniform in $s$. Due to \mref{a} and Proposition \ref{P:add_func}, $O(|a|^{\ell})=O(d_X(x,y)^{\ell})$ for any $\ell \in \bZ$, provided that $d_X(x,y)>R_h$. Hence, by choosing the constant $R_h$ larger if necessary, we can assume that when $d_X(x,y)>R_h$, the asymptotics \mref{E:formula_G0_2} would follow.
Finally, we substitute \mref{a} to the asymptotics \mref{E:formula_G0_2} and then use \mref{G_s} to deduce Theorem \ref{main}.
\end{prmainKKR}
\vspace{0.3cm}
\begin{prmainKR}
We recall that $\lambda=\lambda_j(k_0)=0$ and $R_{-\varepsilon}$ is the resolvent operator $(L+\varepsilon)^{-1}$ when $\varepsilon>0$ is small enough. We will repeat the Floquet reduction approach in Section 4. Given any $f, \varphi \in L^2_{comp}(X)$, the sesquilinear form $\langle R_{-\varepsilon}f, \varphi \rangle$ is
$$(2\pi)^{-d}\int_{\mathcal{O}}\left((L(k)+\varepsilon)^{-1}\widehat{f}(k), \widehat{\varphi}(k)\right)dk.$$
The first conclusion of this theorem is achieved by a smilar argument in \cite[Lemma 2.3]{KR}. Hence the operator $R=\lim_{\varepsilon \rightarrow 0^+}R_{-\varepsilon}$ is defined by the identity $\widehat{Rf}(k)=R(k)\widehat{f}(k)$ and the Green's function $G$ is the Schwartz kernel of the operator $R$. To single out the principal term in $R$, we first choose a neighborhood $V \subset \mathcal{O}$ of $k_0$ such that when $k \in V$, there is a non-zero $G$-periodic eigenfunction $\phi_k(x)$ of the operator $L(k)$ with the corresponding eigenvalue $\lambda_j(k)$ and moreover, the mapping $k \mapsto \phi_k(\cdot)$ is analytic in $k$ as a $H^2(M)$-valued function. This is always possible due to Lemma \ref{L:Bloch_variety}. For such $k \in V$,  let us denote by $P(k)$ the spectral projector of $L(k)$ that projects $L^2(M)$ onto the eigenspace spanned by $\phi_k$. The notation $R(I-P(k))$ stands for the range of the projector $I-P(k)$.
Then we pick $\eta$ as a smooth cut off function around $k_0$ such that $\supp(\eta) \Subset V$.
Define the operator
$$T:=\frac{1}{(2\pi)^d}\int^{\oplus}_{\mathcal{O}}T(k)dk,$$
where
$$T(k):=(1-\eta(k))L(k)^{-1}+\eta(k)(L(k)|_{R(I-P(k))})^{-1}(I-P(k)).$$
As in Theorem \ref{microlocal}, the Schwartz kernel $K(x,y)$ of $T$ is rapidly decaying as $d_X(x,y) \rightarrow \infty$. Thus, the asymptotics of the Green's function $G$ are the same as the asymptotics of the Schwartz kernel $G_0$ of the operator $R-T$. To find $G_0$, we repeat the arguments in Subsection 4.3 to derive the formula
$$G_0(x,y)=\frac{1}{(2\pi)^{d}}\int_{\mathcal{O}}e^{ik\cdot (h(x)-h(y))}\frac{\eta(k)}{\lambda_j(k)}\frac{\phi_{k}(x)\overline{\phi_{k}(y)}}{\|\phi_{k}\|^2_{L^2(M)}}dk, \quad x,y \in X.$$
As in the proof of Theorem \ref{main}, we set $a:=h(x)-h(y)$ and rewrite the smooth function 
$$\frac{\phi_{k}(x)\overline{\phi_{k}(y)}}{\|\phi_{k}\|^2_{L^2(M)}}=\frac{\phi_{k_0}(x)\overline{\phi_{k_0}(y)}}{\|\phi_{k_0}\|^2_{L^2(M)}}+(k-k_0)\cdot \mu(k,x,y),$$
for some smooth $G$-periodic function $\mu: \overline{B}\times X \times X \rightarrow \mathbb{C}^d$. 
Now by applying Proposition \ref{P:asymp_KR}, the proof is completed.
\end{prmainKR}
%%%%%%%%%%%%%%%%%%%%%%%
\section{Proofs of technical statements}
\subsection{Proofs of Proposition \ref{P:add_func} and Proposition \ref{A_h}}
\begin{praddfunc}
Fixing a point $x_0 \in X$, we let 

$$K:=\overline{F(M)},$$

$$R:=\max_{x \in K} d_X(x_0,x),$$

and 
$$\widetilde{R_h}:=\max_{(x,y) \in K \times K}|h(x)-h(y)|.$$ 

Due to Proposition \ref{P:quasi_iso} and the fact that $|\cdot|_S$ is equivalent to $|\cdot|$ on $\bZ^d$, there exist $C_1>1$ and $C_2>0$ such that
$$C_1^{-1}\cdot d_X(g_1 \cdot x_0, g_2 \cdot x_0)-C_2 \leq |g_1-g_2| \leq C_1\cdot d_X(g_1 \cdot x_0,g_2 \cdot x_0)+C_2,$$
for any $g_i \in \bZ^d$, $i=1,2$.

Now we consider any two points $x,y$ in $X$. By \mref{closure_F(M)}, we can select $\tilde{x}$, $\tilde{y}$ in $K$ such that $x=g_1 \cdot \tilde{x}$ and $y=g_2 \cdot \tilde{y}$ for some $g_1, g_2 \in \bZ^d$.
Since $\bZ^d$ acts by isometries, we get 
\begin{equation}
\label{E:isometry}
d_X(g_1 \cdot x_0, g_1 \cdot \tilde{x})=d_X(x_0,\tilde{x}) \quad \mbox{and} \quad d_X(g_2 \cdot x_0, g_2 \cdot \tilde{y})=d_X(x_0,\tilde{y}).
\end{equation}
By \mref{additivity}, we have
$$h(x)-h(y)=h(\tilde{x})-h(\tilde{y})+g_1-g_2.$$
Using triangle inequalities and \mref{E:isometry}, we obtain
\begin{equation*}
\begin{split}
|h(x)-h(y)| &\leq \widetilde{R_h}+|g_1-g_2| \leq C_1\cdot d_X(g_1 \cdot x_0, g_2 \cdot x_0)+\widetilde{R_h}+C_2
\\&\leq C_1 \cdot d_X(x,y)+C_1 \cdot(d_X(x_0, \tilde{x})+d_X(x_0, \tilde{y}))+\widetilde{R_h}+C_2
\\&\leq C_1 \cdot d_X(x,y)+(2C_1 R+\widetilde{R_h}+C_2).
\end{split}
\end{equation*}
Likewise, 
\begin{equation*}
\begin{split}
|h(x)-h(y)| &\geq |g_1-g_2|-\widetilde{R_h} \geq C_1^{-1} \cdot d_X(g_1 \cdot x_0, g_2 \cdot x_0)-(\widetilde{R_h}+C_2)
\\&\geq C_1^{-1} \cdot d_X(x,y)-(C_1^{-1} \cdot(d_X(x_0, \tilde{x})+d_X(x_0, \tilde{y}))+\widetilde{R_h}+C_2)
\\&\geq C_1^{-1} \cdot d_X(x,y)-(2C_1 R+\widetilde{R_h}+C_2).
\end{split}
\end{equation*}
The statement follows if we put $C:=2C_1$ and $R_h:=2C_1 (2C_1 R+\widetilde{R_h}+C_2)$.
\end{praddfunc}
%%%%%%%%%%%%%%%%%%%%%%%%%%%%
\vspace{5pt}
\begin{prA_h}
By Definition \ref{Add_func}, any rational point in the unit sphere $\mathbb{S}^{d-1}$ is an admissible direction of the additive function $h$ and thus we have \mref{rational}. By using the stereographic projection, one can see that the subset $\mathbb{Q}^{d}\cap \mathbb{S}^{d-1}$ is dense in $\mathbb{S}^{d-1}$. Hence, the density of $\mathcal{A}_h$ follows.
Now we consider the case $d=2$.
For any point $x_0 \in X$, we denote by $\mathcal{A}_h(x_0)$ the subset of $\mathcal{A}_h$ consisting of unit vectors $s$ such that there exists a point $x$ in $\{ x \in X \mid d_X(x,x_0)>R_h\}$ satisfying either
$h(x)-h(x_0)=|h(x)-h(x_0)|s$
or
$h(x_0)-h(x)=|h(x)-h(x_0)|s.$
It is enough to prove that for any $x_0$, $\mathcal{A}_h(x_0)=\mathbb{S}^1$. Without loss of generality, we suppose that $h(x_0)=0$. Let $Y$ be the range of the continuous function $\displaystyle x \mapsto \frac{h(x)}{|h(x)|}$, which is defined on the connected set $\{ x \in X \mid d_X(x,x_0)>R_h\}$.
Then $Y$ is a connected subset that contains $\mathbb{Q}^2 \cap \mathbb{S}^1$ since $h(n\cdot x_0)=n$ for any $n \in \bZ^d$. Suppose for contradiction, there is a unit vector $s$ such that $s \notin \mathcal{A}_h(x_0)$ and hence, $Y \subseteq \mathbb{S}^1 \setminus \{\pm s\}$. Thus, $Y$ cannot be connected, which is a contradiction.
\end{prA_h}
%For any $s \in \mathbb{S}^{1}$, let $L_s$ be the straight line in $\mathbb{R}^d$ that goes through the points $0$ and $s$.
%Due to Proposition \ref{P:add_func}, we know that $h(x) \neq 0$ if $d_X(x,x_0)>R_h$.
%Hence, a unit vector $s$ does not belong to the set $\mathcal{A}_h(x_0)$ if and only if $h(\{x \in X \mid d_X(x,x_0)>R_h\}) \cap L_s=\emptyset$. 
%
%Suppose for contradiction, there is a vector $s$ in the unit circle such that $s \notin \mathcal{A}_h(x_0)$. Since the set $h(\{x \in X \mid d_X(x,x_0)>R_h\})$ is connected, both the condition $h(\{x \in X \mid d_X(x,x_0)>R_h\}) \cap L_s=\emptyset$ and $\mathbb{Q}^2 \cap \mathbb{S}^1 \subset h(\{x \in X \mid d_X(x,x_0)>R_h\})$ would yield a contradiction!

%%%%%%%%%%%%%%%%%%%%%%%%%%%%
\subsection{Proof of Theorem \ref{microlocal}}
It suffices to prove the following claim:
\begin{thm}
\label{microlocal_2}
Let $\phi$ and $\theta$ be two functions in $C^{\infty}_c(X)$ such that the metric distance on $X$ between the supports of these two functions is bigger than $R_h$. Let $K_{s,\phi,\theta}$ be the Schwartz kernel of the operator $\phi T_s \theta$. 
Then $K_{s,\phi,\theta}$ is continuous and rapidly decaying (uniformly in $s$) on $X \times X$, i.e., for any $N>0$, we have
$$\sup_{s \in \mathbb{S}^{d-1}}|K_{s,\phi,\theta}(x,y)| \leq C (1+d_X(x,y))^{-N},$$
for some positive constant $C=C(N, \|\phi\|_{\infty}, \|\theta\|_{\infty})$.
\end{thm}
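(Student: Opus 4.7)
The plan is to realize $K_{s,\phi,\theta}$ as an oscillatory integral in the quasimomentum $k$ over the compact fundamental domain $\mathcal{O}$, whose amplitude depends smoothly on $k$, and then to integrate by parts using that the phase $e^{ik\cdot(h(x)-h(y))}$ has nonvanishing gradient whenever $d_X(x,y)>R_h$, by Proposition \ref{P:add_func}.

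First, using the Floquet inversion formula (Lemma \ref{L:floquet}) together with the direct-integral decomposition in Theorem \ref{microlocal}, I would express
$$K_{s,\phi,\theta}(x,y) = \phi(x)\theta(y)(2\pi)^{-d}\int_{\mathcal{O}} e^{ik\cdot(h(x)-h(y))}\,\tilde{t}_s(k;x,y)\,dk,$$
where $\tilde{t}_s(k;\cdot,\cdot)$ is the $G$-invariant lift to $X\times X$ of the Schwartz kernel of $T_s(k)$ on $M$. Because $\supp\phi$ and $\supp\theta$ lie in a compact region with mutual distance $>R_h$, the pair $(x,y)$ stays uniformly off the $G$-orbit of the diagonal, so only values of $\tilde{t}_s$ on a compact set off the singular locus enter.

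Next, I would establish joint smoothness and uniform bounds on the amplitude. The operator $T_s(k)$ is built from two pieces: $(1-\eta(k))(L_s(k)-\lambda)^{-1}$, well-defined on $\supp(1-\eta)$ because $\lambda$ lies in the resolvent set of $L_s(k)$ there by Proposition \ref{singularity}; and the reduced resolvent $\eta(k)((L_s(k)-\lambda)|_{R(Q(k+i\beta_s))})^{-1}Q(k+i\beta_s)$, which is regular by Proposition \ref{analytic_perturbation} since we have quotiented out the eigenspace of the simple eigenvalue $\lambda_j(k+i\beta_s)$. Hence $T_s(k)$ is a smooth family (in $k\in\mathcal{O}$ and uniformly in $s\in\mathbb{S}^{d-1}$, by compactness of $\{\beta_s\}$ and continuity of the Riesz projector) of pseudodifferential operators of order $-2$ on $M$. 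Consequently $\tilde{t}_s(k;x,y)$ is $C^\infty$ in $(k,x,y)$ on the compact region of interest, with all $k$-derivatives bounded uniformly in $s$.

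The final step is integration by parts with the first-order operator
$${}^{t}\!\mathcal{L} \;=\; \frac{1}{i|h(x)-h(y)|^{2}}\,(h(x)-h(y))\cdot\nabla_{k},$$
which satisfies ${}^{t}\!\mathcal{L}\,e^{ik\cdot(h(x)-h(y))}=e^{ik\cdot(h(x)-h(y))}$. Applying its transpose $N$ times, noting that boundary contributions on $\partial\mathcal{O}$ cancel by quasi-periodicity in $k$ (the integrand descends to a section on the dual torus $\mathbb{T}^{d}$), yields
$$\sup_{s\in\mathbb{S}^{d-1}}|K_{s,\phi,\theta}(x,y)| \;\leq\; C_{N}\,|h(x)-h(y)|^{-N} \;\leq\; C_{N}'\,(1+d_X(x,y))^{-N},$$
where the last inequality uses Proposition \ref{P:add_func}. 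Continuity of $K_{s,\phi,\theta}$ off the diagonal is immediate from dominated convergence once continuity of $\tilde{t}_s$ in $(x,y)$ is known.

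The main obstacle is the second step: ensuring that the spectral projector $P(k+i\beta_s)$, the reduced resolvent, and their kernel representations all depend jointly smoothly on $(k,s)$ with bounds \emph{uniform} in $s\in\mathbb{S}^{d-1}$. This requires combining Proposition \ref{analytic_perturbation} (which gives analyticity in $z$ on a single neighborhood $V\supset k_0$) with the compactness of $\{k_{0}+i\beta_{s}: s\in\mathbb{S}^{d-1}\}\subset V$ guaranteed by \mref{E:beta_s_in_V}, together with standard elliptic regularity on $M$ to convert operator-norm bounds between Sobolev spaces into pointwise bounds on kernels away from the diagonal.
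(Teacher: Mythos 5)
Your overall architecture (oscillatory integral over $\mathcal{O}$, quasi-periodicity killing the boundary terms, non-stationary phase in $k$, conversion to $d_X$-decay via Proposition \ref{P:add_func}) is the same as the paper's, and your kernel formula matches Lemma \ref{kernel_formula}. However, there is a genuine gap in your second step. You claim that because $d_X(\supp\phi,\supp\theta)>R_h$, the pair $(\pi(x),\pi(y))$ stays ``uniformly off the singular locus,'' so that the amplitude $\tilde t_s(k;x,y)=K_s(k,\pi(x),\pi(y))$ is smooth and bounded. This is false: the hypothesis separates the supports in $X$, not their projections in $M$. If $y=g\cdot x$ with $|g|$ large (exactly the regime relevant for the Green's function asymptotics), then $\pi(x)=\pi(y)$, so $(\pi(x),\pi(y))$ lies \emph{on} the diagonal of $M\times M$, where the Schwartz kernel of the order $-2$ operator $T_s(k)$ is singular (like $d_M(p,q)^{2-n}$ for $n\ge 3$). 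Thus the integrand in your representation is not even pointwise finite there, and the ``smooth amplitude + integration by parts'' scheme cannot be run as written; the formula of Lemma \ref{kernel_formula} only makes sense as an oscillatory integral, with the cancellation coming from the $k$-integration.

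What the paper does instead, and what your argument is missing, is Proposition \ref{kernel_estimate}: one does not need (and cannot have) continuity of $K_s(k,\cdot,\cdot)$ itself, but only of $D_k^{\alpha}K_s(k,\cdot,\cdot)$ for $|\alpha|\ge n$. The point is that each $k$-derivative of $T_s(k)$ \emph{lowers the order on the Sobolev scale by one} (Definition \ref{S_l}); this is established by constructing a parameter-dependent parametrix (Theorem \ref{parametrices_manifold}) and showing $\{T_s(k)\}\in\mathcal{S}_{-2}(M)$ modulo $\mathcal{S}_{-\infty}(M)$ (Proposition \ref{P:pseudo_ops}). Once $D_k^{\alpha}T_s(k)$ maps $H^{-m}(M)\to H^{m}(M)$ for some $m>n/2$, Agmon's lemma (Lemma \ref{L:kernel_estimate_agmon}) converts the operator bound into a \emph{continuous, uniformly bounded} kernel on all of $M\times M$, including the diagonal. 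Only then is the integration-by-parts identity \mref{int_by_parts} an absolutely convergent integral, yielding the decay $|h(x)-h(y)|^{-N}$. So the missing idea is precisely the trade-off ``$k$-derivatives gain Sobolev order,'' implemented through the parametrix construction; without it, your step from operator-norm regularity to pointwise kernel bounds does not go through at the points $\pi(x)=\pi(y)$ that you cannot avoid.
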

Let $K_s(k,x,y)$ be the Schwartz kernel of the operator $T_s(k)$. 
The next lemma is an analog for abelian coverings of \cite[Lemma 7.15]{KKR}.
\begin{lemma}
\label{kernel_formula}
Let $\phi$ and $\theta$ be any two compactly supported functions on $X$ such that $\supp(\phi) \cap \supp(\theta) =\emptyset$. Then the following identity holds for any $(x,y) \in X \times X$:
\begin{equation*}
K_{s,\phi,\theta}(x,y)=\frac{1}{(2\pi)^d}\int_{\mathcal{O}} e^{ik\cdot (h(x)-h(y))}\phi(x)K_s(k,\pi(x),\pi(y))\theta(y)dk,
\end{equation*}
where $\pi$ is the covering map $X \rightarrow M$.
\end{lemma}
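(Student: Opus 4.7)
The plan is to reduce the claim to a direct unfolding computation using the definition of $T_s$ through its Floquet decomposition. The only subtlety is that $K_s(k,\cdot,\cdot)$ is a priori a distribution on $M \times M$; the disjoint supports hypothesis ensures we only ever evaluate it off the diagonal, where it is smooth (since $T_s(k)$ is pseudodifferential, smoothing off-diagonal), so that all the integrals below make honest sense.

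First I would fix a test function $f \in C_c^\infty(X)$ and compute $(T_s(\theta f))(x)$ by inserting the definition $T_s = \mathcal{F}^{-1}\bigl((2\pi)^{-d}\int^\oplus_{\mathcal{O}} T_s(k)\,dk\bigr)\mathcal{F}$ and the inversion formula \eqref{E:inversion1}. This gives
\begin{equation*}
(T_s(\theta f))(x) = \frac{1}{(2\pi)^d}\int_{\mathcal{O}} e^{ik\cdot h(x)}\bigl(T_s(k)\widehat{\theta f}(k,\cdot)\bigr)(\pi(x))\,dk.
\end{equation*}
Using that $K_s(k,\cdot,\cdot)$ is the Schwartz kernel of $T_s(k)$ on $M$, and the standard identity $\int_M g\,d\mu_M = \int_{\overline{F(M)}} g\circ\pi\, d\mu_X$ from \eqref{integralF(M)}, the inner object equals
\begin{equation*}
\int_{\overline{F(M)}} K_s(k,\pi(x),\pi(y))\,\widehat{\theta f}(k,\pi(y))\,d\mu_X(y).
\end{equation*}

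Next, I expand $\widehat{\theta f}(k,\pi(y)) = \sum_{\gamma\in\mathbb Z^d}(\theta f)(\gamma\cdot y)\,e^{-ik\cdot h(\gamma\cdot y)}$ (a finite sum by compact support of $\theta f$), swap sum and integral, and perform the change of variable $z=\gamma\cdot y$ in each summand. Using $G$-invariance of $\mu_X$, the $G$-invariance of $\pi$ (so $K_s(k,\pi(x),\pi(\gamma^{-1}\cdot z))=K_s(k,\pi(x),\pi(z))$), and the tiling $X = \bigsqcup_\gamma \gamma\cdot\overline{F(M)}$ (up to a null set), the $\gamma$-sum collapses into a single integral over $X$:
\begin{equation*}
\int_X K_s(k,\pi(x),\pi(z))\,\theta(z)f(z)\,e^{-ik\cdot h(z)}\,d\mu_X(z).
\end{equation*}
Substituting back, Fubini (justified since $\theta f$ has compact support, $\mathcal{O}$ is compact, and $K_s(k,\cdot,\cdot)$ is smooth on a neighborhood of $\pi(\supp\phi)\times\pi(\supp\theta)$ by the disjoint-support hypothesis) yields
\begin{equation*}
(\phi T_s\theta f)(x) = \int_X \left[\frac{1}{(2\pi)^d}\int_{\mathcal O} e^{ik\cdot(h(x)-h(z))}\phi(x)K_s(k,\pi(x),\pi(z))\theta(z)\,dk\right]f(z)\,d\mu_X(z),
\end{equation*}
and reading off the kernel gives the claimed formula.

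The main conceptual step is the unfolding in the previous paragraph, where the Floquet sum over $\mathbb Z^d$ converts an integral over the compact fundamental domain $\overline{F(M)}$ (weighted against a periodic function) into an integral over all of $X$ (weighted against a compactly supported one); the mild technical point is justifying that $K_s(k,\pi(x),\pi(z))$ is a genuine smooth function on $\supp\phi\times\supp\theta$ so that all kernel manipulations are legitimate. The former is a standard consequence of $G$-invariance, while the latter follows because $T_s(k)$, as constructed in Theorem \ref{microlocal}, is a pseudodifferential operator on $M$ whose kernel is smooth off the diagonal, and the distance assumption $d_X(\supp\phi,\supp\theta)>R_h$ keeps $\pi(x)\neq\pi(z)$ on the relevant locus.
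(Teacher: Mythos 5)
Your argument is essentially the paper's: the paper first decomposes $\phi$ and $\theta$ into finitely many pieces whose supports inject into $M$ under $\pi$ (the class $\mathcal{P}$) and then invokes the flat-case computation of \cite[Lemma 7.15]{KKR}, which is exactly the unfolding you carry out; doing the computation directly, without that preliminary reduction, is legitimate, since the Floquet sum defining $\widehat{\theta f}(k,\cdot)$ is finite and the tiling $X=\bigcup_\gamma \gamma\cdot\overline{F(M)}$ together with \mref{integralF(M)} converts the $M$-integral into the $X$-integral for any compactly supported $\theta f$.

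One justification you give is incorrect, however. You claim that the separation of $\supp(\phi)$ and $\supp(\theta)$ (even by a distance greater than $R_h$) forces $\pi(x)\neq\pi(z)$ on $\supp(\phi)\times\supp(\theta)$. On a covering this fails: $x$ and $\gamma\cdot x$ can be arbitrarily far apart in $X$ yet lie in the same fiber, so the set where $\pi(x)=\pi(z)$ can very well meet $\supp(\phi)\times\supp(\theta)$, and ``$K_s(k,\cdot,\cdot)$ is smooth off the diagonal of $M\times M$'' does not cover all the points you need. (The paper's class $\mathcal{P}$ does not remove this either.) This does not sink the proof: $T_s(k)$ is, modulo a smoothing family, pseudodifferential of order $-2$ (Proposition \ref{P:pseudo_ops}), so $K_s(k,\cdot,\cdot)$ is locally integrable on $M\times M$; Fubini therefore still applies and the displayed formula holds as an identity of Schwartz kernels, i.e.\ for a.e.\ $(x,y)$, which is all that is used. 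The pointwise continuity needed later is then extracted from $D^\alpha_k K_s$ with $|\alpha|\geq n$ (Proposition \ref{kernel_estimate}), which is continuous across the diagonal. You should replace the ``$\pi(x)\neq\pi(z)$'' remark with this local-integrability justification.
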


\begin{proof}
Let $\mathcal{P}$ be the subset of $C^{\infty}_c(X)$ consisting of all functions $\psi$ whose support is connected, and if $\gamma \in G$ such that $\displaystyle \supp{\psi^{\gamma}} \cap \supp{\psi} \neq \emptyset$ then $\gamma$ is the identity element of the deck group $G$. Since any compactly supported function on $X$ can be decomposed as a finite sum of functions in $\mathcal{P}$, we can assume that both $\phi$ and $\theta$ belong to $\mathcal{P}$. Then the rest is similar to the proof of \cite[Lemma 7.15]{KKR}.
\end{proof}

Another key ingredient in proving Theorem \ref{microlocal_2} is the following result:
\begin{prop}
\label{kernel_estimate}
Let $\displaystyle \dim{M}=n$. Then for any multi-index $\alpha$ such that $|\alpha|\geq n$, $D^{\alpha}_k K_s(k,x,y)$ is a continuous function on $M \times M$. Furthermore, we have
\begin{equation*}
\sup_{(s,k,x,y) \in \mathbb{S}^{d-1} \times \mathcal{O} \times M \times M}|D_{k}^{\alpha}K_s(k,x,y)|< \infty.
\end{equation*}
\end{prop}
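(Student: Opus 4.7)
The strategy is to show $D^{\alpha}_{k}T_{s}(k)$ is a classical pseudodifferential operator on $M$ of order $\leq -2-|\alpha|$, with all symbol seminorms bounded uniformly in $(s,k)\in\mathbb{S}^{d-1}\times\mathcal{O}$. The conclusion then follows from the standard fact that a classical pseudodifferential operator on a compact $n$-manifold of order strictly below $-n$ has a continuous Schwartz kernel whose $C^{0}$-norm is controlled by finitely many symbol seminorms: for $|\alpha|\geq n$ the order is $\leq -2-n<-n$, so $D^{\alpha}_{k}K_{s}(k,x,y)$ is continuous on $M\times M$ and uniformly bounded.

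First I would rewrite $T_{s}(k)$ as a single contour integral of resolvents. A direct computation, using $L_{s}(k)\phi=\lambda_{j}(k+i\beta_{s})\phi$ for $\phi\in R(P(k+i\beta_{s}))$, gives for $k\neq k_{0}$
\[
\bigl((L_{s}(k)-\lambda)|_{R(Q(k+i\beta_{s}))}\bigr)^{-1}Q(k+i\beta_{s})=(L_{s}(k)-\lambda)^{-1}-\frac{P(k+i\beta_{s})}{\lambda_{j}(k+i\beta_{s})-\lambda},
\]
and by Cauchy's theorem this equals $\frac{1}{2\pi i}\oint_{\Gamma}\frac{(L_{s}(k)-z)^{-1}}{z-\lambda}\,dz$, where $\Gamma=\{|z-\lambda|=\epsilon_{0}/2\}$. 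By Property (P2) of Proposition~\ref{analytic_perturbation}, this contour encloses only $\lambda$ and the simple eigenvalue $\lambda_{j}(k+i\beta_{s})$, and no other spectrum of $L_{s}(k)$. After shrinking $\supp\eta$ if necessary, $\Gamma$ can be chosen independently of $(s,k)$. Thus
\[
T_{s}(k)=(1-\eta(k))(L_{s}(k)-\lambda)^{-1}+\frac{\eta(k)}{2\pi i}\oint_{\Gamma}\frac{(L_{s}(k)-z)^{-1}}{z-\lambda}\,dz,
\]
with the first summand making sense because, by Proposition~\ref{singularity}, $\lambda$ is in the resolvent set of $L_{s}(k)$ on $\supp(1-\eta)$.

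Next I would analyze $(L_{s}(k)-z)^{-1}$ pseudodifferentially. By Lemma~\ref{fiber_op_structure}, $L_{s}(k)$ is elliptic with $k$-independent principal symbol, and $\partial^{\beta}_{k}L_{s}(k)$ is a differential operator of order $\leq\max(2-|\beta|,0)$. Hence the resolvent is a classical PsiDO of order $-2$, and iterating $\partial_{k_{j}}(L_{s}(k)-z)^{-1}=-(L_{s}(k)-z)^{-1}(\partial_{k_{j}}L_{s}(k))(L_{s}(k)-z)^{-1}$ via the Leibniz rule expresses $D^{\alpha}_{k}(L_{s}(k)-z)^{-1}$ as a finite sum of products $R\,\partial^{\alpha_{1}}_{k}L_{s}(k)\,R\cdots\partial^{\alpha_{m}}_{k}L_{s}(k)\,R$ with $R=(L_{s}(k)-z)^{-1}$, $|\alpha_{i}|\in\{1,2\}$, and $\sum_{i}|\alpha_{i}|=|\alpha|$. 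Order counting gives total order $-2(m+1)+\sum_{i}(2-|\alpha_{i}|)=-2-|\alpha|$. Integrating against the bounded contour $\Gamma$ and applying Leibniz to the cut-off factors $\eta$, $1-\eta$ preserves this bound, so $D^{\alpha}_{k}T_{s}(k)$ is a PsiDO of order $\leq -2-|\alpha|$.

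Uniformity in $(s,k)$ follows because $\Gamma$ is a fixed compact contour, the family $\{L_{s}(k)-z\}_{(s,k,z)\in\mathbb{S}^{d-1}\times\mathcal{O}\times\Gamma}$ is uniformly elliptic, and $(L_{s}(k)-z)^{-1}$ is uniformly bounded on this compact parameter set; hence all symbol seminorms of $D^{\alpha}_{k}T_{s}(k)$ are uniformly bounded. The main technical point is the uniformity across $s\in\mathbb{S}^{d-1}$: one must check that the contour $\Gamma$ and the spectral-gap in (P2) can be taken uniformly in $s$, which reduces to the joint continuity in $(s,k)$ of $\lambda_{j}(k+i\beta_{s})$ and $P(k+i\beta_{s})$ guaranteed by Proposition~\ref{analytic_perturbation} together with the continuity of $s\mapsto\beta_{s}$ and the compactness of $\mathbb{S}^{d-1}\times\overline{B(k_{0},r_{0})}$.
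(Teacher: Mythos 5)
Your argument is correct, and it reaches the conclusion by a genuinely different route than the paper. The paper never inverts $L_s(k)-z$ along a contour: instead it starts from the algebraic identity $T_s(k)(L_s(k)-\lambda)=I-\eta(k)P(k+i\beta_s)$ and constructs a right \emph{parametrix} $A_s(k)$ for $L_s(k)-\lambda$ on $M$ (Theorem \ref{parametrices_manifold}, obtained by gluing local toroidal parametrices of the auxiliary fourth-order operator $(L_s(k)-\lambda)(L_s(k)-\lambda)^T\nu+(1-\nu)\Delta^2$), which yields $T_s(k)=A_s(k)-\eta(k)P(k+i\beta_s)A_s(k)+T_s(k)R_s(k)$ with the last term smoothing (Proposition \ref{P:pseudo_ops}). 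Moreover, the paper tracks only Sobolev mapping properties (the classes $\mathcal{S}_{\ell}(M)$, in which the gain of $|\alpha|$ orders under $D^\alpha_k$ is built into the toroidal symbol classes) and extracts kernel continuity and the uniform bound from an Agmon-type Sobolev embedding lemma (Lemma \ref{L:kernel_estimate_agmon}, Corollary \ref{kernel_in_S}), rather than from absolute convergence of the symbol integral. Your version replaces both ingredients: the holomorphic functional calculus absorbs the pole at $\lambda$ and the eigenvalue $\lambda_j(k+i\beta_s)$ into one expression, and the resolvent identity gives the order gain $-|\alpha|$ directly on the genuine inverse. What your route buys is brevity and transparency; what the paper's route buys is that it never needs invertibility of $L_s(k)-z$, and its Sobolev-scale bookkeeping feeds directly into the kernel lemma. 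The one place where your write-up hides real work is the assertion that $(L_s(k)-z)^{-1}$ is a pseudodifferential operator of order $-2$ with symbol seminorms uniform over the compact parameter set $\mathbb{S}^{d-1}\times\supp\eta\times\Gamma$: proving that uniformity amounts to carrying out a parameter-dependent parametrix construction in local charts, i.e., essentially the content of Theorem \ref{parametrices_manifold}, so it should be cited or proved rather than taken for granted. Two small points to make explicit: the displayed identity preceding the contour formula holds only for $k\neq k_0$, while the contour integral itself extends to $k=k_0$ by continuity (or directly, since the restriction to $R(Q(k_0+i\beta_s))$ is invertible); and on $\supp(1-\eta)$ one should invoke Proposition \ref{singularity} together with compactness to get a uniform bound on $(L_s(k)-\lambda)^{-1}$, as well as the conjugation relation at the boundary of $\mathcal{O}$ used in \mref{perboundary} so that $k$-derivatives make sense there.
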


Before providing the proof of Proposition \ref{kernel_estimate}, let us use it to prove Theorem \ref{microlocal_2}.
\begin{prmicrolocal}
The exponential function $e^{2\pi i\gamma \cdot h(x)}$ is $G$-periodic for any $\gamma \in G$, and hence, it is also defined on $M$. We use the same notation $e^{2\pi i\gamma \cdot h(x)}$ for the corresponding multiplication operator on $L^2(M)$. Then we can write
$$T_s(k+2\pi \gamma)=e^{-2\pi i\gamma \cdot h(x)}T_s(k)e^{2\pi i\gamma \cdot h(x)}, \quad (k,\gamma) \in \mathcal{O} \times G$$
It follows that for any multi-index $\alpha$,
\begin{equation}
\label{perboundary}
e^{i(k+2\pi \gamma)\cdot(h(x)-h(y))}\nabla_{k}^{\alpha}K_s(k+2\pi \gamma,\pi(x),\pi(y))=e^{ ik\cdot(h(x)-h(y))}\nabla_{k}^{\alpha}K_s(k,\pi(x),\pi(y)).
\end{equation}
Now we apply integration by parts to the identity in Lemma \ref{kernel_formula} to obtain
\begin{equation}
\label{int_by_parts}
 i^{N}(h(x)-h(y))^{\alpha}K_{s,\phi,\theta}(x,y)=\frac{\phi(x)\theta(y)}{(2\pi)^d}\int_{\mathcal{O}} e^{ik\cdot (h(x)-h(y))}\nabla^{\alpha}_{k} K_s(k,\pi(x), \pi(y))dk.
\end{equation}
Note that due to \mref{perboundary}, when using integration by parts, we do not have any boundary term. 
If $|\alpha|\geq n$, then the above integral is uniformly bounded in $(s,x,y)$ by Proposition \ref{kernel_estimate}. 
When $\phi(x)\theta(y) \neq 0$, we have $d_X(x,y)>R_h$ and so, $h(x) \neq h(y)$ by Proposition \ref{P:add_func}.
Therefore, the kernel $K_{s,\phi, \theta}(x,y)$ is continuous on $X\ \times X$. Now fix $(x,y)$ such that $\phi(x)\theta(y) \neq 0$. Next we choose $\ell_0 \in \{1, \dots, d\}$ such that $|h_{\ell_0}(x)-h_{\ell_0}(y)|=\max_{1 \leq \ell \leq d}|h_{\ell}(x)-h_{\ell}(y)|>0$. Fix any $N \geq n$. Let $\alpha=(\alpha_1,\dots,\alpha_d)=N(\delta_{1,\ell_0},\dots,\delta_{d,\ell_0})$, where $\delta_{\cdot,\cdot}$ is the Kronecker delta. Then $|(h(x)-h(y))^{\alpha}|^{-1}=|h_{\ell_0}(x)-h_{\ell_0}(y)|^{-N} \leq d^{N/2}|h(x)-h(y)|^{-N}$. Consequently, from \mref{int_by_parts}, we derive a positive constant $C$ (independent of $x,y$) such that
$$\sup_{s \in \mathbb{S}^{d-1}}|K_{s,\phi, \theta}(x,y)| \leq C|\phi(x)\theta(y)| |(h(x)-h(y))^{\alpha}|^{-1}\leq  Cd^{N/2}\|\phi\|_{\infty}\|\theta\|_{\infty} |h(x)-h(y)|^{-N}.$$
Using Proposition \ref{P:add_func}, the above estimate becomes
\begin{equation*}
\sup_{(s,x,y) \in \mathbb{S}^{d-1}\times X \times X}(1+d_X(x,y))^{N}|K_{s,\phi, \theta}(x,y)|<\infty,
\end{equation*}
which yields the conclusion.
\end{prmicrolocal}

Back to Proposition \ref{kernel_estimate}, we first introduce several notions.
Let $\mathcal{S}(M)$ be the space of Schwartz functions on $M$. The first notion is about the order of an operator on the Sobolev scale (see e.g., \cite[Definition 5.1.1]{Ruzh-Turu}).
\begin{defi}
\label{order_sobolev}
A linear operator $A: \mathcal{S}(M) \rightarrow \mathcal{S}(M)$ is said to be \textbf{of order $\ell \in \mathbb{R}$ on the Sobolev scale $(H^{m}(M))_{m \in \mathbb{R}}$} if for every $m \in \mathbb{R}$ it can be extended to a bounded linear operator $A_{m,m-\ell} \in B(H^{m}(M), H^{m-\ell}(M))$. In this situation, we denote by the same notation $A$ any of the operators $A_{m,m-\ell}$.

A typical example of an operator of order $\ell$ on the Sobolev scale is any pseudodifferential operator of order $\ell$ acting on $M$.
\end{defi}

\begin{defi}
\label{S_l}
Given $\ell \in \mathbb{R}$. We denote by $\mathcal{S}_{\ell}(M)$ the set consisting of families of operators $\{B_{s}(k)\}_{(s,k)\in \mathbb{S}^{d-1} \times \mathcal{O}}$ acting on $M$ so that the following properties hold:
\begin{itemize}
\item
For any $(s,k) \in \mathbb{S}^{d-1} \times \mathcal{O}$, $B_s(k)$ is of order $\ell$ on the Sobolev scale $(H^{p}(M))_{p \in \mathbb{R}}$.

\item For any $ p \in \mathbb{R}$, the operator $B_s(k)$ is smooth in $k$ as a $\displaystyle B(H^{p}(M), H^{p-\ell}(M))$-valued function. 

\item For any multi-index $\alpha$, $D^{\alpha}_k B_s(k)$ is of order $\ell-|\alpha|$ on the Sobolev scale $(H^{p}(M))_{p \in \mathbb{R}}$ and moreover, for any $p \in \mathbb{R}$, the following uniform condition holds 
$$\sup_{(s,k)\in \mathbb{S}^{d-1} \times \mathcal{O}}\|D^{\alpha}_k B_s(k)\|_{B(H^{p}(M), H^{p-\ell+|\alpha|}(M))}<\infty.$$
\end{itemize}
\end{defi}

It is worth giving a separate definition for the class $\displaystyle \mathcal{S}_{-\infty}(M)=\cap_{\ell \in \mathbb{R}}\mathcal{S}_{\ell}(M)$ as follows:
\begin{defi}
\label{class_S}
We denote by $\mathcal{S}_{-\infty}(M)$ the set consisting of families of smoothing operators $\{U_{s}(k)\}_{(s,k)\in \mathbb{S}^{d-1} \times \mathcal{O}}$ acting on $M$ so that the following properties hold:
\begin{itemize}
\item For any $ m_1, m_2 \in \mathbb{R}$, the operator $U_s(k)$ is smooth in $k$ as a $\displaystyle B(H^{m_1}(M), H^{m_2}(M))$-valued function. 

\item The following uniform condition holds for any multi-index $\alpha$:
$$\sup_{(s,k)\in \mathbb{S}^{d-1} \times \mathcal{O}}\|D^{\alpha}_k U_s(k)\|_{B(H^{m_1}(M), H^{m_2}(M))}<\infty.$$
\end{itemize}
\end{defi}

We now introduce the class $\tilde{S}^{\ell}(\mathbb{T}^n)$ of parameter-dependent toroidal symbols on the $n$-dimensional torus \footnote{Note that for the case $n=d$, the class of parameter-dependent toroidal symbols was introduced in \cite[Definition 7.3]{KKR}. Nevertheless, the techniques and results on parameter-dependent toroidal pseudodifferential operators obtained in \cite[Section 8]{KKR} still work similarly for the general case $n \geq 1$.}. 
\begin{defi}
\label{parameter_class_symbols}
The parameter-dependent class $\tilde{S}^{\ell}(\mathbb{T}^n)$ consists of symbols $\sigma(s,k;x,\xi)$ satisfying the following conditions:
\begin{itemize}
\item For each $(s,k) \in \mathbb{S}^{d-1} \times \mathcal{O}$, the function $\sigma(s,k;\cdot,\cdot)$ is a symbol of order $\ell$ on the torus $\mathbb{T}^n$ (see e.g., \cite[Definition 7.3]{KKR}).

\item Consider any multi-indices $\alpha, \beta, \gamma$ and any $s \in\mathbb{S}^{d-1}$. Then the function $\sigma(s,\cdot;\cdot,\cdot)$ is smooth on $\mathcal{O} \times \mathbb{T}^n \times \mathbb{R}^n$. Furthermore, for some positive constant $C_{\alpha \beta \gamma}$ (independent of $s$,$k$,$x$,$\xi$), we have
\begin{equation*}
\label{E:parameter_class_symbols}
\sup_{s \in \mathbb{S}^{d-1}}|D^{\alpha}_{k}D^{\beta}_{\xi}D^{\gamma}_x \sigma(s,k;x,\xi)| \leq C_{\alpha \beta \gamma} (1+|\xi|)^{m-|\alpha|-|\beta|}.
\end{equation*} 
\end{itemize}

We also define
$$\tilde{S}^{-\infty}(\mathbb{T}^n):=\bigcap_{\ell \in \mathbb{R}}\tilde{S}^{\ell}(\mathbb{T}^n).$$
\end{defi}
The class of pseudodifferential operators on the torus $\mathbb{T}^n$ is also provided in the next definition.
\begin{defi}
\label{periodic_pseudo}
\begin{itemize}
\item
Given a symbol $\sigma(x,\xi)$ of order $\ell$ on the torus $\mathbb{T}^n$,  the corresponding periodic pseudodifferential operator $Op(\sigma)$ is defined by
\begin{equation*}
\label{E:def_pseudo}
\left(Op(\sigma)f\right)(x):=\sum_{\xi \in \mathbb{Z}^n}\sigma(x,\xi)\tilde{f}(\xi)e^{2\pi i \xi \cdot x},
\end{equation*}
where $\tilde{f}(\xi)$ is the Fourier coefficient of $f$ at $\xi$. 

\item For any $\ell \in \mathbb{R} \cup \{-\infty\}$, the set of all families of periodic pseudodifferential operators $\{Op(\sigma(s,k;\cdot,\cdot))\}_{(s,k) \in \mathbb{S}^{d-1} \times \mathcal{O}}$, where $\sigma$ runs over the class $\tilde{S}^{\ell}(\mathbb{T}^n)$, is denoted by $Op(\tilde{S}^{\ell}(\mathbb{T}^n))$.
\end{itemize}
\end{defi}

\begin{remarks}
\label{rem_class}
\begin{enumerate}[(a)]
\item
It is straightforward to check from definitions and the Leibniz rule that for any $\ell_1, \ell_2 \in \mathbb{R} \cup \{-\infty\}$, if $\{A_s(k)\}_{(s,k) \in \mathbb{S}^{d-1} \times \mathcal{O}}$, $\{B_s(k)\}_{(s,k) \in \mathbb{S}^{d-1} \times \mathcal{O}}$ are two families of operators in the class $\mathcal{S}_{\ell_1}(M)$
and $\mathcal{S}_{\ell_2}(M)$, respectively, then the family of operators $\{A_s(k)B_s(k)\}_{(s,k) \in \mathbb{S}^{d-1} \times \mathcal{O}}$ belongs to $\mathcal{S}_{\ell_1+\ell_2}(M)$.

\item If the family of operators $\{B_s(k)\}_{(s,k) \in \mathbb{S}^{d-1} \times \mathcal{O}}$ belongs to the class $\mathcal{S}_{\ell}(M)$ then by definition, the family of operators $\{D^{\alpha}_k B_s(k)\}_{(s,k) \in \mathbb{S}^{d-1} \times \mathcal{O}}$ is in the class $\mathcal{S}_{\ell-|\alpha|}(M)$ for any multi-index $\alpha$.

\item 
$\mathcal{S}_{-\infty}(\mathbb{T}^n)$ is the class $\mathcal{S}$ introduced in \cite[Definition 7.8]{KKR}.

\item
Given a family of symbols $\{\sigma(s,k;\cdot,\cdot)\}_{(s,k) \in \mathbb{S}^{d-1} \times \mathcal{O}} \in \tilde{S}^{\ell}(\mathbb{T}^n)$, it follows from definitions here and boundedness on Sobolev spaces of periodic pseudodifferential operators (see e.g., \cite[Corollary 4.8.3]{Ruzh-Turu}) that the corresponding family of periodic pseudodifferential operators $\{Op(\sigma(s,k;\cdot,\cdot))\}_{(s,k) \in \mathbb{S}^{d-1} \times \mathcal{O}}$ is in the class $\mathcal{S}_{\ell}(\mathbb{T}^n)$. In other words, $Op(\tilde{S}^{\ell}(\mathbb{T}^n)) \subseteq \mathcal{S}_{\ell}(\mathbb{T}^n)$ for any $\ell \in \mathbb{R} \cup \{-\infty\}$.

\end{enumerate}
\end{remarks}

Roughly speaking, the next lemma says that we can deduce regularity of the Schwartz kernel of an operator provided that it acts ``nicely" on Sobolev spaces.
\begin{lemma}
\label{L:kernel_estimate_agmon}
Let $A$ be a bounded operator in $L^2(M)$, where $M$ is a compact $n$-dimensional manifold. Suppose that the range of $A$ is contained in $H^m(M)$, where $m>n/2$ and in addition,
\begin{equation}
\label{sobolev_est}
\|Af\|_{H^m(M)} \leq C\|f\|_{H^{-m}(M)}
\end{equation}
for all $f \in L^2(M)$.

Then $A$ is an integral operator whose kernel $K_A(x,y)$ is a continuous function on $M \times M$. In addition, the kernel of $A$ satisfies the following estimate:
\begin{equation}
\label{E:kernel_estimate_agmon}
|K_A(x,y)| \leq \gamma_0 C,
\end{equation}
where $\gamma_0$ is a constant depending only on $n$ and $m$.
\end{lemma}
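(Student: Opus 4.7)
The plan is to exploit the given estimate to extend $A$ to a bounded operator $A\colon H^{-m}(M)\to H^m(M)$ of norm at most $C$, and then realize its kernel via duality with Dirac masses. Since $m>n/2$, the Sobolev embedding theorem provides a continuous inclusion $H^m(M)\hookrightarrow C(M)$ with some embedding constant $\gamma_1=\gamma_1(n,m)$; by duality, every point mass $\delta_y$ lies in $H^{-m}(M)$ with $\|\delta_y\|_{H^{-m}(M)}\leq \gamma_1$.

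First, I would define $K_A(x,y):=(A\delta_y)(x)$. This is well defined because $A\delta_y\in H^m(M)\hookrightarrow C(M)$ for each $y\in M$. The desired pointwise estimate is then immediate: $|K_A(x,y)|\leq \gamma_1 \|A\delta_y\|_{H^m}\leq \gamma_1 C \|\delta_y\|_{H^{-m}}\leq \gamma_1^2 C$, so \mref{E:kernel_estimate_agmon} holds with $\gamma_0:=\gamma_1^2$. To verify that this $K_A$ really is the Schwartz kernel of the original $L^2$-bounded operator $A$, I would show that for any $f\in L^2(M)\subset H^{-m}(M)$ the identity $(Af)(x)=\int_M K_A(x,y)f(y)\,d\mu_M(y)$ holds, via a standard density/Fubini argument by first approximating $\delta_y$ by an approximate identity $\{\rho_y^\varepsilon\}\subset L^2(M)$ and then using continuity of $A\colon H^{-m}\to H^m$ together with the embedding $H^m\hookrightarrow C(M)$ to pass to the limit.

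For the joint continuity of $K_A$ on $M\times M$, I would argue as follows. The family $\{A\delta_y\}_{y\in M}$ is bounded in $H^m(M)$ by $\gamma_1 C$. Since the inequality $m>n/2$ is strict, Sobolev embedding upgrades to $H^m(M)\hookrightarrow C^{0,\alpha}(M)$ for some $\alpha\in(0,\min(1,m-n/2))$, so this family is equi-H\"older continuous. Consequently, $x\mapsto K_A(x,y)$ is continuous uniformly in $y$. Applying identical reasoning to the adjoint $A^*$, which inherits the same $H^{-m}\to H^m$ bound and whose kernel is $\overline{K_A(y,x)}$, yields uniform continuity in $y$. Combining these two uniform-continuity estimates gives joint continuity of $K_A$ on $M\times M$.

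The step I anticipate being the main technical obstacle is identifying $K_A$ with the distribution kernel of the operator $A$ originally given on $L^2$: one must handle the interpretation of the integrals with care because $\delta_y$ is a distribution rather than an $L^2$ function. The other subtlety, namely extracting joint (rather than merely separate) continuity, is handled cleanly because the hypothesis $m>n/2$ is strict and thus gives equi-H\"older regularity of the family $\{A\delta_y\}$, which is strictly stronger than equi-continuity and yields the uniform modulus needed to conclude joint continuity.
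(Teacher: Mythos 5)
Your proof is correct, but it takes a different route from the paper. The paper's own argument is a two-line reduction: it cites Agmon's Euclidean version of this statement (\cite[Lemma 2.2]{Ag}), fixes a finite atlas $\{U_p\}$ with a subordinate partition of unity $\{\varphi_p\}$, writes $A=\sum_{p,q}\varphi_p A\varphi_q$, and applies the Euclidean lemma to each piece $\varphi_p A\varphi_q$ after transporting the estimate \mref{sobolev_est} to the charts. You instead prove the statement directly on $M$: you extend $A$ by density to a bounded map $H^{-m}(M)\to H^{m}(M)$ of norm $\le C$, observe that $m>n/2$ puts $\delta_y$ in $H^{-m}(M)$ uniformly, define $K_A(x,y)=(A\delta_y)(x)$, and get the bound and joint continuity from the embedding $H^m(M)\hookrightarrow C^{0,\alpha}(M)$ (your detour through $A^*$ for continuity in $y$ is sound, though you could avoid it by noting directly that $y\mapsto\delta_y$ is norm-continuous into $H^{-m}(M)$, again by equi-H\"older continuity of the unit ball of $H^m$). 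In effect you are reproving Agmon's lemma in the manifold setting rather than reducing to it, which makes the argument self-contained at the cost of having to justify the identification of $(A\delta_y)(x)$ with the Schwartz kernel — the step you correctly flag as the delicate one, and which is handled cleanly by writing $f=\int_M f(y)\delta_y\,d\mu_M(y)$ as an $H^{-m}$-valued Bochner integral and using that point evaluation is continuous on $H^m$. One cosmetic caveat shared by both proofs: your $\gamma_0=\gamma_1^2$ involves the Sobolev embedding constant of the fixed compact manifold $M$ (just as the paper's constant involves the chosen charts and partition of unity), so ``depending only on $n$ and $m$'' should be read as ``and on $M$,'' which is harmless for the application.
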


\begin{proof}
For the Euclidean case, this fact is shown in \cite[Lemma 2.2]{Ag}. To prove this on a general compact manifold, we simply choose a finite cover $\mathcal{U}=\{U_p\}$ of $M$ with charts $U_p\cong \mathbb{R}^n$. Then fix a smooth partition of unity $\{\varphi_p\}$ with respect to the cover $\mathcal{U}$, i.e., $\supp{\varphi_p} \Subset U_p$. We decompose $A=\sum_{p,q}\varphi_p A \varphi_{q}$. Given any $f \in L^2(M)$, the estimate \mref{sobolev_est} will imply the estimate $\|\varphi_p A\varphi_q f\|_{H^m(U_p)} \leq C\|\varphi_q f\|_{H^{-m}(M)}\leq C\|f\|_{H^{-m}(U_q)}$ for any $p,q$. Hence, we obtain the conclusion of the lemma for the kernel of each operator $\varphi_{p}A\varphi_{q}$, and thus for the kernel of $A$ too.
\end{proof}

In what follows, we will prove a nice behavior of kernels of families of operators in the class $\mathcal{S}_{\ell}(M)$ following from an application of the previous lemma.
\begin{cor}
\label{kernel_in_S}
Assume that $\ell \in \mathbb{R} \cup \{-\infty\}$ and $\{A_s(k)\}_{(s,k)}$ is a family of operators in $\mathcal{S}_{\ell}(M)$. Let $K_{A_s}(k,x,y)$ be the Schwartz kernel of the operator $A_s(k)$. Then for any multi-index $\alpha$ satisfying $|\alpha|\geq n+\ell+2$, the kernel $D^{\alpha}_k K_{A_s}(k,x,y)$ is continuous on $M \times M$ and moreover, the following estimate holds:
$$\sup_{s,k,x,y}|D^{\alpha}_k K_{A_s}(k,x,y)|<\infty.$$
\end{cor}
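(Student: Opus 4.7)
The plan is to apply Lemma \ref{L:kernel_estimate_agmon} to the family $\{B_s(k)\}:=\{D^\alpha_k A_s(k)\}$, whose Schwartz kernel should be $D^\alpha_k K_{A_s}(k,x,y)$. First I would note that, by Remark \ref{rem_class}(b), the family $\{B_s(k)\}$ belongs to $\mathcal{S}_{\ell-|\alpha|}(M)$. In the case $\ell=-\infty$ the conclusion is immediate from the Sobolev embedding theorem, so I restrict attention to $\ell\in\mathbb{R}$.

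Next I would pick a Sobolev index $m$ with $n/2<m\le(|\alpha|-\ell)/2$; the hypothesis $|\alpha|\ge n+\ell+2$ guarantees $(|\alpha|-\ell)/2\ge n/2+1$, so the choice $m:=n/2+1$ is admissible. Since $B_s(k)$ is of order $\ell-|\alpha|$ on the Sobolev scale, it maps $H^{-m}(M)$ boundedly into $H^{-m+|\alpha|-\ell}(M)$, and because $-m+|\alpha|-\ell\ge m$, the Sobolev embedding $H^{-m+|\alpha|-\ell}(M)\hookrightarrow H^m(M)$ yields a bounded operator $B_s(k)\colon H^{-m}(M)\to H^m(M)$. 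The uniform estimate in Definition \ref{S_l} (applied with $p=-m$ and the multi-index $\alpha$) together with the continuity of the embedding produces a constant $C$, independent of $(s,k)$, with
$$\|B_s(k)f\|_{H^m(M)}\le C\|f\|_{H^{-m}(M)}\quad\text{for all }f\in L^2(M).$$

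Then I would apply Lemma \ref{L:kernel_estimate_agmon} pointwise in $(s,k)$: since $m>n/2$, the Schwartz kernel of $B_s(k)$ is a continuous function on $M\times M$ bounded by $\gamma_0 C$, with $\gamma_0$ depending only on $n$ and $m$. Taking the supremum over $(s,k)\in\mathbb{S}^{d-1}\times\mathcal{O}$ gives the claimed uniform bound on the kernel, and continuity for each $(s,k)$ is the continuity assertion of the lemma.

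The one routine thing left to verify, and what I expect to be the main technical nuisance, is the identification of the Schwartz kernel of $B_s(k)=D^\alpha_k A_s(k)$ with the distributional derivative $D^\alpha_k K_{A_s}(k,\cdot,\cdot)$. This follows by pairing with test functions $f,g\in C^\infty(M)$: the smoothness of $k\mapsto A_s(k)$ as a $B(H^p(M),H^{p-\ell}(M))$-valued function allows one to differentiate the scalar function $k\mapsto\langle A_s(k)f,g\rangle$ under the kernel integral, and uniqueness of Schwartz kernels then forces the two expressions to coincide.
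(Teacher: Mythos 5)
Your proposal is correct and follows essentially the same route as the paper: choose $m$ with $n/2<m\le(|\alpha|-\ell)/2$ (possible by the hypothesis $|\alpha|\ge n+\ell+2$), use the uniform bound from Definition \ref{S_l} to get $\sup_{s,k}\|D^{\alpha}_k A_s(k)f\|_{H^m(M)}\le C\|f\|_{H^{-m}(M)}$, and apply Lemma \ref{L:kernel_estimate_agmon}. The only addition is your explicit identification of the kernel of $D^{\alpha}_k A_s(k)$ with $D^{\alpha}_k K_{A_s}$, which the paper takes for granted.
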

\begin{proof}
For such $|\alpha|\geq n+\ell+2$, we pick some integer $m \in (n/2, (-\ell+|\alpha|)/2]$. Then by Definition \ref{S_l}, we have
$$\sup_{s,k}\|D^{\alpha}_k A_s(k)f\|_{H^m(M)} \leq C_{\alpha}\|f\|_{H^{-m}(M)}.$$
Applying Lemma \ref{L:kernel_estimate_agmon}, the estimates \mref{E:kernel_estimate_agmon} hold for kernels $D^{\alpha}_k K_{A_s}(k,x,y)$ of the operators $D^{\alpha}_k A_s(k)$ uniformly in $(s,k)$.
\end{proof}

The next theorem shows the inversion formula (i.e., the existence of a family of parametrices) in the case of $\mathbb{T}^n$. The proof of this theorem just comes straight from the proof of \cite[Theorem 8.3]{KKR}. We omit the details.
\begin{thm}
\label{parametrices_euclidean}
Let $r \in \mathbb{N}$. Consider a family of $2r^{th}$ order elliptic operators $\{(\mathcal{Q}_{s}(k)\}_{(s,k) \in \mathbb{S}^{d-1} \times \mathcal{O}}$ on the torus $\mathbb{T}^n$. Assume that this family is in $Op(\tilde{S}^{2r}(\mathbb{T}^n))$ and moreover, for each $(s,k) \in \mathbb{S}^{d-1} \times \mathcal{O}$, the symbol $\sigma(s,k;x,\xi)$ of the operator $\mathcal{Q}_{s}(k)$ is of the form
\begin{equation*}
\label{form_symbols_1}
\sigma(s,k;x,\xi)=L_0(s,k;x,\xi)+\tilde{\sigma}(s,k;x,\xi),
\end{equation*}
where the families of parameter-dependent symbols $\{L_0(s,k;x,\xi)\}_{(s,k)}$, $\{\tilde{\sigma}(s,k;x,\xi)\}_{(s,k)}$ are in the class $\tilde{S}^{2r}(\mathbb{T}^n)$ and $\tilde{S}^{2r-1}(\mathbb{T}^n)$, respectively. Moreover, suppose that there is some constant $A>0$ such that whenever $|\xi|>A$, we have
$$|L_0(s,k;x,\xi)|\geq 1, \quad (s,k,x) \in \mathbb{S}^{d-1}\times \mathcal{O} \times \mathbb{T}^n.$$ 
We call $L_0(s,k;x,\xi)$ the ``leading part" of the symbol $\sigma(s,k;x,\xi)$.

Then there exists a family of parametrices $\{\mathcal{A}_{s}(k)\}_{(s,k)}$ in $Op(\tilde{S}^{-2r}(\mathbb{T}^n))$ such that $$\mathcal{Q}_{s}(k)\mathcal{A}_{s}(k)=I-\mathcal{R}_{s}(k),$$ where $\mathcal{R}_{s}(k)$ is some family of smoothing operators in the class $\mathcal{S}_{-\infty}(\mathbb{T}^n)$.
\end{thm}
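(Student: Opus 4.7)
The plan is to follow the classical parametrix construction for elliptic pseudodifferential operators, but carried out entirely inside the parameter-dependent toroidal symbol classes $\tilde{S}^{\ell}(\mathbb{T}^n)$ so that all symbol estimates come out uniformly in $(s,k) \in \mathbb{S}^{d-1} \times \mathcal{O}$. The hypothesis that $|L_0(s,k;x,\xi)| \geq 1$ for $|\xi| > A$ (uniformly in $(s,k,x)$) is precisely what makes this uniformity possible.

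First I would pick an excision function $\chi \in C^{\infty}(\mathbb{R}^n)$ with $\chi(\xi) = 0$ for $|\xi| \leq A$ and $\chi(\xi) = 1$ for $|\xi| \geq 2A$, and define the zeroth-order parametrix symbol
\[
a_0(s,k;x,\xi) := \frac{\chi(\xi)}{L_0(s,k;x,\xi)}.
\]
Using the lower bound on $L_0$ and the Leibniz rule, together with the fact that $\{L_0\} \in \tilde{S}^{2r}(\mathbb{T}^n)$, I would verify directly that $\{a_0\} \in \tilde{S}^{-2r}(\mathbb{T}^n)$; all the required seminorm bounds are uniform in $s$ because they are uniform in $s$ for $L_0$ and $\tilde{\sigma}$ by hypothesis. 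Set $\mathcal{A}_s^{(0)}(k) := Op(a_0(s,k;\cdot,\cdot))$.

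Next I would apply the toroidal composition formula (the parameter-dependent analog of \cite[Section 8]{KKR}, which amounts to the usual Kohn--Nirenberg expansion on $\mathbb{T}^n$) to $\mathcal{Q}_s(k) \mathcal{A}_s^{(0)}(k)$. The principal term produces $\chi(\xi) \cdot I$, modulo symbols of order $-1$, so
\[
\mathcal{Q}_s(k) \mathcal{A}_s^{(0)}(k) = I - \mathcal{R}_s^{(0)}(k),
\]
where $\{\mathcal{R}_s^{(0)}(k)\} \in Op(\tilde{S}^{-1}(\mathbb{T}^n))$ (the $I-\chi$ contribution is compactly supported in $\xi$ and therefore a symbol of order $-\infty$). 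Iterating, I would set $\mathcal{A}_s^{(N)}(k) := \mathcal{A}_s^{(0)}(k) \sum_{j=0}^{N} \mathcal{R}_s^{(0)}(k)^j$; by the composition closure properties summarized in Remarks \ref{rem_class}, this is a family in $Op(\tilde{S}^{-2r}(\mathbb{T}^n))$ with
\[
\mathcal{Q}_s(k) \mathcal{A}_s^{(N)}(k) = I - \mathcal{R}_s^{(0)}(k)^{N+1},
\]
and the remainder sits in $Op(\tilde{S}^{-(N+1)}(\mathbb{T}^n))$. Finally, a standard Borel-type asymptotic summation produces a symbol $a(s,k;x,\xi) \in \tilde{S}^{-2r}(\mathbb{T}^n)$ with $a \sim \sum_j a_0 \# r_0^{\# j}$, and the resulting operator $\mathcal{A}_s(k) := Op(a)$ satisfies $\mathcal{Q}_s(k) \mathcal{A}_s(k) = I - \mathcal{R}_s(k)$ with $\{\mathcal{R}_s(k)\} \in \mathcal{S}_{-\infty}(\mathbb{T}^n)$.

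The principal obstacle, and the only thing that distinguishes this from the textbook parametrix construction, is bookkeeping: every step of the composition calculus, every Neumann-type iteration, and the final asymptotic summation must be performed so that the resulting seminorm bounds depend only on finitely many seminorms of $L_0$ and $\tilde{\sigma}$ in the parameter-dependent class, never on $s$ or $k$ individually. This is guaranteed because the hypotheses are phrased as uniform-in-$(s,k)$ symbol estimates, but it must be checked that the asymptotic summation procedure (choosing cutoffs $\chi_j(\xi/t_j)$ with $t_j \to \infty$ fast enough) respects the family seminorms of $\tilde{S}^{\ell}(\mathbb{T}^n)$. Once this is verified, the construction is identical in form to \cite[Theorem 8.3]{KKR}, with the torus $\mathbb{T}^d$ of that paper replaced by $\mathbb{T}^n$ and the parameter space $\mathcal{O}$ of quasimomenta replaced by $\mathbb{S}^{d-1} \times \mathcal{O}$.
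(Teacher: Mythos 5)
The paper itself does not prove this theorem: it states that the argument ``comes straight from the proof of \cite[Theorem 8.3]{KKR}'' and omits the details, and that proof is precisely the classical parametrix iteration you describe, carried out in the parameter-dependent toroidal classes. So your architecture --- excised reciprocal of the leading part, toroidal composition expansion, Neumann-type iteration, asymptotic summation, with every seminorm tracked uniformly in $(s,k)$ --- is the intended one, and your closing remark correctly identifies the only real work as uniformity bookkeeping.

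One step, however, does not go through as written. From $\{L_0\}\in\tilde{S}^{2r}(\mathbb{T}^n)$ and the bound $|L_0|\geq 1$ for $|\xi|>A$ alone you cannot conclude that $a_0=\chi/L_0$ lies in $\tilde{S}^{-2r}(\mathbb{T}^n)$: the quotient rule only gives $|\partial_\xi(\chi/L_0)|\leq C|\partial_\xi L_0|/|L_0|^2\leq C(1+|\xi|)^{2r-1}$, which is not even an order-zero symbol estimate, let alone order $-2r-1$. What the construction actually requires is the uniform elliptic lower bound $|L_0(s,k;x,\xi)|\geq c\,(1+|\xi|)^{2r}$ for $|\xi|$ large, with $c$ independent of $(s,k,x)$. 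This has to be extracted from the hypothesis that the $\mathcal{Q}_s(k)$ are elliptic of order $2r$ (together with $\tilde{\sigma}$ being of strictly lower order, this forces the top-order part of $L_0$ to be uniformly elliptic, and compactness of $\mathbb{S}^{d-1}\times\mathcal{O}\times\mathbb{T}^n$ makes the constant uniform); it does hold in the paper's application, where $L_0$ is $\nu$ times a product of two uniformly elliptic second-order symbols plus $(1-\nu)|\xi|^4$. Once the lower bound is upgraded in this way, $a_0\in\tilde{S}^{-2r}(\mathbb{T}^n)$ follows by Leibniz/Fa\`a di Bruno, and every subsequent step of your construction is standard and closes without difficulty.
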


To build a family of parametrices on a compact manifold, we will follow closely the strategy in \cite{GUI} by working on open subsets of the torus first and then gluing together to get the final global result.
\begin{thm}
\label{parametrices_manifold}
There exists a family of operators $\{A_s(k)\}_{(s,k) \in \mathbb{S}^{d-1}\times \mathcal{O}}$ in the class $\mathcal{S}_{-2}(M)$ and a family of operators $\{R_s(k)\}_{(s,k) \in \mathbb{S}^{d-1}\times \mathcal{O}}$ in the class $\mathcal{S}_{-\infty}(M)$ such that 
$$(L_s(k)-\lambda)A_s(k)=I-R_s(k).$$
\end{thm}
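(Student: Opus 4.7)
The plan is to follow the standard patching strategy: build parametrices locally in coordinate charts of $M$ identified with open subsets of $\mathbb{T}^n$ using Theorem \ref{parametrices_euclidean}, and then glue them together via a partition of unity to obtain the desired family on $M$.

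First I would fix a finite open cover $\{U_p\}_{p=1}^N$ of $M$ by coordinate charts $\kappa_p: U_p \to V_p \Subset \mathbb{T}^n$, a smooth partition of unity $\{\chi_p\}$ subordinate to $\{U_p\}$, and auxiliary cutoffs $\tilde\chi_p \in C_c^\infty(U_p)$ with $\tilde\chi_p \equiv 1$ on a neighborhood of $\supp \chi_p$. In each chart I would transfer $L_s(k)-\lambda$ via $\kappa_p$ to $V_p$ and then extend it to a parameter-dependent second-order elliptic operator $\mathcal{Q}_s^p(k)$ on the full torus, by smoothly interpolating to $-\Delta_{\mathbb{T}^n}+C$ outside a neighborhood of $\supp \tilde\chi_p$. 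The uniform ellipticity of $L$, together with Lemma \ref{fiber_op_structure} (which decomposes $L_s(k) = L(k+i\beta_s) = L + B(k+i\beta_s)$ with $B$ of order one and polynomial in its argument), shows that $\mathcal{Q}_s^p(k) \in Op(\tilde S^2(\mathbb{T}^n))$ and that its leading symbol $L_0(s,k;x,\xi)$ satisfies $|L_0(s,k;x,\xi)| \geq c|\xi|^2$ with $c>0$ independent of $(s,k)$, once $C$ is chosen large enough. Uniformity in $s$ follows from compactness of $\mathbb{S}^{d-1}$ and continuous dependence of $\beta_s$ on $s$ guaranteed by Lemma \ref{func_E}, while uniformity in $k \in \mathcal{O}$ follows from the polynomial dependence exhibited in Lemma \ref{fiber_op_structure}.

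Theorem \ref{parametrices_euclidean} then supplies parametrices $\mathcal{A}_s^p(k) \in Op(\tilde S^{-2}(\mathbb{T}^n))$ with $\mathcal{Q}_s^p(k) \mathcal{A}_s^p(k) = I - \mathcal{R}_s^p(k)$ and $\{\mathcal{R}_s^p(k)\} \in \mathcal{S}_{-\infty}(\mathbb{T}^n)$. I would set
\[
A_s(k) := \sum_{p=1}^N \tilde\chi_p \, \kappa_p^{*}\mathcal{A}_s^p(k) \, \chi_p,
\]
where $\kappa_p^{*}\mathcal{A}_s^p(k)$ denotes the chart-conjugated operator, which is globally defined on $M$ thanks to the cutoffs. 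A direct computation yields
\[
(L_s(k)-\lambda) A_s(k) = \sum_p \tilde\chi_p \bigl(I - \mathcal{R}_s^p(k)\bigr) \chi_p + \sum_p [L_s(k)-\lambda,\, \tilde\chi_p]\, \kappa_p^{*}\mathcal{A}_s^p(k) \, \chi_p,
\]
which equals $I - R_s(k)$ with
\[
R_s(k) := \sum_p \tilde\chi_p \kappa_p^{*}\mathcal{R}_s^p(k) \chi_p - \sum_p [L_s(k)-\lambda,\, \tilde\chi_p]\, \kappa_p^{*}\mathcal{A}_s^p(k) \, \chi_p,
\]
upon using $\sum_p \chi_p = 1$ and $\tilde\chi_p \chi_p = \chi_p$. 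The first sum is smoothing because each $\{\mathcal{R}_s^p(k)\}$ is. The second sum is also smoothing since $\supp d\tilde\chi_p$ is disjoint from $\supp \chi_p$, so pseudolocality of $\mathcal{A}_s^p(k)$ forces the composition to have a smooth kernel with all derivatives uniformly controlled. That $\{A_s(k)\} \in \mathcal{S}_{-2}(M)$ then follows from Remarks \ref{rem_class}(a),(d) and the boundedness of multiplication by $\chi_p, \tilde\chi_p$ on every Sobolev space.

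The principal obstacle is verifying that every term in sight satisfies the uniform-in-$(s,k)$ estimates with the prescribed order-shift under $k$-differentiation, as demanded by Definitions \ref{S_l} and \ref{class_S}. This reduces to three points: (i) confirming that the chart extensions $\mathcal{Q}_s^p(k)$ lie in $Op(\tilde S^2(\mathbb{T}^n))$ uniformly, with the required lower bound on the leading symbol; (ii) checking that the commutator operators $[L_s(k)-\lambda, \tilde\chi_p]$ have coefficients whose Sobolev norms are polynomial in $(k,\beta_s)$ with the correct decrease under $k$-derivatives; and (iii) ensuring that the pseudolocality estimates built into the proof of Theorem \ref{parametrices_euclidean} are themselves uniform in $(s,k)$. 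All three are settled by tracking how Lemma \ref{fiber_op_structure} controls the $(k,\beta_s)$-dependence of $L_s(k)$ and by the uniform-in-parameter formulation already incorporated in Theorem \ref{parametrices_euclidean}.
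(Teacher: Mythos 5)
Your proposal is correct and follows the same overall patch-and-glue strategy as the paper (local parametrices on charts viewed inside $\mathbb{T}^n$ via Theorem \ref{parametrices_euclidean}, then a partition-of-unity gluing), but the key local step is genuinely different. The paper does not extend the second-order operator itself: it forms the globally defined \emph{fourth-order} operator $\mathcal{Q}_s(k)=(L_s(k)-\lambda)(L_s(k)-\lambda)^T\nu+(1-\nu)\Delta^2$, applies Theorem \ref{parametrices_euclidean} with $r=2$, and then sets $\mathcal{A}_s(k)=(L_s(k)-\lambda)^T\nu\,\mathcal{B}_s(k)$. The payoff is the \emph{exact} local identity $(L_s(k)-\lambda)r_U\mathcal{A}_s(k)=r_U(I-\mathcal{R}_s(k))$ with no commutator remainders, so pseudolocality is invoked only once in the gluing, to show $(1-\nu_p)\mathcal{A}^{(p)}_s(k)\mu_p\in\mathcal{S}_{-\infty}$; the cost is the fourth-order detour and an appeal to the composition formula to see that $\mathcal{A}_s(k)\in Op(\tilde S^{-2}(\mathbb{T}^n))$. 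Your direct interpolation of $L_s(k)-\lambda$ with $-\Delta_{\mathbb{T}^n}+C$ keeps everything second order ($r=1$) and is the more standard textbook route, but it generates the commutator terms $[L_s(k)-\lambda,\tilde\chi_p]\,\kappa_p^*\mathcal{A}_s^p(k)\chi_p$, which you must absorb by uniform-in-$(s,k)$ pseudolocality across disjoint supports — the same fact the paper uses, so no new machinery is required, and the commutator coefficients depend at most polynomially on $(k,\beta_s)$ over a compact set, so uniformity holds. The one point you should spell out (and which is the real reason the paper bothers with the transpose) is the lower bound $|L_0|\geq 1$ for the \emph{complex} parameter-dependent leading part $\chi\sum_{|\alpha|=2}a_\alpha(x)(\xi+(k+i\beta_s)^T\cdot\nabla\tilde h)^\alpha+(1-\chi)(|\xi|^2+C)$, not merely for the real principal symbol of $L$; this does go through, by taking real parts exactly as in the paper's estimate of $\Re\bigl(\sqrt{\nu(x)}L_0^{(p)}(s,k;x,\xi)\bigr)$, so your argument is sound.
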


\begin{proof}
Let $V_p$ ($p=1,\dots,N$) be a finite covering of the compact manifold $M$ by evenly covered coordinate charts. We also choose an open covering $U_p$ ($p=1,\dots,N$) that refine the covering $\{V_p\}$ such that $\overline{U_p} \subset V_p$ for any $p$.
We can assume that each $V_p$ is an open subset of $(0, 2\pi)^n$ in $\mathbb{R}^n$ and hence, we can view each $V_p$ as an open subset of the torus $\mathbb{T}^n$. 

To simplify the notation, we will suppress the index $p=1,\dots,N$ which specifies the open sets $V_p$, $U_p$ until the final steps of the proof.  Let us denote by $i_U$, $r_U$ the inclusion mapping from $i_U: U \rightarrow \mathbb{T}^n$ and the restriction mapping 
$r_U: C^{\infty}(\mathbb{T}^n) \rightarrow C^{\infty}(U)$, correspondingly. We also use the same notation $L_s(k)-\lambda$ for its restrictions to the coordinate charts $V, U$ if no confusion arises. Then $(L_s(k)-\lambda)r_U$ can be considered as an operator on $\mathbb{T}^n$.

Let us first establish the following localized version of the inversion formula
\begin{lemma}
\label{localized}
There are families of symbols $\{a(s,k;x,\xi)\}_{(s,k) \in \mathbb{S}^{d-1}\times \mathcal{O}} \in \tilde{S}^{-2}(\mathbb{T}^n)$ and  
$\{r(s,k;x,\xi)\}_{(s,k) \in \mathbb{S}^{d-1}\times \mathcal{O}} \in \tilde{S}^{-\infty}(\mathbb{T}^n)$ so that
$$(L_s(k)-\lambda)r_U \mathcal{A}_s(k)=r_U (I-\mathcal{R}_s(k)),$$
where $\mathcal{A}_s(k)=Op(a(s,k;\cdot,\cdot))$, $\mathcal{R}_s(k)=Op(r(s,k;\cdot,\cdot))$.
\end{lemma}
\begin{proof}
We denote by $(L_{s}(k)-\lambda)^T$ the transpose operator of $(L_{s}(k)-\lambda)$ on $V$.
Now let $\nu$ be a function in $C^{\infty}_c(V)$ such that $\nu=1$ in a neighborhood of $\overline{U}$ and $0 \leq \nu \leq 1$. Define 
$$\mathcal{Q}_s(k)=(L_s(k)-\lambda)(L_{s}(k)-\lambda)^T\nu+(1-\nu)\Delta^2.$$
Observe that each operator $\mathcal{Q}_s(k)$ is a globally defined $4^{th}$ order differential operator on $\mathbb{T}^n$ with the following principal symbol
$$\nu(x)|\sigma_0(s,k;x,\xi)|^2+(1-\nu(x))|\xi|^4.$$
Here $\sigma_0(s,k;x,\xi)$ is the non-vanishing symbol of the elliptic operator $L_s(k)-\lambda$.
Thus, each operator $\mathcal{Q}_s(k)$ is an elliptic differential operator on $\mathbb{T}^n$.
In order to apply Theorem \ref{parametrices_euclidean} to the family $\{\mathcal{Q}_s(k)\}_{(s,k)}$, we need to study its family of symbols $\{\sigma(s,k;x,\xi)\}_{(s,k)}$.

On the evenly covered chart $V$, we can assume that the operator $L_s(k)-\lambda$ is of the form
$$\sum_{|\alpha| \leq 2}a_{\alpha}(x)(D+(k+i\beta_s)^{T}\cdot\nabla\tilde{h})^{\alpha},$$
for some functions $a_{\alpha} \in C^{\infty}(V)$ and $\tilde{h}$ is a smooth function obtained from the additive function $h$ through some coordinate transformation on the chart $V$.
Similarly, since  $(L_{s}(k)-\lambda)^T=L(k-i\beta_s)-\lambda$, one can write the operator $(L_{s}(k)-\lambda)^T$ on $V$ as follows:
$$\sum_{|\alpha| \leq 2}\tilde{a}_{\alpha}(x)(D+(k-i\beta_s)^{T}\cdot\nabla\tilde{h})^{\alpha},$$
for some functions $\tilde{a}_{\alpha} \in C^{\infty}(V)$. Then, on $\mathbb{T}^n$, the operator $\mathcal{Q}_s(k)$ has the following form:
$$\sum_{|\alpha|, |\beta| \leq 2}a_{\alpha}(x)\tilde{a}_{\beta}(x)(D+(k+i\beta_s)^{T}\cdot\nabla\tilde{h})^{\alpha}(D+(k-i\beta_s)^{T}\cdot\nabla\tilde{h})^{\beta}\nu(x)+(1-\nu(x))\Delta^2.$$

Put
$$L_0^{(1)}(s,k;x,\xi):=\sum_{|\alpha|=2} a_{\alpha}(x) (\xi+(k+i\beta_s)^{T}\cdot\nabla \tilde{h})^{\alpha},$$
$$L_0^{(2)}(s,k;x,\xi):=\sum_{|\beta|=2} \tilde{a}_{\beta}(x)(\xi+(k-i\beta_s)^{T}\cdot\nabla \tilde{h})^{\beta}$$
and
\begin{equation}
\label{form_symbols_2}
L_0(s,k;x,\xi)=\nu(x)L_0^{(1)}(s,k;x,\xi)L_0^{(2)}(s,k;x,\xi)
\\+(1-\nu(x))|\xi|^4.
\end{equation}
Then the symbol $\sigma(s,k;x,\xi)$ of the operator $\mathcal{Q}_{s}(k)$ can be written as
\begin{equation*}
L_0(s,k;x,\xi)+\tilde{\sigma}(s,k;x,\xi),
\end{equation*}
where the family of symbols $\{\tilde{\sigma}(s,k;x,\xi)\}_{(s,k)}$ is in the class $\tilde{S}^{3}(\mathbb{T}^n)$.
Using the boundedness of $\nabla\tilde{h}$ and coefficients $a_{\alpha}$ on the support of $\nu$, we deduce that the family of the symbols of $\{\mathcal{Q}_s(k)\}_{(s,k)}$ is in $\tilde{S}^{4}(\mathbb{T}^n)$.
Thus, our remaining task is to find a constant $A>0$ such that whenever $|\xi|>A$, we obtain $|L_0(s,k;x,\xi)|>1$.
Note that by ellipticity, there are positive constants $\theta_1$, $\theta_2$ such that
$$\sum_{|\alpha|=2} a_{\alpha}(x) \geq \theta_1 |\xi|^2$$
and
$$\sum_{|\alpha|=2} \tilde{a}_{\alpha}(x) \geq \theta_2 |\xi|^2.$$
We define
$$\|a\|_{\infty}:=\sum_{|\alpha|=|\beta|=2}\|a_{\alpha}(\cdot)\|_{L^{\infty}(\supp(\nu))}+\|\tilde{a}_{\beta}(\cdot)\|_{L^{\infty}(\supp(\nu))}$$
and 
$$A_p:=\max_{(s,k,x) \in \mathbb{S}^{d-1} \times \mathcal{O} \times \supp(\nu)}\left(|k^T \cdot \nabla\tilde{h}|^2+\theta_p^{-1}\|a\|_{\infty}|\beta_s^T \cdot \nabla\tilde{h}|^2+\theta_p^{-1}\right), \quad p=1,2.$$ 
Suppose that $\displaystyle |\xi|^2>2\max_{p=1,2}A_p$, then for any $p=1,2$, we have
\begin{align*}
\sqrt{\nu(x)}|L_0^{(p)}(s,k;x,\xi)| &\geq \Re\left(\sqrt{\nu(x)}L_0^{(p)}(s,k;x,\xi)\right) 
\\ & \geq \sqrt{\nu(x)}\left(\theta_p|\xi+k^{T}\cdot\nabla\tilde{h}|^2-\sum_{|\alpha|=2}a_{\alpha}(x)(\beta_s^{T} \cdot \nabla \tilde{h})^{\alpha}\right)
\\ & \geq \sqrt{\nu(x)}\left(\theta_p \left(\frac{|\xi|^2}{2}-|k^T \cdot \nabla\tilde{h}|^2 \right)-\|a\|_{\infty}|\beta_s^T \cdot \nabla\tilde{h}|^2\right)\geq \sqrt{\nu(x)}.
\end{align*}
Thus, due to \mref{form_symbols_2}, if $|\xi|^2>2\max_{p=1,2}A_p+1$ then $|L_0(s,k;x,\xi)|\geq(\sqrt{\nu(x)})^2+(1-\nu(x))|\xi|^4\geq 1$ as we wish.
Now we are able to apply Theorem \ref{parametrices_euclidean} to the family of operators $\{\mathcal{Q}_s(k)\}_{(s,k)}$, i.e., there are families of operators $\{\mathcal{B}_s(k)\}_{(s,k)}\in Op(\tilde{S}^{-4}(\mathbb{T}^n))$ and $\{\mathcal{R}_s(k)\}_{(s,k)}\in \mathcal{S}_{-\infty}(\mathbb{T}^n)$ such that 
$\mathcal{Q}_{s}(k)\mathcal{B}_{s}(k)=I-\mathcal{R}_{s}(k).$

Let $\mathcal{A}_s(k):=(L_s(k)-\lambda)^T \nu \mathcal{B}_s(k)$. Since $\nu=1$ on a neighborhood of $\overline{U}$, we obtain
\begin{equation*}
\begin{split}
r_U (I-\mathcal{R}_s(k))&=r_U \mathcal{Q}_s(k)\mathcal{B}_s(k)\\
&=r_U \left((L_s(k)-\lambda)(L_s(k)-\lambda)^T\nu\mathcal{B}_s(k)+(1-\nu)\Delta^2 \mathcal{B}_s(k)\right)\\
&=r_U (L_s(k)-\lambda)(L_s(k)-\lambda)^T\nu\mathcal{B}_s(k)\\
&=(L_s(k)-\lambda) r_U (L_s(k)-\lambda)^T\nu\mathcal{B}_s(k)=(L_s(k)-\lambda) r_U \mathcal{A}_s(k).
\end{split}
\end{equation*}
In addition, $\{\mathcal{A}_s(k)\}_{(s,k)}\in Op(\tilde{S}^{-2}(\mathbb{T}^n))$ according to the composition formula \cite[Theorem 8.2]{KKR}. Hence, the lemma is proved.
\end{proof}
Let $\mu_p \in C^{\infty}_c(U_p)$ ($p=1,\dots,N$) be a partition of unity with respect to the cover $\{U_p\}_{p=1,\dots,N}$ and for any $p=1,\dots,N$, let $\nu_p$ be a function in $C^{\infty}_c(U_p)$ such that it equals $1$ on a neighborhood of $\supp(\mu_p)$. By Lemma \ref{localized}, there are families of operators $\{\mathcal{A}^{(p)}_s(k)\}_{(s,k)} \in Op(\tilde{S}^{-2}(\mathbb{T}^n))$ and $\{\mathcal{R}^{(p)}_s(k)\}_{(s,k)}\in \mathcal{S}_{-\infty}(\mathbb{T}^n)$ such that
\begin{equation}
\label{parametrices_cover}
(L_s(k)-\lambda)r_{U_p}\mathcal{A}^{(p)}_{s}(k)=r_{U_p}(I-\mathcal{R}^{(p)}_{s}(k)).
\end{equation}
Due to pseudolocality, $(1-\nu_p)\mathcal{A}^{(p)}_s(k)\mu_p \in \mathcal{S}_{-\infty}(\mathbb{T}^n)$. This implies that $r_{U_p}\mathcal{A}^{(p)}_s(k)\mu_p-\nu_p\mathcal{A}^{(p)}_s(k)\mu_p \in \mathcal{S}_{-\infty}(\mathbb{T}^n)$, and thus, 
$$(L_s(k)-\lambda)r_{U_p}\mathcal{A}^{(p)}_s(k)\mu_p-(L_s(k)-\lambda)\nu_p\mathcal{A}^{(p)}_s(k)\mu_p \in \mathcal{S}_{-\infty}(\mathbb{T}^n).$$
By \mref{parametrices_cover}, $\mu_pI-(L_s(k)-\lambda)r_{U_p}\mathcal{A}^{(p)}_{s}(k)\mu_p\in \mathcal{S}_{-\infty}(\mathbb{T}^n).$
Hence, $$\mu_p I-(L_s(k)-\lambda)\nu_p\mathcal{A}^{(p)}_s(k)\mu_p \in \mathcal{S}_{-\infty}(\mathbb{T}^n).$$
Since both operators $\mu_p I$ and $(L_s(k)-\lambda)\nu_p\mathcal{A}^{(p)}_s(k)\mu_p$ are globally defined on the manifold $M$, it follows that 
\begin{equation}
\label{parametrices_cover_2}
\sum_{p} \left(\mu_p I-(L_s(k)-\lambda)\nu_p\mathcal{A}^{(p)}_s(k)\mu_p\right) \in \mathcal{S}_{-\infty}(M).
\end{equation}
Because $Op(\tilde{S}^{-2}(\mathbb{T}^n) \subset \mathcal{S}_{-2}(\mathbb{T}^n)$ (see Remark \ref{rem_class}), each family of operators $\{\mathcal{A}^{(p)}_s(k)\}_{(s,k)}$ is in the class $\mathcal{S}_{-2}(\mathbb{T}^n)$ for every $p$. Since $\{\nu_p\mathcal{A}^{(p)}_s(k)\mu_p\}_{(s,k)}$ is globally defined on $M$, we also have $\{\nu_p\mathcal{A}^{(p)}_s(k)\mu_p\}_{(s,k)} \in \mathcal{S}_{-2}(M)$ for any $p$.

Now define $A_s(k):=\displaystyle \sum_p \nu_p\mathcal{A}^{(p)}_s(k)\mu_p$ and $R_s(k):=I-(L_s(k)-\lambda)A_s(k)$. Then $\{A_s(k)\}_{(s,k)} \in \mathcal{S}_{-2}(M)$ and moreover, due to \mref{parametrices_cover_2}, the family of operators $\{R_s(k)\}_{(s,k)}$ is in $\mathcal{S}_{-\infty}(M)$.   
\end{proof}

The statement of the following lemma is standard.
\begin{lemma}
\label{derivative_ops}
Let $\mathcal{M}$ be a compact metric space, $\mathcal{D}$ be a domain in $\mathbb{R}^{m}$ ($m \in \mathbb{N}$) and $H_1$, $H_2$ be two infinite-dimensional separable Hilbert spaces. Let $\{T_s\}_{s \in \mathcal{M}}$ be a family of smooth maps from $\mathcal{D}$ to $B(H_1, H_2)$ such that for any multi-index $\alpha$, the map $(s,d) \mapsto D^{\alpha}_dT_s(d)$ is continuous from $\mathcal{M} \times \mathcal{D}$ to $B(H_1, H_2)$.
Suppose that there is a family of maps $\{V_s\}_{s \in \mathcal{M}}$ from $\mathcal{D}$ to $B(H_2, H_1)$
such that $V_s(d)T_s(d)=1_{H_1}$ and $T_s(d)V_s(d)=1_{H_2}$ for any $(s,d) \in \mathcal{M} \times \mathcal{D}$. 
Then for each $s \in \mathcal{M}$, the map $d \in \mathcal{D} \mapsto V_s(d)$ is smooth as a $B(H_2, H_1)$-valued function. Furthermore for any multi-index $\alpha$, the map $(s,k) \mapsto D^{\alpha}_d V_s(d)$ is continuous on $\mathcal{M} \times \mathcal{D}$ as a $B(H_2, H_1)$-valued function.
\end{lemma}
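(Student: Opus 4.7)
The plan is standard for smooth families of invertible operators: establish local joint continuity of $V_s(d)$ first via a Neumann series factorization, then derive the differentiation formula for an inverse, and finally propagate to all orders by induction on $|\alpha|$.

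For the first step, fix $(s_0,d_0) \in \mathcal{M} \times \mathcal{D}$ and factor
$$T_s(d) = T_{s_0}(d_0)\bigl[\,1_{H_1} + V_{s_0}(d_0)\bigl(T_s(d) - T_{s_0}(d_0)\bigr)\bigr].$$
By the assumed joint continuity of $(s,d)\mapsto T_s(d)$ in $B(H_1,H_2)$, the bracketed operator on $H_1$ differs from $1_{H_1}$ in norm by less than $1/2$ on some neighborhood $W$ of $(s_0,d_0)$; a Neumann series then inverts it, yielding a $B(H_1,H_1)$-valued map that depends continuously (indeed analytically) on $T_s(d)$. Uniqueness of inverses forces
$$V_s(d) = \bigl[\,1_{H_1} + V_{s_0}(d_0)\bigl(T_s(d) - T_{s_0}(d_0)\bigr)\bigr]^{-1} V_{s_0}(d_0) \quad \text{on } W,$$
so $(s,d)\mapsto V_s(d)$ is jointly continuous as a $B(H_2,H_1)$-valued map, and $(s,d)\mapsto\|V_s(d)\|$ is locally bounded on $\mathcal{M}\times\mathcal{D}$.

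For the second step, fix $s$ and $d, d+he_i \in \mathcal{D}$. The elementary identity
$$V_s(d+he_i) - V_s(d) = -V_s(d+he_i)\bigl[T_s(d+he_i) - T_s(d)\bigr]V_s(d),$$
combined with continuity of $V_s(\cdot)$ in $d$ from the first step and the $C^1$ dependence of $T_s$ on $d$, gives the differentiation formula
$$\partial_{d_i}V_s(d) = -V_s(d)\,\partial_{d_i}T_s(d)\,V_s(d).$$
An induction on $|\alpha|$, differentiating this formula repeatedly via the Leibniz rule, shows that $D^\alpha_d V_s(d)$ is a finite linear combination with universal integer coefficients of products of the form
$$V_s(d)\,D^{\beta_1}_d T_s(d)\,V_s(d)\,D^{\beta_2}_d T_s(d)\,\cdots\,V_s(d)\,D^{\beta_k}_d T_s(d)\,V_s(d),$$
with $1 \leq k \leq |\alpha|$, each $|\beta_j|\geq 1$, and $\beta_1+\cdots+\beta_k = \alpha$. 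Each factor is jointly continuous in $(s,d)$ (the $V_s(d)$-factors by the first step, the $T_s$-derivatives by hypothesis), and operator composition is jointly continuous on norm-bounded sets, so $(s,d)\mapsto D^\alpha_d V_s(d)$ is jointly continuous as a $B(H_2,H_1)$-valued function. In particular, $V_s(\cdot)$ is smooth for every fixed $s$.

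I do not anticipate a real obstacle: once the Neumann series delivers joint continuity (and local boundedness) of $V_s(d)$, the rest is the classical calculus of differentiating an inverse. The only subtle point worth emphasizing is that one must obtain \emph{joint}, rather than merely fibrewise, continuity of $V_s(d)$, and this is exactly what the Neumann series formula above supplies, since the factorization hinges on the single fixed operator $V_{s_0}(d_0)$ and lets the two parameters $s$ and $d$ enter only through $T_s(d)$.
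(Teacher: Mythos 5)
Your proof is correct and complete: the Neumann-series factorization around a fixed $(s_0,d_0)$ gives local joint continuity of $V_s(d)$, the resolvent identity yields $\partial_{d_i}V_s(d)=-V_s(d)\,\partial_{d_i}T_s(d)\,V_s(d)$, and the Leibniz induction propagates joint continuity to all derivatives. The paper offers no proof at all --- it simply declares the lemma ``standard'' --- and your argument is precisely the standard one being invoked, so there is nothing to compare against.
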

%\begin{proof}
%By the assumption, there is a unitary operator $U \in B(H_1, H_2)$ that maps $H_1$ onto $H_2$. We denote by $\tilde{T}_s$, $\tilde{V}_s$ the maps $d \in \mathcal{D} \mapsto U^*T
%_s(d) \in B(H_1)$ and $d \in \mathcal{D} \mapsto V_s(d)U \in B(H_1)$, correspondingly. Then $\tilde{T}_s(d)\tilde{V}_s(d)=\tilde{V}_s(d)\tilde{T}_s(d)=1_{H_1}$.
%Now it suffices to work with the two families of operators $\{\tilde{T}_s\}_{s \in \mathcal{M}}$ and $\{\tilde{V}_s\}_{s \in \mathcal{M}}$.
%Fixing $d_0 \in \mathcal{D}$ and choosing a neighborhood $\mathcal{W} \subset \mathcal{D}$ of $d_0$ such that $\sup_{(s,d) \in \mathcal{M}\times \mathcal{W}}\|\tilde{T}_s(d)-\tilde{T}_s(d_0)\|_{B(H_1)}\cdot \|\tilde{V}_s(d_0)\|_{B(H_1)}<1/2$. For such $d$, the inverse of $\tilde{T}_s(d)$ is given by a Neumann series:
%$$\tilde{V}_s(d)=\tilde{T}_s(d)^{-1}=V_s(d_0)\left(1+\sum_{j \geq 1} ((\tilde{T}_s(d_0)-\tilde{T}_s(d))\tilde{V}_s(d_0))^j\right).$$
%Since the power series in this formula is absolutely convergent on $\mathcal{W}$ (uniformly in $s$), $\tilde{V}_s(d)$ is smooth on $\mathcal{D}$ and each of the derivatives $D^{\alpha}_d \tilde{V}_s(d)$ is also continuous on $\mathcal{M} \times \mathcal{D}$ as a $B(H_1)$-valued function. Hence, the statements for $V_s(d)$ follow immediately.
%\end{proof}

We now go back to the family of operators $\{T_s(k)\}_{(s,k) \in \mathbb{S}^{d-1} \times \mathcal{O}}$. The next statement is the main ingredient in establishing Proposition \ref{kernel_estimate}.
\begin{prop}
\label{P:pseudo_ops}
There is a family of operators $\{B_{s}(k)\}_{(s,k)}$ in $\mathcal{S}_{-2}(M)$ such that the family of operators $\{T_s(k)-B_s(k)\}_{(s,k)}$ belongs to $\mathcal{S}_{-\infty}(M)$.
\end{prop}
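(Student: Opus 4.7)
The plan is to construct $B_s(k)$ from the parametrix supplied by Theorem \ref{parametrices_manifold} and then absorb the discrepancy into the smoothing class. Let $\{A_s(k)\}\in\mathcal{S}_{-2}(M)$ and $\{R_s(k)\}\in\mathcal{S}_{-\infty}(M)$ satisfy $(L_s(k)-\lambda)A_s(k)=I-R_s(k)$. The algebraic heart of the argument is that, since the Riesz projectors $P(k+i\beta_s)$ and $Q(k+i\beta_s)$ commute with $L_s(k)$, a direct computation from the definition of $T_s(k)$ gives
\begin{equation*}
T_s(k)(L_s(k)-\lambda)=(1-\eta(k))I+\eta(k)Q(k+i\beta_s)=I-\eta(k)P(k+i\beta_s).
\end{equation*}
Multiplying the parametrix identity on the left by $T_s(k)$ and rearranging yields
\begin{equation*}
T_s(k)=A_s(k)-\eta(k)P(k+i\beta_s)A_s(k)+T_s(k)R_s(k),
\end{equation*}
which motivates setting $B_s(k):=A_s(k)-\eta(k)P(k+i\beta_s)A_s(k)$, so that $T_s(k)-B_s(k)=T_s(k)R_s(k)$. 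The remaining work splits into two tasks: verifying (i) that $\eta(k)P(k+i\beta_s)A_s(k)\in\mathcal{S}_{-\infty}(M)$, which places $B_s(k)$ in $\mathcal{S}_{-2}(M)$, and (ii) that $T_s(k)R_s(k)\in\mathcal{S}_{-\infty}(M)$.

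Task (i) rests on the rank-one representation \mref{E:form_P_s}, which is available throughout $\supp\eta$ since $\supp\eta$ sits inside the ball of radius $r_0$ on which $P(k+i\beta_s)$ is defined. Lemma \ref{L:joint_continuity} provides joint smoothness in $(k,x)$ and uniform-in-$s$ bounds for every $k$-derivative of $\phi_{k\pm i\beta_s}$, while property \textbf{(P5)} of Proposition \ref{analytic_perturbation} keeps the denominator $(\phi_{k+i\beta_s},\phi_{k-i\beta_s})_{L^2(M)}$ bounded away from zero on a neighborhood of $k_0$ containing $\supp\eta$. Hence $\eta(k)P(k+i\beta_s)$ is a rank-one operator with smooth Schwartz kernel whose $k$-derivatives are bounded uniformly in $s$ on every Sobolev scale; composing with the $\mathcal{S}_{-2}(M)$ family $\{A_s(k)\}$ places the product in $\mathcal{S}_{-\infty}(M)$, since this class behaves as an ideal under composition with any $\mathcal{S}_\ell$ family (see Remarks \ref{rem_class}).

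Task (ii) proceeds by first demonstrating that $\{T_s(k)\}$ itself lies in $\mathcal{S}_{-2}(M)$; once established, the ideal property makes $T_s(k)R_s(k)\in\mathcal{S}_{-\infty}(M)$ immediate. For the first summand $(1-\eta(k))(L_s(k)-\lambda)^{-1}$, Proposition \ref{singularity} guarantees invertibility wherever $1-\eta\ne 0$, with smoothness in $k$ inherited from Lemma \ref{fiber_op_structure}, while the cutoff $1-\eta$ kills the singularity at $k_0$ (where $\eta\equiv 1$). For the second summand, property \textbf{(P2)} places $\lambda$ at a uniform distance from the spectrum of $L_s(k)|_{R(Q(k+i\beta_s))}$, and the smooth dependence of $Q(k+i\beta_s)$ on $k$ follows from its contour integral definition combined with Lemma \ref{fiber_op_structure}. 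Lemma \ref{derivative_ops} then converts the pointwise invertibility and $k$-smoothness of $L_s(k)-\lambda$ as a $B(H^{m+2}(M),H^{m}(M))$-valued function into smoothness of the corresponding inverses, for each Sobolev index $m$, which gives membership in $\mathcal{S}_{-2}(M)$.

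The main obstacle is maintaining uniformity in $s\in\mathbb{S}^{d-1}$ throughout: the projectors, restricted inverses, and eigenfunctions all depend on $s$ through $\beta_s$, and Definition \ref{S_l} demands bounds that are uniform in both $s$ and $k$. The plan is to verify that every Sobolev-norm estimate entering Lemma \ref{derivative_ops} is controlled by quantities that are jointly continuous in $(s,k)$ on the compact set $\mathbb{S}^{d-1}\times\mathcal{O}$, after which compactness automatically upgrades continuous bounds to uniform ones and delivers the uniform $s$-estimates required by the definitions of $\mathcal{S}_\ell(M)$ and $\mathcal{S}_{-\infty}(M)$.
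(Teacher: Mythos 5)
Your construction coincides with the paper's: the same parametrix from Theorem \ref{parametrices_manifold}, the same identity $T_s(k)(L_s(k)-\lambda)=I-\eta(k)P(k+i\beta_s)$, the same choice $B_s(k)=A_s(k)-\eta(k)P(k+i\beta_s)A_s(k)$, and the same treatment of the rank-one projector via Lemma \ref{L:joint_continuity} and \textbf{(P5)}. The one genuine gap is the claim, in your task (ii), that $\{T_s(k)\}_{(s,k)}$ itself lies in $\mathcal{S}_{-2}(M)$ and that this follows from Lemma \ref{derivative_ops}. Definition \ref{S_l} requires that $D^{\alpha}_k$ of a family in $\mathcal{S}_{-2}(M)$ be of order $-2-|\alpha|$, i.e.\ bounded from $H^{p}(M)$ to $H^{p+2+|\alpha|}(M)$, gaining one order of smoothing per $k$-derivative. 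Lemma \ref{derivative_ops} only yields smoothness of $k\mapsto (L_s(k)-\lambda)^{-1}$ (and of the reduced resolvent) as a map into the \emph{fixed} space $B(H^{m-2}(M),H^{m}(M))$: all derivatives remain of order $-2$, and a family all of whose $k$-derivatives are merely of order $-2$ belongs to no class $\mathcal{S}_{\ell}(M)$ with $\ell$ finite. A warning sign that this intermediate claim is doing too much work: if $\{T_s(k)\}$ really were in $\mathcal{S}_{-2}(M)$, the proposition would be vacuous (take $B_s=T_s$) and the entire parametrix construction superfluous.

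The repair is exactly what the paper does, and it costs nothing beyond what you have already set up: one does not need $\{T_s(k)\}\in\mathcal{S}_{-2}(M)$, only that for every $m$ and every $\alpha$ the derivatives $D^{\alpha}_k T_s(k)$ are uniformly bounded in $B(H^{m-2}(M),H^{m}(M))$. This is precisely what Lemma \ref{derivative_ops}, applied to the decomposition
$T_s(k)=(1-\eta(k))(L_s(k)-\lambda)^{-1}+\eta(k)\lambda^{-1}P(k+i\beta_s)+\eta(k)(L_s(k)Q(k+i\beta_s)-\lambda)^{-1}$,
does give, with compactness of $\mathbb{S}^{d-1}\times\mathcal{O}$ supplying the uniformity in $s$ as you indicate. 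The Leibniz rule then shows that for arbitrary $m_1,m_2$ each term $(D^{\beta}_k T_s(k))(D^{\alpha-\beta}_k R_s(k))$ maps $H^{m_1}(M)\to H^{m_2-2}(M)\to H^{m_2}(M)$ boundedly and uniformly, so $\{T_s(k)R_s(k)\}_{(s,k)}\in\mathcal{S}_{-\infty}(M)$ directly. Note that the composition rule of Remarks \ref{rem_class} cannot be invoked for the product $T_s(k)R_s(k)$ as you propose, since it presupposes that both factors already lie in some class $\mathcal{S}_{\ell}(M)$.
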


\begin{proof}
Due to Theorem \ref{parametrices_manifold}, we can find a family $\{A_s(k)\}_{(s,k)} \in \mathcal{S}_{-2}(M)$ and a family $\{R_s(k)\}_{(s,k)} \in \mathcal{S}_{-\infty}(M)$ such that 
$$(L_s(k)-\lambda)A_s(k)=I-R_s(k).$$
Also, from the definition of $T_s(k)$, we obtain $T_s(k)(L_s(k)-\lambda)=I-\eta(k)P(k+i\beta_s)$. 

Using the above two equalities, we obtain
\begin{equation*}
\label{E:eqn_T(k)}
T_s(k)=A_s(k)-\eta(k)P(k+i\beta_s)A_s(k)+T_s(k)R_s(k).
\end{equation*}

We recall from Section 4 that $P(k+i\beta_s)$ projects $L^2(M)$ onto the eigenspace spanned by the eigenfunction $\phi_{k+i\beta_s}$. Hence, its kernel is the following function
$$\frac{\phi_{k+i\beta_s}(x)\overline{\phi_{k-i\beta_s}(y)}}{(\phi_{k+i\beta_s},\phi_{k-i\beta_s})_{L^2(M)}},$$
which is smooth due to Lemma \ref{L:joint_continuity}.
Thus, the family of operators $\{\eta(k)P(k+i\beta_s)\}_{(s,k)}$ is in $\mathcal{S}_{-\infty}(M)$. Also, the family of operators $\{\eta(k)Q(k+i\beta_s)\}_{(s,k)}$ belongs to $\mathcal{S}_{0}(M)$.

We put $B_s(k):=A_s(k)-\eta(k)P(k+i\beta_s)A_s(k)$, then $\{B_s(k)\}_{(s,k)} \in \mathcal{S}_{-2}(M)$. Since $T_s(k)-B_s(k)=T_s(k)R_s(k)$, the remaining task is to check that the family of operators $\{T_s(k)R_s(k)\}_{(s,k)}$ belongs to the class $\mathcal{S}_{-\infty}(M)$. 

Let us consider any two real numbers $m_1$ and $m_2$. By Lemma \ref{fiber_op_structure}, the operators $L_s(k)-\lambda$ and $L_s(k)Q(k+i\beta_s)-\lambda$ are smooth in $k$ as $B(H^{m_2}(M), H^{m_2-2}(M))$-valued functions such that their derivatives with respect to $k$ are jointly continuous in $(s,k)$.
On the other hand, we can rewrite (see  \cite[Lemma 7.7]{KKR}):
$$T_s(k)=(1-\eta(k))(L_s(k)-\lambda)^{-1}+\eta(k)\lambda^{-1}P(k+i\beta_s)+\eta(k)(L_s(k)Q(k+i\beta_s)-\lambda)^{-1}.$$  
Hence, by Lemma \ref{derivative_ops}, 
%\footnote{Recall that for each real number $p$, the Sobolev space $H^p(M)$ is an infinite-dimensional separable Hilbert space.} 
$T_s(k)$ is smooth in $k$ as a $B(H^{m_2-2}(M), H^{m_2}(M))$-valued function and its derivatives with respect to $k$ are jointly continuous in $(s,k)$. Therefore, for any multi-index $\alpha$, we have
$$\sup_{(s,k) \in \mathbb{S}^{d-1} \times \mathcal{O}}\|D^{\alpha}_k T_s(k)\|_{B(H^{m_2-2}(M), H^{m_2}(M))}<\infty.$$
Moreover since $\{R_s(k)\}_{(s,k)} \in \mathcal{S}_{-\infty}(M)$, $R_s(k)$ is smooth as a $B(H^{m_1}(M), H^{m_2-2}(M))$-valued function and for any multi-index $\alpha$, 
$$\sup_{(s,k)\in \mathbb{S}^{d-1} \times \mathcal{O}} \|D^{\alpha}_k R_s(k)\|_{B(H^{m_1}(M), H^{m_2-2}(M))}<\infty.$$
One can deduce from the Leibniz rule that the composition $T_s(k)R_s(k)$ is smooth in $k$ as a $B(H^{m_1}(M), H^{m_2}(M))$-valued function and for any multi-index $\alpha$, the following uniform condition also holds
$$\sup_{(s,k)\in \mathbb{S}^{d-1} \times \mathcal{O}} \|D^{\alpha}_k (T_s(k)R_s(k))\|_{B(H^{m_1}(M), H^{m_2}(M))}<\infty.$$
Consequently, $\{T_s(k)R_s(k)\}_{(s,k)\in \mathbb{S}^{d-1} \times \mathcal{O}}\in\mathcal{S}_{-\infty}(M)$ as we wish.
\end{proof}
We now finish this subsection.
\begin{prmicrolocalprop}
Proposition \ref{P:pseudo_ops} provides us with the decomposition $T_s(k)=B_s(k)+C_s(k)$, where $\{B_{s}(k)\}_{(s,k)}\in \mathcal{S}_{-2}(M)$ and $\{C_s(k)\}_{(s,k)} \in \mathcal{S}_{-\infty}(M)$. Let $K_{B_s}(k,x,y)$, $K_{C_s}(k,x,y)$ be the Schwartz kernels of $B_s(k)$ and $C_s(k)$, correspondingly. It follows from Corollary \ref{kernel_in_S} that for any multi-index $\alpha$ satisfying $|\alpha| \geq n$, each kernel $D^{\alpha}_k K_{B_s}(k,x,y)$ is continuous on $M \times M$. Furthermore, we have
$$\sup_{(s,k,x,y) \in \mathbb{S}^{d-1} \times \mathcal{O} \times M \times M}|D^{\alpha}_k K_{B_s}(k,x,y)|<\infty.$$
A similar conclusion also holds for the family of kernels $\{K_{C_s}(k,x,y)\}_{(s,k)}$ and thus, for the family of kernels $\{K_{s}(k,x,y)\}_{(s,k)}$ too. This finishes the argument.
\end{prmicrolocalprop}

%%%%%%%%%%%%%%%%%%%%%%%%%%%%%%%%%%%%%%%%%%%%%%%
\section{Asymptotics of Green's functions and Martin compactifications for nonsymmetric second-order periodic elliptic operators}
In this section, we discuss briefly analogous results for nonsymmetric $G$-equivariant\footnote{Here, without loss of generality, we assume that $G=\mathbb{Z}^d$.} second-order elliptic operators on an abelian covering $X$ below and at the generalized principal eigenvalue, which generalize the main results in \cite{MT}. Let us consider 
now a $G$-periodic linear elliptic operator $A$ of second-order acting on $C^{\infty}(X)$ such that in any coordinate system $(U; x_1, \cdots, x_n)$, $A$ has the form:
$$A=-\sum_{1 \leq i,j \leq n}a_{ij}(x)\partial_{x_i}\partial_{x_j}+\sum_{1 \leq i \leq n}b_i(x)\partial_{x_i}+c(x),$$
where $a_{ij}, b_i, c$ are smooth, real-valued, periodic functions. The matrix $a(x):=(a_{ij}(x))_{1 \leq i,j \leq n}$ is positive definite. The generalized principal eigenvalue of $A$ is defined as follows
$$\Lambda_A=\sup\{\lambda \in \mathbb{R} \mid Au=\lambda u \hspace{4pt} \mbox{for some positive solution} \hspace{4pt} u\}.$$
Let $A^*$ be the formal adjoint operator of $A$. The generalized principal eigenvalues of $A^*$ and $A$ are equal, i.e., $\Lambda_A=\Lambda_{A^*}$. Also, the operators $A-\lambda$ and $A^*-\lambda$ are subcritical \footnote{I.e., positive Green's functions exist for these operators.} if $\lambda<\Lambda_{A}$.

Recall that a function $u$ on $X$ is called a $G$-multiplicative function with exponent $k \in \mathbb{R}^d$ if $u(g \cdot x)=e^{k \cdot g}u(x), \forall x \in X, g \in G$ (see Definition \ref{add_mul}). For any $k \in \mathbb{R}^d$, it is known from \cite{Agmon, LinPinchover} that there exists a unique \textit{real} number $\Lambda_A(k)$ so that the equation $Au=\Lambda_A(k)u$ admits a positive $G$-multiplicative solution $u$ with exponent $-k$. 

We extract the following results from \cite{Agmon, LinPinchover, Pinsky}.
\begin{thm}
\label{AP}
\begin{enumerate}[(a)]
\item
Let $h$ be an additive function on $X$.
Then $\Lambda_A(k)$ is the principal eigenvalue of $A(ik)$ with multiplicity one, where $A(ik)$ is the operator $e^{k\cdot h(x)}Ae^{-k\cdot h(x)}$. Furthermore, $\Lambda_A=\max_{k \in \mathbb{R}^d}\Lambda_A(k)=\Lambda_A(\beta_0)$ for a unique $\beta_0 \in \mathbb{R}^d$. 
\item
The function $\Lambda_A(k)$ is real analytic,  strictly concave, bounded from above, and its gradient vanishes at only its maximum point $k=\beta_0$. The Hessian of the function $\Lambda_A(k)$ is negative definite at any $k \in \mathbb{R}^d$.
\item
For any $\lambda \in \mathbb{R}$, we define 
\begin{equation}
\label{levelsets}
\begin{split}
K_{\lambda}&=\{k \in \mathbb{R}^d \mid \Lambda_A(k)\geq \lambda\},\\
\Gamma_{\lambda}&=\partial K_{\lambda}=\{k \in \mathbb{R}^d \mid \Lambda_A(k)=\lambda\}.
\end{split}
\end{equation}
Then $\Gamma_{\lambda}$ (resp. $K_{\lambda}$) is the set consisting of all vectors $k \in \mathbb{R}^d$ such that $Au=\lambda u$ (resp. $Au \geq \lambda u$) for some positive $G$-multiplicative function $u$ with exponent $-k$. 
Moreover, if $\lambda=\Lambda_A$, $\Gamma_{\lambda}=K_{\lambda}=\{\beta_0\}$ while if $\lambda<\Lambda_{A}$, $K_{\lambda}$ is a $d$-dimensional strictly convex compact subset in $\mathbb{R}^d$ and its boundary $\Gamma_{\lambda}$ is a compact $d-1$ dimensional analytic submanifold of $\mathbb{R}^d$. 
In all cases, $\Gamma_{\lambda}$ is the set of all extreme points of $K_{\lambda}$.
\item
We define analogous level sets $K^*_{\lambda}$, $\Gamma^*_{\lambda}$ as in \mref{levelsets} for the formal adjoint operator $A^*$, . Then $K^*_{\lambda}=-K_{\lambda}$ and $\Gamma^*_{\lambda}=-\Gamma_{\lambda}$.
Also, $\Lambda_{A}(k)=\Lambda_{A^*}(-k)$ for any $k \in \mathbb{R}^d$. In particular, if $A=A^*$, $\Lambda_{A}(k)$ is an even function and $K_{\Lambda_A}=\{0\}$ (or $\beta_0=0$).
\end{enumerate}
\end{thm}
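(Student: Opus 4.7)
The plan is to exploit the conjugation $A(ik) = e^{k \cdot h(x)} A e^{-k \cdot h(x)}$, which is a real second-order elliptic operator on $X$ (the first-order perturbation introduced is $k \cdot \nabla h$, which is real) and commutes with the $G$-action because $h$ is additive. Hence $A(ik)$ descends to a real elliptic operator on the compact base $M$. Finding a positive $G$-multiplicative solution $u$ of $Au = \lambda u$ with exponent $-k$ is equivalent to finding a positive $G$-periodic function $\phi = e^{k \cdot h(x)}u$ on $X$ (equivalently, a positive function on $M$) satisfying $A(ik)\phi = \lambda \phi$. All four claims will be read off this reduction.

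For part (a), I would apply the Krein-Rutman theorem to the operator $A(ik)$ on $M$ (or, equivalently, to a heat-type semigroup generated by $-A(ik)$, following Agmon or the standard treatment in \cite{Pinsky}). This furnishes a unique simple principal eigenvalue $\Lambda_A(k) \in \mathbb{R}$ with a strictly positive eigenfunction $\phi_k$, and no other eigenvalue possesses a positive eigenfunction. The identification $\Lambda_A = \max_k \Lambda_A(k)$ is then the Agmon--Lin--Pinchover characterization of the generalized principal eigenvalue: positive solutions of $Au = \lambda u$ exist precisely when $\lambda \le \Lambda_A(k)$ for some $k$, and one checks (via Choquet-type representation of positive solutions by extreme $G$-multiplicative ones) that in fact the supremum in the definition of $\Lambda_A$ is realized at some $\beta_0$; uniqueness of $\beta_0$ will follow from the strict concavity proved in (b).

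For part (b), analyticity of $\Lambda_A(k)$ follows from Kato's analytic perturbation theory applied to the holomorphic family of type (A) $\{A(ik)\}_{k \in \mathbb{C}^d}$ on the fixed domain $H^2(M)$, combined with the simplicity established in (a). The deep input is strict concavity: this is Theorem~4.1 of Agmon \cite{Agmon} (see also \cite{LinPinchover}), whose proof rests on a variational/log-convexity argument for the principal eigenvalue of a family of elliptic operators whose coefficients depend affinely on parameters. Boundedness from above is forced by strict concavity together with $\Lambda_A < \infty$. Negative-definiteness of the Hessian and vanishing of the gradient only at the maximum are direct consequences of strict concavity and smoothness.

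For parts (c) and (d), the structural claims about $K_\lambda$ and $\Gamma_\lambda$ follow immediately: the characterization of $K_\lambda$ as the set of $k$ admitting positive $G$-multiplicative solutions of $Au \ge \lambda u$ is the Agmon--Lin--Pinchover dictionary noted above, while strict convexity of $K_\lambda$, compactness (since $\Lambda_A(k) \to -\infty$ as $|k| \to \infty$ by strict concavity and boundedness above), and analyticity of $\Gamma_\lambda$ come straight from (b) plus the implicit function theorem applied to the regular level set $\{\Lambda_A = \lambda\}$. For (d), I would observe that $(A(ik))^* = e^{-k \cdot h}A^* e^{k \cdot h} = A^*(-ik)$, so $A(ik)$ and $A^*(-ik)$ are formal adjoints on $M$ and hence share the same principal eigenvalue, giving $\Lambda_A(k) = \Lambda_{A^*}(-k)$; the identities $K_\lambda^* = -K_\lambda$ and $\Gamma_\lambda^* = -\Gamma_\lambda$ are immediate, as is $\beta_0 = 0$ in the self-adjoint case by evenness. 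The one nontrivial input that I am not re-proving is the strict concavity of $\Lambda_A(k)$; this is the crux of the theorem and is the step I would cite from \cite{Agmon, LinPinchover} rather than redo. Everything else is an assembly of Krein-Rutman, analytic perturbation theory, and convex analysis.
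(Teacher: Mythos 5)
The paper does not actually prove Theorem \ref{AP}: it is explicitly ``extracted'' from \cite{Agmon}, \cite{LinPinchover} and \cite{Pinsky}, so there is no in-text argument to compare against. Your reconstruction is the standard one and is sound: the substitution $\phi=e^{k\cdot h}u$ correctly converts positive $G$-multiplicative solutions with exponent $-k$ into positive $G$-periodic eigenfunctions of $A(ik)$ on $M$; Krein--Rutman then gives (a); Kato's analytic perturbation theory for the type (A) family $\{A(ik)\}$ gives analyticity; the adjoint identity $(A(ik))^{*}=A^{*}(-ik)$ gives (d); and you rightly isolate the strict concavity of $\Lambda_A$ as the one genuinely deep input to be cited from Agmon/Lin--Pinchover rather than re-derived. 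One imprecision worth fixing: in (c) you assert that $\Lambda_A(k)\to-\infty$ as $|k|\to\infty$ ``by strict concavity and boundedness above'' --- this implication is false in general (the function $t\mapsto -e^{-t}$ on $\mathbb{R}$ is strictly concave and bounded above but tends to $0$ along a ray). What rescues the claim is that the maximum of $\Lambda_A$ is \emph{attained} at $\beta_0$, which you establish in (a): a strictly concave function attaining its maximum must decrease strictly, hence at least linearly, along every ray emanating from $\beta_0$, so each superlevel set $K_\lambda$ with $\lambda<\Lambda_A$ is bounded (and convex and closed, hence compact, with every boundary point extreme by strict convexity). With that adjustment the assembly is complete.
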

We are interested in finding the asymptotics at infinity of the Green's function $G_{\lambda}(x,y)$ of the operator $A-\lambda$, where $\lambda \in (-\infty, \Lambda_A]$. 
From now on, we fix a point $x_0$ in $X$. Let $\mathcal{K}_{A,\lambda}$ be the set consisting of all positive solutions $u$ of the equation $Au=\lambda u$ such that $u$ is normalized at $x_0$, i.e., $u(x_0)=1$. We denote by $\mathcal{M}_{A, \lambda}$ the subset of $\mathcal{K}_{A,\lambda}$ containing all normalized (at $x_0$) positive $G$-multiplicative solutions with exponents in $\Gamma_{\lambda}$. It was proved in \cite{Agmon, LinPinchover} that $\mathcal{M}_{A, \lambda}$ coincides with the set of all extreme points of the convex compact set $\mathcal{K}_{A, \lambda}$ \footnote{This result also holds for $G$-equivariant elliptic operators of second-order on a Riemannian co-compact nilpotent covering $X$, see e.g., \cite[Theorem 6.8]{LinPinchover}.}. As a consequence of the latter fact, it turns out that all such positive $G$-multiplicative solutions are exactly all minimal positive solutions of the equation $(A-\lambda)u=0$.
When $\lambda$ is below the generalized principal eigenvalue $\Lambda_A$,
Theorem \ref{AP} (compare to Proposition \ref{analytic_perturbation} and Lemma \ref{func_E}) enables us to define the following notions in a similar manner to the discussion in Subsection \ref{setup2.3} \footnote{The role of the function $E$ in Subsection \ref{setup2.3} is now played by the function $\Lambda_A$.}.
For each $s \in \mathbb{S}^{d-1}$, let $\beta_s$ be the unique point in $\Gamma_{\lambda}$ such that 
$$\frac{\nabla \Lambda_A(\beta_s)}{|\nabla \Lambda_A(\beta_s)|}=-s.$$
For any $k \in \mathbb{R}^d$, let $\phi_k$ and $\phi^*_k$ be periodic, positive, and normalized\footnote{Here we mean that $\phi_k(x_0)=\phi^*_k(x_0)=1$.} solutions of the equations $A(ik)u=\Lambda_A(k) u$, $A(ik)^*u^*=\Lambda_{A}(k) u^*$, respectively.
We have:
\begin{thm}
\label{main-nsa} 
\begin{enumerate}[(a)]
\item
Suppose that $d \geq 2$ and $\lambda<\Lambda_A$. Then as $d_{X}(x,y) \rightarrow \infty$,  the following asymptotics of the Green's function $G_{\lambda}$ of $A-\lambda$ holds:
\begin{equation*}
\label{main_asymp_nsa}
\begin{split}
G_{\lambda}(x,y)&=\frac{e^{-(h(x)-h(y))\cdot\beta_{s}}}{(2\pi|h(x)-h(y)|)^{(d-1)/2}}\cdot\frac{|\nabla \Lambda_A(\beta_s)|^{(d-3)/2}}{\det{(-\mathcal{P}_s \Hess{(\Lambda_A)}(\beta_{s})\mathcal{P}_s)}^{1/2}} \cdot\frac{\phi_{\beta_{s}}(x)\phi^*_{\beta_{s}}(y)}{(\phi_{\beta_{s}},\phi^*_{\beta_{s}})_{L^{2}(M)}}\\&
+e^{(h(y)-h(x))\cdot \beta_{s}}O(d_X(x,y)^{-d/2}),
\end{split}
\end{equation*}
where $\displaystyle s=(h(x)-h(y))/|h(x)-h(y)| \in \mathcal{A}_h$ 
and $\mathcal{P}_s$ is the same projection we defined in Theorem \ref{main}.
\item
Suppose that $d \geq 3$ and $\Lambda_A=\Lambda_A(\beta_0)$. Then the minimal Green's function $G(x,y)$ of $A-\Lambda_A$ admits the following asymptotics as $d_X(x,y) \rightarrow \infty$:
\begin{equation*}
\label{main_asymp_KR_nsa}
\begin{split}
G(x,y)=\frac{\Gamma(\frac{d-2}{2})e^{-(h(x)-h(y))\cdot \beta_0}}{2\pi^{d/2}\sqrt{\det H}|H^{-1/2}(h(x)-h(y))|^{d-2}}\cdot \frac{\phi_{\beta_0}(x)\phi^*_{\beta_0}(y)}{(\phi_{\beta_0}, \phi^*_{\beta_0})_{L^2(M)}}\cdot\left(1+O\left(d_X(x,y)^{-1}\right)\right),
\end{split}
\end{equation*}
where $H=-\Hess(\Lambda_A)(\beta_0)$.
\end{enumerate}
\end{thm}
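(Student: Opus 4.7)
The plan is to adapt the Floquet--Bloch reduction of Theorems \ref{main} and \ref{main_KR} to the non-self-adjoint setting, with the principal eigenvalue $\Lambda_A$ of Theorem \ref{AP} playing the role formerly played by the function $E$ (respectively by the band function $\lambda_j$). Fix an admissible direction $s\in\mathcal{A}_h$, set $a:=h(x)-h(y)$, and choose $\beta:=\beta_s\in\Gamma_\lambda$ for part (a) and $\beta:=\beta_0$ for part (b). Conjugating by the positive $G$-multiplicative weight, let $A_\beta:=e^{\beta\cdot h(x)}Ae^{-\beta\cdot h(x)}$; this is again a $G$-periodic elliptic operator on $X$, and $G_\lambda(x,y)=e^{-\beta\cdot(h(x)-h(y))}G_\lambda^{A_\beta}(x,y)$, so it suffices to analyze $G_\lambda^{A_\beta}$.

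I would then apply the Floquet transform from Section 4. The fibers $A_\beta(k):=e^{-ik\cdot h(x)}A_\beta e^{ik\cdot h(x)}$ form an analytic family of elliptic operators on $L^2(M)$ with discrete spectra. By Theorem \ref{AP}(a), $\lambda$ is a simple eigenvalue of $A_\beta(0)$ with right and left eigenfunctions $\phi_\beta$ and $\phi_\beta^*$. Standard analytic perturbation theory, which requires only simplicity rather than self-adjointness, produces an analytic branch $\Lambda(k)$ of simple eigenvalues of $A_\beta(k)$ on a symmetric complex neighborhood of $0$, with $\Lambda(0)=\lambda$ and analytic families $\varphi_k$, $\varphi_k^*$ of right and left eigenfunctions extending $\phi_\beta$ and $\phi_\beta^*$. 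Since $\Lambda(-i\tau)=\Lambda_A(\beta-\tau)$, the complex Hessian of $\Lambda$ at the origin equals $-\Hess(\Lambda_A)(\beta)$, which is positive definite by Theorem \ref{AP}(b). Combined with the strict concavity of $\Lambda_A$ and a compactness argument on $\mathcal{O}$, this yields the analog of Proposition \ref{singularity}: after possibly shrinking $\Lambda_A-\lambda$ in part (a), the only $k\in\mathcal{O}$ with $\lambda\in\sigma(A_\beta(k))$ is $k=0$.

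With these ingredients the isolation procedure of Subsection \ref{isolate-green} transfers verbatim, using the non-self-adjoint rank-one Riesz projector
\begin{equation*}
P(k)u=\frac{(u,\varphi_k^*)_{L^2(M)}}{(\varphi_k,\varphi_k^*)_{L^2(M)}}\,\varphi_k.
\end{equation*}
The complementary operator, defined by the same formula as $T_s(k)$ in Theorem \ref{microlocal}, has a Schwartz kernel decaying faster than any inverse polynomial in $d_X(x,y)$, uniformly in $s$, by the argument of Section 7 (which relies only on ellipticity and a pseudodifferential parametrix construction, not on self-adjointness). The reduced Green's function becomes
\begin{equation*}
G_0(x,y)=\frac{1}{(2\pi)^d}\int_{\mathcal{O}}e^{ik\cdot a}\,\frac{\eta(k)}{\Lambda(k)-\lambda}\,\frac{\varphi_k(x)\,\varphi_k^*(y)}{(\varphi_k,\varphi_k^*)_{L^2(M)}}\,dk.
\end{equation*}
A Taylor expansion of the amplitude at $k=0$ splits $G_0$ into a leading piece with amplitude $\phi_\beta(x)\phi_\beta^*(y)/(\phi_\beta,\phi_\beta^*)_{L^2(M)}$ plus a correction of the shape controlled by the $J$-integral estimate of Proposition \ref{P:asymp_KKR}, with all derivatives of the amplitudes uniformly bounded in $s$. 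Applying Proposition \ref{P:asymp_KKR} with $k_0=0$ and with $\Lambda_A$ in place of $E$ gives part (a); for part (b), one first establishes the weak resolvent limit $R=\lim_{\varepsilon\to 0^+}(A-\Lambda_A+\varepsilon)^{-1}$ as in Theorem \ref{main_KR}(i), and then applies Proposition \ref{P:asymp_KR} with $H=-\Hess(\Lambda_A)(\beta_0)$.

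The main obstacle I anticipate is the spectral separation underlying the non-self-adjoint analog of Proposition \ref{singularity}. In the symmetric case the reality of $E$ was supplied by Lemma \ref{func_E} via assumption \textbf{A5}; here the reality of $\Lambda_A$ is built into Theorem \ref{AP}, but one must independently exclude the possibility that some \emph{other} eigenvalue branch of $A_\beta(k)$ lands on the real number $\lambda$ for some $k\in\mathcal{O}\setminus\{0\}$. One should be able to rule this out by combining upper semicontinuity of the spectra of the analytic family $\{A_\beta(k)\}$, the non-degeneracy of $\Lambda$ at the origin, and the subcriticality of $A-\lambda$ furnished by the strict concavity of $\Lambda_A$ on $\Gamma_\lambda$. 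Once this separation is in hand, the rest of the proof is a careful transcription of Sections 4--7, with left eigenfunctions $\varphi_k^*$ replacing complex conjugates of right eigenfunctions and uniformity in $s\in\mathcal{A}_h$ tracked through the analog of Lemma \ref{L:joint_continuity}.
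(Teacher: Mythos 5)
Your overall strategy coincides with the paper's: conjugate by the multiplicative weight $e^{\beta\cdot h}$, run the Floquet--Bloch reduction with the rank-one (non-orthogonal) Riesz projector built from the right and left eigenfunctions $\varphi_k$, $\varphi_k^*$, observe that the microlocal isolation of Theorem \ref{microlocal} uses only ellipticity and the parametrix construction, and finish with Proposition \ref{P:asymp_KKR} (resp.\ Proposition \ref{P:asymp_KR}) applied to the reduced scalar integral with $\Lambda_A$ in place of $E$. Part (b) is likewise handled in the paper exactly as you describe, by transcribing the proof of Theorem \ref{main_KR}.

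The one place where your argument is not complete is precisely the step you flag: the non-self-adjoint analog of Proposition \ref{singularity}. Your proposed combination of upper semicontinuity of $\sigma(A_\beta(k))$, non-degeneracy of $\Lambda$ at the origin, and subcriticality of $A-\lambda$ does not by itself rule out that some other (possibly complex) eigenvalue branch of $A_\beta(k)$ passes through the real value $\lambda$ at some $k\in\mathcal{O}\setminus 2\pi\bZ^d$. In the symmetric case that exclusion came from the explicit spectral gap hypothesis \textbf{A2} together with the reality of $E$; neither is available here, and an upper-semicontinuity deformation argument needs a base case asserting that the entire real Fermi surface of $A$ at imaginary quasimomentum $t\beta_s$, $t\in[0,1)$, misses $\lambda$. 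The paper supplies exactly this missing input by invoking \cite[Lemma 5.8]{KP2} (see also \cite[Lemma 15]{KP1}), which characterizes when $\lambda\in\sigma(A(k+i\beta))$ for real $k,\beta$ in terms of positive $G$-multiplicative solutions, i.e.\ in terms of the function $\Lambda_A$: for $t\in[0,1)$ the strict concavity of $\Lambda_A$ gives $\Lambda_A(t\beta_s)\geq t\Lambda_A(\beta_s)+(1-t)\Lambda_A(\beta_0)>\lambda$, so that lemma yields $\lambda\notin\sigma(A(k+it\beta_s))$ for all real $k$, and for $t=1$ it identifies $k\in2\pi\bZ^d$ as the only real quasimomenta on the Fermi surface. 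You should therefore replace your heuristic for the spectral separation by an appeal to this positivity-theoretic characterization (or prove an equivalent statement); once that is in place, the rest of your transcription of Sections 4--7 goes through as you describe. A minor further remark: since $\lambda$ lies on the boundary of $\sigma(A_{\beta_s})$, the kernel $G_\lambda^{A_{\beta_s}}$ must still be realized as the $t\nearrow1$ limit of the resolvents of $e^{t\beta_s\cdot h}Ae^{-t\beta_s\cdot h}$ (the analog of Lemma \ref{L:bilin}) before the fiberwise formula for the reduced Green's function can be written down; your direct conjugation by $\beta_s$ elides this limiting step.
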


\begin{proof}
\begin{enumerate}[(a)]
\item
By considering the operator $e^{\beta_0 \cdot h(x)}Ae^{-\beta_0 \cdot h(x)}$ instead of $A$, we can assume that $\beta_0=0$. 
We follow the proof of Theorem \ref{main} (see also the outline of the proof at the end of Section \ref{mainresults}). To apply the Floquet reduction of the problem as in Subsection \ref{fl-reduction}, we need to obtain the following analog of Proposition \ref{singularity}: for any $t \in [0,1]$ and $k \in \mathbb{R}^d$, then $\lambda$ is in the resolvent set of $A(k+it\beta_s)$ if and only if $k \in 2\pi\mathbb{Z}^d$ and $t=1$. Indeed, if $t=1$, this statement follows directly from \cite[Lemma 5.8]{KP2} (see also \cite[Lemma 15]{KP1}). Otherwise, when $t\in [0,1)$, then by the concavity of $\Lambda_A$ (Theorem \ref{AP}), one has $\Lambda(t\beta_s)\geq t\Lambda(\beta_s)+(1-t)\Lambda(0)>\lambda$. This allows us to apply \cite[Lemma 5.8]{KP2} again to conclude that $\lambda$ is not in $\sigma(A(k+it\beta_s))$ for any $k \in \mathbb{R}^d$. By using this fact, for any $s \in \mathcal{A}_h$, the integral kernel $G_{s,\lambda}(x,y)$ of the operator $R_{s,\lambda}$ defined via \mref{E:R_s} exists (see Lemma \ref{L:bilin}). Now we can repeat the argument \footnote{Note that the proof of Theorem \ref{microlocal} also works in this case.} in Subsection \ref{isolate-green} to see that the asymptotics of $G_{s,\lambda}$ is the same as the asymptotics of the following integral (see \mref{E:formula_G0}): 
$$G_0(x,y)=\frac{1}{(2\pi)^d}\int_{[-\pi, \pi]^d}\frac{e^{ik\cdot (h(x)-h(y))}\eta(k)}{\Lambda_A(\beta_s-ik)-\lambda}\cdot\frac{\phi_{k+i\beta_s}(x)\phi^*_{k+i\beta_s}(y)}{(\phi_{k+i\beta_s}, \phi^*_{k+i\beta_s})_{L^2(M)}}dk,$$
where $\eta$ is a cut-off smooth function on $(-\pi, \pi)^d$ such that $\eta(k)=1$ around $k=0$ and moreover, the function $\Lambda_A(\cdot)$ has an analytic continuation to an open neighborhood of the support of $\eta$. To finish the proof, we just use Proposition \ref{P:asymp_KKR} to find the asymptotics of $G_0$ like in the proof of Theorem \ref{main}.
\item
This is similar to the proof of Theorem \ref{main_KR}. We skip the details.
\end{enumerate}
\end{proof}
Note that when $A$ is symmetric and $\lambda$ does not exceed the bottom of $\sigma(A)$, the quasimomentum $k_0$ in Assumption A is zero due to Theorem \ref{AP} (d), and thus, the asymptotics in Theorem \ref{main} and Theorem \ref{main-nsa} are the same one.

As an application of Theorem \ref{main-nsa}, we describe the Martin compactifications, Martin boundaries, and Martin integral representation for such operators (see e.g., \cite{Murata-skew, Pinsky, Pinsky_bk, Woess} for some basic background on Martin boundary theory). This also generalizes Theorem 1.5 and Theorem 1.7 in \cite{MT}.

\begin{thm}
\label{martin}
\begin{enumerate}[(a)]
\item
Let $d \geq 2$ and $\lambda \in (-\infty, \Lambda_A)$. Then both of the Martin boundary and the minimal Martin boundary of the abelian covering $X$ for the operator $A-\lambda$ are homeomorphic to $\Gamma_{\lambda}$ (or the sphere $\mathbb{S}^{d-1}$). The Martin compactification of $X$ for $A$ is equal to $X \cup \mathbb{S}^{d-1}$, i.e., $X$ is adjoined by the sphere $\mathbb{S}^{d-1}$ at infinity. Furthermore, for any normalized positive solution $u$ of the equation $Au=\lambda u$ (i.e., $u \in \mathcal{K}_{A,\lambda}$), there exists a unique regular Borel probability measure $\mu$ on $\mathbb{S}^{d-1}$ such that
$$u(x)=\int_{\mathbb{S}^{d-1}}e^{-(h(x)-h(x_0))\cdot \beta_s}\phi_{\beta_s}(x)d\mu(s).$$
\item
Let $d \geq 3$ and $\Lambda_A=\Lambda_A(\beta_0)$. Then the Martin boundary and the minimal Martin boundary coincide with the set $\Gamma_{\Lambda_A}=\{\beta_0\}$. The Martin compactification is the one-point compactification of $X$. Moreover, any positive solution $u$ of the equation $Au=\Lambda_A u$ is a positive scalar multiple of the function $e^{- h(x)\cdot \beta_0}\phi_{\beta_0}(x)$.
\end{enumerate}
\end{thm}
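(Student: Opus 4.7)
The plan is to deduce Theorem \ref{martin} directly from the Green's function asymptotics in Theorem \ref{main-nsa} by computing the pointwise limits of Martin kernels and then invoking standard Martin boundary theory. Fix the reference point $x_0$ and define the Martin kernel $K_\lambda(x,y):=G_\lambda(x,y)/G_\lambda(x_0,y)$. By Proposition \ref{P:add_func}, sequences $y_n\to\infty$ in $(X,d_X)$ correspond to $|h(y_n)|\to\infty$, and any subsequence may be refined so that $h(y_n)/|h(y_n)|$ converges on the unit sphere. The whole argument rests on identifying these limits on the sphere with the boundary.

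For part (a), I will first examine the asymptotics of Theorem \ref{main-nsa}(a) applied to the pair $(x,y)$ and to $(x_0,y)$. Writing $s_n=(h(x)-h(y_n))/|h(x)-h(y_n)|$ and $s'_n=(h(x_0)-h(y_n))/|h(x_0)-h(y_n)|$, both of these tend to the same limit $s^\ast$ whenever $h(y_n)/|h(y_n)|\to -s^\ast$ (since $h(x),h(x_0)$ are fixed). Because the continuity from Lemma \ref{L:joint_continuity} makes $\beta_s$, $\phi_{\beta_s}$, $\phi^\ast_{\beta_s}$, $\nabla\Lambda_A(\beta_s)$ and $\mathrm{Hess}(\Lambda_A)(\beta_s)$ depend continuously on $s$, and since the polynomial-in-$|h(x)-h(y_n)|$ prefactors cancel in the ratio up to a factor tending to $1$, the remainder term $O(d_X^{-d/2})$ in Theorem \ref{main-nsa}(a) is negligible compared with the leading exponential. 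Dividing, one obtains
\begin{equation*}
\lim_{n\to\infty}K_\lambda(x,y_n)=e^{-(h(x)-h(x_0))\cdot\beta_{s^\ast}}\,\phi_{\beta_{s^\ast}}(x)=:K_\lambda(x,s^\ast).
\end{equation*}
This limit is exactly a normalized positive $G$-multiplicative solution with exponent $-\beta_{s^\ast}$, i.e.\ an element of $\mathcal{M}_{A,\lambda}$. Conversely, the admissible directions from Proposition \ref{A_h} are dense in $\mathbb{S}^{d-1}$, so every $s^\ast$ is realized as such a limit (for any rational approximant one can choose $y_n=g_n\cdot x_0$ with $g_n/|g_n|\to -s^\ast$). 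The map $s^\ast\mapsto K_\lambda(\cdot,s^\ast)$ is injective because distinct exponents $\beta_{s^\ast}$ give solutions with different exponential growth rates along orbits. Hence the Martin boundary is homeomorphic to $\mathbb{S}^{d-1}$, and via the diffeomorphism $s\mapsto\beta_s$ also to $\Gamma_\lambda$; the Martin compactification is $X\sqcup\mathbb{S}^{d-1}$. Since $\mathcal{M}_{A,\lambda}$ is known from \cite{Agmon,LinPinchover} to be exactly the set of minimal positive solutions of $(A-\lambda)u=0$, the minimal Martin boundary coincides with the full Martin boundary, and the Choquet--Martin integral representation of \cite{Pinsky_bk} then yields the claimed formula with a unique probability measure $\mu$ on $\mathbb{S}^{d-1}$.

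For part (b) I will use Theorem \ref{main-nsa}(b). Here the Hessian $H=-\mathrm{Hess}(\Lambda_A)(\beta_0)$ is positive definite and the polynomial denominator becomes $|H^{-1/2}(h(x)-h(y))|^{d-2}$. For any sequence $y_n\to\infty$, writing $h(y_n)=r_n\omega_n$ with $r_n\to\infty$ and $\omega_n\in\mathbb{S}^{d-1}$, one checks that
\begin{equation*}
\frac{|H^{-1/2}(h(x_0)-h(y_n))|^{d-2}}{|H^{-1/2}(h(x)-h(y_n))|^{d-2}}\longrightarrow 1,
\end{equation*}
because both numerator and denominator equal $r_n^{d-2}|H^{-1/2}\omega_n|^{d-2}(1+o(1))$. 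Combined with the exponential factor $e^{-(h(x)-h(y_n))\cdot\beta_0}/e^{-(h(x_0)-h(y_n))\cdot\beta_0}=e^{-(h(x)-h(x_0))\cdot\beta_0}$ and the $y$-dependent pieces $\phi^\ast_{\beta_0}(y_n)$ canceling, the limit of $K(x,y_n)$ is the single function $e^{-(h(x)-h(x_0))\cdot\beta_0}\phi_{\beta_0}(x)$, independent of the sequence. Therefore the Martin boundary is a single point, the Martin compactification is the one-point compactification of $X$, and every positive solution of $Au=\Lambda_A u$ is, by Martin's theorem, a positive multiple of this unique minimal solution.

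The main obstacle I anticipate is the uniformity needed to justify taking limits of the asymptotic formula in Theorem \ref{main-nsa}(a) when the ``direction'' $s_n$ varies with $n$: one must make sure that the implicit constants in the remainder $O(d_X(x,y)^{-d/2})$ do not blow up as $s$ moves in $\mathbb{S}^{d-1}$, and that the continuous dependence of $\phi_{\beta_s}$, $\phi^\ast_{\beta_s}$, $\beta_s$, and $\mathrm{Hess}(\Lambda_A)(\beta_s)$ on $s$ is strong enough to pass to the limit uniformly on compact sets in $x$. The uniformity in $s$ of the remainder is exactly what Proposition \ref{P:asymp_KKR} provides (and is the reason it was proved with $s$-uniform estimates), and the joint continuity statement of Lemma \ref{L:joint_continuity} handles the eigendata; so in fact these inputs have been prepared precisely for this step. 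A secondary check is the injectivity $s\mapsto K_\lambda(\cdot,s)$, which follows because two positive multiplicative solutions with different exponents $-\beta_{s^\ast}\neq-\beta_{t^\ast}$ in $\Gamma_\lambda$ differ in their exponential growth along an appropriate orbit, as $\Gamma_\lambda$ consists of extreme points of the strictly convex set $K_\lambda$.
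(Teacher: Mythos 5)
Your proposal is correct and follows essentially the same route as the paper's proof: both parts are deduced by computing the limits of the Martin kernels $G_\lambda(x,y)/G_\lambda(x_0,y)$ from the asymptotics of Theorem \ref{main-nsa}, identifying the limits with the multiplicative solutions $e^{-(h(x)-h(x_0))\cdot\beta_s}\phi_{\beta_s}(x)$, using the density of admissible directions from Proposition \ref{A_h} together with injectivity of $s\mapsto K_\lambda(\cdot,s)$ to identify the boundary with $\mathbb{S}^{d-1}$ (resp.\ a single point), and concluding with the Martin integral representation theorem. Your additional remarks on the $s$-uniformity of the remainder and the joint continuity of the eigendata make explicit what the paper leaves implicit, but do not change the argument.
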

\begin{proof}
\begin{enumerate}[(a)]
\item
For any $x,y$ in $X$, we define
\begin{equation*}
\label{martin-kernel}
\begin{split}
K_{\lambda}(x,y)&=\frac{G_{\lambda}(x,y)}{G_{\lambda}(x_0,y)}, \quad y \neq x_0,\\
K_{\lambda}(x,y)&=\delta_{x,x_0}, \quad y=x_0.
\end{split}
\end{equation*}
Let us denote by $(\partial_{M}X)_{A-\lambda}$ the Martin boundary for $A-\lambda$ on $X$. Then $(\partial_{M}X)_{A-\lambda}$ consists of all equivalent classes of fundamental sequences
$\{y_m\}_m$ in $X$. Here $\{y_m\}_m$ is called fundamental if it has no accumulation point in $X$ and the sequence $\{K_{\lambda}(\cdot, y_m)\}_m$ converges uniformly on any compact subset of $X$ to a positive solution of the equation $Au=\lambda u$. Two fundamental sequences $\{y_m\}_m$ and $\{z_m\}_m$ are equivalent if on any compact subset of $X$, $\{|K_{\lambda}(\cdot, y_m)-K_{\lambda}(\cdot, z_m)|\}_m$ converges uniformly to zero. 

Consider a fundamental sequence $\{y_m\}_m$ in $(\partial_{M}X)_{A-\lambda}$.
Then there exists a subsequence $\{y_{m_k}\}_k$ such that $\{h(y_{m_k})/|h(y_{m_k})|\}_k$ converges to a unit vector $s \in \mathbb{S}^{d-1}$ and $\displaystyle \lim_{k \rightarrow \infty} |h(y_{m_k})|=\infty$. 
By Theorem \ref{main-nsa} (a), we have
\begin{equation}
\label{martin-kernel-rhs}
\lim_{k \rightarrow \infty}K_{\lambda}(x,y_{m_k})=e^{-(h(x)-h(x_0))\cdot \beta_s} \frac{\phi_{\beta_s}(x)}{\phi_{\beta_s}(x_0)}=e^{-(h(x)-h(x_0))\cdot \beta_s}\phi_{\beta_s}(x).
\end{equation}
If we denote by $K_{\lambda}(x,s)$ the right-hand side of \mref{martin-kernel-rhs}, then $K_{\lambda}(\cdot,s)$ is a (minimal) positive solution in $\mathcal{M}_{A,\lambda}$. 
Also, $K_{\lambda}(x,y_{m_k}) \rightarrow K_{\lambda}(x,s)$ uniformly on any compact subset of $X$. Since $K(\cdot,s_1) \neq K_{\lambda}(\cdot, s_2)$ if $s_1 \neq s_2$ in $\mathbb{S}^{d-1}$, we must have $\displaystyle \lim_{m \rightarrow \infty} |h(y_{m})|=\infty$ and $\displaystyle\lim_{m \rightarrow \infty}h(y_{m})/|h(y_{m})|=s$. 
This implies that 
\begin{equation}
\label{martin-boundary}
(\partial_{M}X)_{A-\lambda}=\left\{s \in \mathbb{S}^{d-1} \mid \exists \hspace{3pt} \{y_m\}_m \subset X \hspace{3pt} \mbox{such that} \hspace{3pt} |h(y_m)| \rightarrow \infty, \hspace{3pt} \frac{h(y_m)}{|h(y_m)|} \rightarrow s\right\}.
\end{equation}
The right-hand side of \mref{martin-boundary} coincides with the closure of $\mathcal{A}_h$, which is $\mathbb{S}^{d-1}$ by Proposition \ref{A_h}. This proves the first statement. The latter statement follows from the Martin integral representation theorem (see e.g., \cite[Theorem 1.1]{Murata-skew}).
\item
In this case, by Theorem \ref{main-nsa} (b), the Martin kernel is equal to 
$$\lim_{d_X(y,x_0) \rightarrow \infty}\frac{G(x,y)}{G(x_0,y)}=e^{-(h(x)-h(x_0))\cdot \beta_0}\phi_{\beta_0}(x).$$ 
This proves the statement immediately.
\end{enumerate}
\end{proof}
We remark that the integral representation type results stated in Theorem \ref{martin} are special cases of \cite[Theorem 6.11]{LinPinchover}, which also holds for periodic elliptic operators of second-order on nilpotent Riemannian co-compact coverings.
%%%%%%%%%%%%%%%%%%%%%%%
\section{Concluding remarks}
\begin{itemize}
\item 
Notice that Theorem \ref{main_KR} (see also \cite[Theorem 2]{KR}) can be applied to operators with periodic magnetic potentials since this result does not require the realness of the operator $L$. On the other hand, the conditions that $L$ has real coefficients and the gap edge occurs at a high symmetry point of the Brillouin zone (i.e., the assumption \textbf{A5}) are assumed in the inside-the-gap situation, mainly because the central symmetry of the relevant dispersion branch $\lambda_j(k)$ (see Lemma \ref{func_E}) is needed for the formulation and the proof of Theorem \ref{main} (see also \cite[Theorem 2.11]{KKR}).

\item 
The main results in this paper can be easily carried over to the case when the band edge occurs at finitely many quasimomenta $k_0$ in the Brillouin zone (instead of assuming the condition \textbf{A3}) by summing the asymptotics coming from all these non-degenerate isolated extrema.\\
It was shown in \cite{FilKach} that for a wide class of two dimensional periodic second-order elliptic operators (including the class of operators we consider in this paper and periodic magnetic Schr\"odinger operators in $2D$), the extrema of any spectral band function (not necessarily spectral edges) are attained on a finite set of values of the quasimomentum in the Brillouin zone. 

\item The proofs of the main results go through verbatim for periodic elliptic second-order operators acting on vector bundles over the abelian covering $X$. 
\end{itemize}

%%%%%%%%%%%%%%%%%
\section{Acknowledgements}
The work of the author was partly supported by the NSF grant DMS-1517938. The author is grateful to his advisor, Professor  P. Kuchment, for insightful discussion and helpful comments. He also thanks the reviewer for giving constructive comments on this manuscript.
%%%%%%%%%%%%%%%%%%

\begin{bibdiv}
\begin{biblist}
\bib{Agmon}{book}{
    AUTHOR = {Agmon, Shmuel},
    TITLE = {Lectures on elliptic boundary value problems},
    PUBLISHER = {AMS Chelsea Publishing, Providence, RI},
      YEAR = {2010},
     PAGES = {x+216},
      ISBN = {978-0-8218-4910-1},
   MRCLASS = {35-02 (35Jxx 35P05)},
  MRNUMBER = {2589244 (2011c:35004)},
}
\bib{Ag}{article}{
    AUTHOR = {Agmon, Shmuel},
     TITLE = {On kernels, eigenvalues, and eigenfunctions of operators
              related to elliptic problems},
   JOURNAL = {Comm. Pure Appl. Math.},
  FJOURNAL = {Communications on Pure and Applied Mathematics},
    VOLUME = {18},
      YEAR = {1965},
     PAGES = {627--663},
      ISSN = {0010-3640},
   MRCLASS = {47.65 (35.80)},
  MRNUMBER = {0198287 (33 \#6446)},
}
\bib{Ag2}{article}{
    AUTHOR = {Agmon, Shmuel},
     TITLE = {On positive solutions of elliptic equations with periodic
              coefficients in {${\bf R}^n$}, spectral results and
              extensions to elliptic operators on {R}iemannian manifolds},
 BOOKTITLE = {Differential equations ({B}irmingham, {A}la., 1983)},
    SERIES = {North-Holland Math. Stud.},
    VOLUME = {92},
     PAGES = {7--17},
 PUBLISHER = {North-Holland, Amsterdam},
      YEAR = {1984},
   MRCLASS = {35J15 (35B05 35P99 58G25)},
  MRNUMBER = {799327 (87a:35060)},
MRREVIEWER = {W. Allegretto},
}
\bib{Ancona}{article}{
    AUTHOR = {Ancona, Alano},
     TITLE = {Some results and examples about the behavior of harmonic
              functions and {G}reen's functions with respect to second order
              elliptic operators},
   JOURNAL = {Nagoya Math. J.},
  FJOURNAL = {Nagoya Mathematical Journal},
    VOLUME = {165},
      YEAR = {2002},
     PAGES = {123--158},
      ISSN = {0027-7630},
     CODEN = {NGMJA2},
   MRCLASS = {31C12 (31C35 60J45 60J60)},
  MRNUMBER = {1892102 (2002m:31012)},
MRREVIEWER = {Yehuda Pinchover},
       URL = {http://projecteuclid.org.lib-ezproxy.tamu.edu:2048/euclid.nmj/1114631702},
}
\bib{AshMer}{book}{
    AUTHOR = {Ashcroft, N. W.},
    AUTHOR = {Mermin, N. D.},
     TITLE = {Solid State Physics},
 PUBLISHER = {Holt, Rinehart and Winston},
   ADDRESS = {New York-London},
      YEAR = {1976},
}
\bib{Atiyah}{article}{
    AUTHOR = {Atiyah, M. F.},
     TITLE = {Elliptic operators, discrete groups and von {N}eumann
              algebras},
 BOOKTITLE = {Colloque ``{A}nalyse et {T}opologie'' en l'{H}onneur de
              {H}enri {C}artan ({O}rsay, 1974)},
     PAGES = {43--72. Ast{\'e}risque, No. 32-33},
 PUBLISHER = {Soc. Math. France},
   ADDRESS = {Paris},
      YEAR = {1976},
   MRCLASS = {58G10 (46L10 22E45)},
  MRNUMBER = {MR0420729 (54 {\#}8741)},
}
\bib{Bab}{article}{
    AUTHOR = {Babillot, M.},
     TITLE = {Th\'eorie du renouvellement pour des cha\^\i nes
              semi-markoviennes transientes},
   JOURNAL = {Ann. Inst. H. Poincar\'e Probab. Statist.},
  FJOURNAL = {Annales de l'Institut Henri Poincar\'e. Probabilit\'es et
              Statistique},
    VOLUME = {24},
      YEAR = {1988},
    NUMBER = {4},
     PAGES = {507--569},
      ISSN = {0246-0203},
     CODEN = {AHPBAR},
   MRCLASS = {60K15 (60J15 60K05)},
  MRNUMBER = {978023 (90h:60082)},
MRREVIEWER = {Philippe Bougerol},
       URL = {http://www.numdam.org.lib-ezproxy.tamu.edu:2048/item?id=AIHPB_1988__24_4_507_0},
}
\bib{BruSun1}{article}{
    AUTHOR = {Br{\"u}ning, Jochen},
    AUTHOR = {Sunada, Toshikazu},
     TITLE = {On the spectrum of periodic elliptic operators},
   JOURNAL = {Nagoya Math. J.},
  FJOURNAL = {Nagoya Mathematical Journal},
    VOLUME = {126},
      YEAR = {1992},
     PAGES = {159--171},
      ISSN = {0027-7630},
     CODEN = {NGMJA2},
   MRCLASS = {58G25 (35P05 46L99 47F05)},
  MRNUMBER = {1171598 (93f:58235)},
MRREVIEWER = {Friedbert Pr{\"u}fer},
       URL = {http://projecteuclid.org.lib-ezproxy.tamu.edu:2048/euclid.nmj/1118783180},
}
\bib{BruSun2}{article}{
    AUTHOR = {Br{\"u}ning, Jochen},
    AUTHOR = {Sunada, Toshikazu},
     TITLE = {On the spectrum of gauge-periodic elliptic operators},
      NOTE = {M{\'e}thodes semi-classiques, Vol. 2 (Nantes, 1991)},
   JOURNAL = {Ast\'erisque},
  FJOURNAL = {Ast\'erisque},
    NUMBER = {210},
      YEAR = {1992},
     PAGES = {6, 65--74},
      ISSN = {0303-1179},
   MRCLASS = {58G25 (35P05 46L99 47F05)},
  MRNUMBER = {1221352 (94j:58170)},
MRREVIEWER = {Friedbert Pr{\"u}fer},
}
%\bib{BS_2001}{article}{
%    AUTHOR = {Birman, Michael},
%    AUTHOR = {Suslina, Tatyana},
%     TITLE = {Threshold effects near the lower edge of the spectrum for
%              periodic differential operators of mathematical physics},
% BOOKTITLE = {Systems, approximation, singular integral operators, and
%              related topics ({B}ordeaux, 2000)},
%    SERIES = {Oper. Theory Adv. Appl.},
%    VOLUME = {129},
%     PAGES = {71--107},
% PUBLISHER = {Birkh\"auser, Basel},
%      YEAR = {2001},
%   MRCLASS = {35P05 (35B10 35J15 47F05 47N20 74G10 81Q10)},
%  MRNUMBER = {1882692 (2003f:35220)},
%MRREVIEWER = {Rainer Hempel},
%}
%\bib{BS_2003}{article}{
%     AUTHOR = {Birman, Michael},
%     AUTHOR = {Suslina, Tatyana},
%     TITLE = {Periodic second-order differential operators. {T}hreshold
%              properties and averaging},
%   JOURNAL = {Algebra i Analiz},
%  FJOURNAL = {Rossi\u\i skaya Akademiya Nauk. Algebra i Analiz},
%    VOLUME = {15},
%      YEAR = {2003},
%    NUMBER = {5},
%     PAGES = {1--108},
%      ISSN = {0234-0852},
%   MRCLASS = {47F05 (35B10 35J10 35P05 35Q40)},
%  MRNUMBER = {2068790 (2005k:47097)},
%MRREVIEWER = {Yulia E. Karpeshina},
%}
\bib{Cha}{book}{
    AUTHOR = {Chavel, Isaac},
     TITLE = {Riemannian geometry},
    SERIES = {Cambridge Studies in Advanced Mathematics},
    VOLUME = {98},
   EDITION = {Second Edition},
      NOTE = {A modern introduction},
 PUBLISHER = {Cambridge University Press, Cambridge},
      YEAR = {2006},
     PAGES = {xvi+471},
      ISBN = {978-0-521-61954-7; 0-521-61954-8},
   MRCLASS = {53-02 (53Cxx)},
  MRNUMBER = {2229062 (2006m:53002)},
%       DOI = {10.1017/CBO9780511616822},
%       URL = {http://dx.doi.org.lib-ezproxy.tamu.edu:2048/10.1017/CBO9780511616822},
}
\bib{Davies-heat}{book}{
    AUTHOR = {Davies, E. B.},
     TITLE = {Heat kernels and spectral theory},
    SERIES = {Cambridge Tracts in Mathematics},
    VOLUME = {92},
 PUBLISHER = {Cambridge University Press, Cambridge},
      YEAR = {1990},
     PAGES = {x+197},
      ISBN = {0-521-40997-7},
   MRCLASS = {35D05 (35-02 35J15 35K05 47F05 58G11)},
  MRNUMBER = {1103113},
}
\bib{Ea}{book}{
    AUTHOR = {Eastham, M. S. P.},
     TITLE = {The spectral theory of periodic differential equations},
    SERIES = {Texts in Mathematics (Edinburgh)},
 PUBLISHER = {Scottish Academic Press, Edinburgh; Hafner Press, New York},
      YEAR = {1973},
     PAGES = {viii+130},
      ISBN = {7-011-1936-5},
   MRCLASS = {34-02 (34C25 34L05 35P05 47A10 47F05)},
  MRNUMBER = {3075381},
}
%\bib{FKTassym}{article} {
%    AUTHOR = {Feldman, Joel},
%    AUTHOR = {Kn{\"o}rrer, Horst},
%    AUTHOR = {Trubowitz, Eugene},
%     TITLE = {Asymmetric {F}ermi surfaces for magnetic {S}chr\"odinger
%              operators},
%   JOURNAL = {Comm. Partial Differential Equations},
%  FJOURNAL = {Communications in Partial Differential Equations},
%    VOLUME = {25},
%      YEAR = {2000},
%    NUMBER = {1-2},
%     PAGES = {319--336},
%      ISSN = {0360-5302},
%     CODEN = {CPDIDZ},
%   MRCLASS = {35Q40 (47F05 81Q10 82B10)},
%  MRNUMBER = {1737550 (2001e:35145)},
%MRREVIEWER = {Guido Gentile},
%}
\bib{FilKach}{article}{
AUTHOR= {Filonov, Nikolay},
AUTHOR = {Kachkovskiy, Ilya},
 TITLE = {On the structure of band edges of 2D periodic elliptic operators},
 note={arXiv:1510.04367, preprint},
}

\bib{Gro}{article}{
    AUTHOR = {Gromov, M.},
     TITLE = {Hyperbolic groups},
 BOOKTITLE = {Essays in group theory},
    SERIES = {Math. Sci. Res. Inst. Publ.},
    VOLUME = {8},
     PAGES = {75--263},
 PUBLISHER = {Springer, New York},
      YEAR = {1987},
   MRCLASS = {20F32 (20F06 20F10 22E40 53C20 57R75 58F17)},
  MRNUMBER = {919829 (89e:20070)},
MRREVIEWER = {Christopher W. Stark},      
}
\bib{GUI}{misc}{
  AUTHOR = {Guillemin, Victor},
  TITLE = {Notes on elliptic operators},
  % Institution = {MIT},
%  Howpublished = {Lecture Notes},
  note={\url{http://math.mit.edu/~vwg/classnotes-spring05.pdf}},
  eprint={http://math.mit.edu/~vwg/classnotes-spring05.pdf},
  Year = {Spring 2005},
  Organization= {MIT},
}
\bib{HKSW}{article}{
    AUTHOR = {Harrison, J. M.},
    AUTHOR = {Kuchment, P.},
    AUTHOR = {Sobolev, A.},
    AUTHOR = {Winn, B.},
     TITLE = {On occurrence of spectral edges for periodic operators inside
              the {B}rillouin zone},
   JOURNAL = {J. Phys. A},
  FJOURNAL = {Journal of Physics. A. Mathematical and Theoretical},
    VOLUME = {40},
      YEAR = {2007},
    NUMBER = {27},
     PAGES = {7597--7618},
      ISSN = {1751-8113},
   MRCLASS = {81Q10 (35P05 35Q40 47F05)},
  MRNUMBER = {2369966 (2008j:81039)},
MRREVIEWER = {Pavel V. Exner},
}
\bib{Ka}{book}{
    AUTHOR = {Kato, Tosio},
     TITLE = {Perturbation theory for linear operators},
   EDITION = {Second},
 PUBLISHER = {Springer-Verlag, Berlin-New York},
      NOTE = {Grundlehren der Mathematischen Wissenschaften, Band 132},
      YEAR = {1976},
     PAGES = {xxi+619},
   MRCLASS = {47-XX},
  MRNUMBER = {0407617 (53 \#11389)},
}
\bib{Kha}{article}{
AUTHOR= {Kha, Minh},
 TITLE = {A short note on additive functions on Riemannian co-compact coverings.},
note={arXiv:1511.00185, preprint},
}
\bib{KKR}{article}{
AUTHOR= {Kha, Minh},
AUTHOR = {Kuchment, Peter},
AUTHOR = {Raich, Andrew},
 TITLE = {Green's function asymptotics near the internal edges of
              spectra of periodic elliptic operators. {S}pectral gap interior},
 note={arXiv:1508.06703, To appear in Journal of Spectral Theory},
}
\bib{KS}{article}{
    AUTHOR = {Kirsch, Werner},
    AUTHOR = {Simon, Barry},
     TITLE = {Comparison theorems for the gap of {S}chr\"odinger operators},
   JOURNAL = {J. Funct. Anal.},
  FJOURNAL = {Journal of Functional Analysis},
    VOLUME = {75},
      YEAR = {1987},
    NUMBER = {2},
     PAGES = {396--410},
      ISSN = {0022-1236},
     CODEN = {JFUAAW},
   MRCLASS = {35P15 (35J10 81C10 82A05)},
  MRNUMBER = {916759 (89b:35127)},
MRREVIEWER = {David Gurarie},
}
\bib{KloppRalston}{article}{
    AUTHOR = {Klopp, Fr{\'e}d{\'e}ric},
    AUTHOR = {Ralston, James},
     TITLE = {Endpoints of the spectrum of periodic operators are
              generically simple},
   JOURNAL = {Methods Appl. Anal.},
  FJOURNAL = {Methods and Applications of Analysis},
    VOLUME = {7},
      YEAR = {2000},
    NUMBER = {3},
     PAGES = {459--463},
      ISSN = {1073-2772},
   MRCLASS = {47F05 (35B10 35J10 35P05 47A10)},
  MRNUMBER = {1869296 (2002i:47055)},
MRREVIEWER = {George D. Raikov},
}
\bib{KOS}{article}{
    AUTHOR = {Kobayashi, Toshiyuki},
    AUTHOR = {Ono, Kaoru},
    AUTHOR = {Sunada, Toshikazu},
     TITLE = {Periodic {S}chr\"odinger operators on a manifold},
   JOURNAL = {Forum Math.},
  FJOURNAL = {Forum Mathematicum},
    VOLUME = {1},
      YEAR = {1989},
    NUMBER = {1},
     PAGES = {69--79},
      ISSN = {0933-7741},
     CODEN = {FOMAEF},
   MRCLASS = {58G25 (35J10 47F05)},
  MRNUMBER = {978976 (89k:58288)},
MRREVIEWER = {Harold Donnelly},
}
\bib{KoSu}{article}{
    AUTHOR = {Kotani, Motoko},
    AUTHOR = {Sunada, Toshikazu},
     TITLE = {Albanese maps and off diagonal long time asymptotics for the
              heat kernel},
   JOURNAL = {Comm. Math. Phys.},
  FJOURNAL = {Communications in Mathematical Physics},
    VOLUME = {209},
      YEAR = {2000},
    NUMBER = {3},
     PAGES = {633--670},
      ISSN = {0010-3616},
     CODEN = {CMPHAY},
   MRCLASS = {58J37 (58J35 58J65)},
  MRNUMBER = {1743611 (2001h:58036)},
MRREVIEWER = {Ivan G. Avramidi},
}
\bib{K}{book}{
    AUTHOR = {Kuchment, Peter},
     TITLE = {Floquet theory for partial differential equations},
    SERIES = {Operator Theory: Advances and Applications},
    VOLUME = {60},
 PUBLISHER = {Birkh\"auser Verlag, Basel},
      YEAR = {1993},
     PAGES = {xiv+350},
      ISBN = {3-7643-2901-7},
   MRCLASS = {35-02 (35C15 35P10 47N20)},
  MRNUMBER = {1232660 (94h:35002)},
MRREVIEWER = {Yehuda Pinchover},
}
\bib{KP1}{article}{
    AUTHOR = {Kuchment, Peter},
    AUTHOR = {Pinchover, Yehuda},
     TITLE = {Integral representations and {L}iouville theorems for
              solutions of periodic elliptic equations},
   JOURNAL = {J. Funct. Anal.},
  FJOURNAL = {Journal of Functional Analysis},
    VOLUME = {181},
      YEAR = {2001},
    NUMBER = {2},
     PAGES = {402--446},
      ISSN = {0022-1236},
   MRCLASS = {35J05 (35B10 35C15)},
  MRNUMBER = {1821702},
MRREVIEWER = {Fr\~A\copyright{}d\~A\copyright{}ric Klopp},
       DOI = {10.1006/jfan.2000.3727},
       URL = {http://dx.doi.org.lib-ezproxy.tamu.edu:2048/10.1006/jfan.2000.3727},
}
\bib{KP2}{article}{
    AUTHOR = {Kuchment, Peter},
    AUTHOR = {Pinchover, Yehuda},
     TITLE = {Liouville theorems and spectral edge behavior on abelian
              coverings of compact manifolds},
   JOURNAL = {Trans. Amer. Math. Soc.},
  FJOURNAL = {Transactions of the American Mathematical Society},
    VOLUME = {359},
      YEAR = {2007},
    NUMBER = {12},
     PAGES = {5777--5815},
      ISSN = {0002-9947},
     CODEN = {TAMTAM},
   MRCLASS = {58J05 (35B05 35J15 35P05 58J50)},
  MRNUMBER = {2336306 (2008h:58037)},
MRREVIEWER = {Alberto Parmeggiani},
}
\bib{KR}{article}{
    AUTHOR = {Kuchment, Peter},
    AUTHOR = { Raich, Andrew}
     TITLE = {Green's function asymptotics near the internal edges of
              spectra of periodic elliptic operators. {S}pectral edge case},
   JOURNAL = {Math. Nachr.},
  FJOURNAL = {Mathematische Nachrichten},
    VOLUME = {285},
      YEAR = {2012},
    NUMBER = {14-15},
     PAGES = {1880--1894},
      ISSN = {0025-584X},
   MRCLASS = {35J08 (31B35 35J15 35P05 47A10)},
  MRNUMBER = {2988010},
}
%\bib{Lee}{book}{
%    AUTHOR = {Lee, John M.},
%     TITLE = {Introduction to smooth manifolds},
%    SERIES = {Graduate Texts in Mathematics},
%    VOLUME = {218},
%   EDITION = {Second Edition},
% PUBLISHER = {Springer, New York},
%      YEAR = {2013},
%     PAGES = {xvi+708},
%      ISBN = {978-1-4419-9981-8},
%   MRCLASS = {58-01 (53-01 57-01)},
%  MRNUMBER = {2954043},
%}
\bib{LinPinchover}{article}{
    AUTHOR = {Lin, Vladimir Ya.},
    AUTHOR = {Pinchover, Yehuda},
     TITLE = {Manifolds with group actions and elliptic operators},
   JOURNAL = {Mem. Amer. Math. Soc.},
  FJOURNAL = {Memoirs of the American Mathematical Society},
    VOLUME = {112},
      YEAR = {1994},
    NUMBER = {540},
     PAGES = {vi+78},
      ISSN = {0065-9266},
     CODEN = {MAMCAU},
   MRCLASS = {58G03 (35C15 35J15)},
  MRNUMBER = {1230774 (95d:58119)},
MRREVIEWER = {Vadim A. Ka{\u\i}manovich},
%       DOI = {10.1090/memo/0540},
%       URL = {http://dx.doi.org.lib-ezproxy.tamu.edu:2048/10.1090/memo/0540},
}
%\bib{LU}{book}{
%    AUTHOR = {Ladyzhenskaya, Olga A.},
%    AUTHOR = {Ural{\cprime}tseva, Nina N.},
%     TITLE = {Linear and quasilinear elliptic equations},
%    SERIES = {Translated from the Russian by Scripta Technica, Inc.
%              Translation editor: Leon Ehrenpreis},
% PUBLISHER = {Academic Press, New York-London},
%      YEAR = {1968},
%     PAGES = {xviii+495},
%   MRCLASS = {35.47},
%  MRNUMBER = {0244627 (39 \#5941)},
%}
\bib{Luck}{article}{
    AUTHOR = {L{\"u}ck, Wolfgang},
     TITLE = {Survey on geometric group theory},
   JOURNAL = {M\"unster J. Math.},
  FJOURNAL = {M\"unster Journal of Mathematics},
    VOLUME = {1},
      YEAR = {2008},
     PAGES = {73--108},
      ISSN = {1867-5778},
   MRCLASS = {20F65},
  MRNUMBER = {2502495 (2010a:20092)},
MRREVIEWER = {Fran{\c{c}}ois Dahmani},
}
%\bib{LySu}{article}{
%    AUTHOR = {Lyons, Terry},
%    AUTHOR = {Sullivan, Dennis},
%     TITLE = {Function theory, random paths and covering spaces},
%   JOURNAL = {J. Differential Geom.},
%  FJOURNAL = {Journal of Differential Geometry},
%    VOLUME = {19},
%      YEAR = {1984},
%    NUMBER = {2},
%     PAGES = {299--323},
%      ISSN = {0022-040X},
%     CODEN = {JDGEAS},
%   MRCLASS = {58G32 (31C12 60J65)},
%  MRNUMBER = {755228 (86b:58130)},
%MRREVIEWER = {P. E. Jupp},
%       URL = {http://projecteuclid.org.lib-ezproxy.tamu.edu:2048/euclid.jdg/1214438681},
%}
\bib{Murata-skew}{article}{
    AUTHOR = {Murata, Minoru},
     TITLE = {Martin boundaries of elliptic skew products, semismall
              perturbations, and fundamental solutions of parabolic
              equations},
   JOURNAL = {J. Funct. Anal.},
  FJOURNAL = {Journal of Functional Analysis},
    VOLUME = {194},
      YEAR = {2002},
    NUMBER = {1},
     PAGES = {53--141},
      ISSN = {0022-1236},
   MRCLASS = {35J15 (31C35 35A08 35B05)},
  MRNUMBER = {1929139},
MRREVIEWER = {Yehuda Pinchover},
       URL =
              {http://www.sciencedirect.com.lib-ezproxy.tamu.edu:2048/science?_ob=GatewayURL&_origin=MR&_method=citationSearch&_piikey=S0022123601939394&_version=1&md5=aed59dff31e45c444eaebee5c17f1cbd},
}
\bib{MT}{article}{
    AUTHOR = {Murata, Minoru},
    AUTHOR = {Tsuchida, Tetsuo},
     TITLE = {Asymptotics of {G}reen functions and {M}artin boundaries for
              elliptic operators with periodic coefficients},
   JOURNAL = {J. Differential Equations},
  FJOURNAL = {Journal of Differential Equations},
    VOLUME = {195},
      YEAR = {2003},
    NUMBER = {1},
     PAGES = {82--118},
      ISSN = {0022-0396},
     CODEN = {JDEQAK},
   MRCLASS = {35J15 (31C35 35A08 35B10 35B40)},
  MRNUMBER = {2019244 (2004k:35067)},
MRREVIEWER = {Yehuda Pinchover},
}
\bib{MT2}{article}{
     AUTHOR = {Murata, Minoru},
    AUTHOR = {Tsuchida, Tetsuo},
     TITLE = {Asymptotics of {G}reen functions and the limiting absorption
              principle for elliptic operators with periodic coefficients},
   JOURNAL = {J. Math. Kyoto Univ.},
  FJOURNAL = {Journal of Mathematics of Kyoto University},
    VOLUME = {46},
      YEAR = {2006},
    NUMBER = {4},
     PAGES = {713--754},
      ISSN = {0023-608X},
     CODEN = {JMKYAZ},
   MRCLASS = {35A08 (34B27 35B10 35J15 35P25)},
  MRNUMBER = {2320348 (2008d:35002)},
MRREVIEWER = {Xinhua Ji},
}
\bib{NY}{book}{
    AUTHOR = {Nowak, Piotr W.},
    AUTHOR = {Yu, Guoliang},
     TITLE = {Large scale geometry},
    SERIES = {EMS Textbooks in Mathematics},
 PUBLISHER = {European Mathematical Society (EMS), Z\"urich},
      YEAR = {2012},
     PAGES = {xiv+189},
      ISBN = {978-3-03719-112-5},
   MRCLASS = {58-01 (19K56 20F69 46-01 58B34 58D10)},
  MRNUMBER = {2986138},
MRREVIEWER = {Mikhail Ostrovskii},
       DOI = {10.4171/112},
       URL = {http://dx.doi.org.lib-ezproxy.tamu.edu:2048/10.4171/112},
}
\bib{Pin}{article}{
    AUTHOR = {Pinchover, Yehuda},
     TITLE = {On nonexistence of any {$\lambda_0$}-invariant positive
              harmonic function, a counterexample to {S}troock's conjecture},
   JOURNAL = {Comm. Partial Differential Equations},
  FJOURNAL = {Communications in Partial Differential Equations},
    VOLUME = {20},
      YEAR = {1995},
    NUMBER = {9-10},
     PAGES = {1831--1846},
      ISSN = {0360-5302},
     CODEN = {CPDIDZ},
   MRCLASS = {60J60 (35J15 35K10 58G32)},
  MRNUMBER = {1349233 (96e:60145)},
%       DOI = {10.1080/03605309508821153},
%       URL = {http://dx.doi.org.lib-ezproxy.tamu.edu:2048/10.1080/03605309508821153},
}
\bib{Pinsky}{article}{
    AUTHOR = {Pinsky, Ross G.},
     TITLE = {Second order elliptic operators with periodic coefficients:
              criticality theory, perturbations, and positive harmonic
              functions},
   JOURNAL = {J. Funct. Anal.},
  FJOURNAL = {Journal of Functional Analysis},
    VOLUME = {129},
      YEAR = {1995},
    NUMBER = {1},
     PAGES = {80--107},
      ISSN = {0022-1236},
     CODEN = {JFUAAW},
   MRCLASS = {35J15 (31B35 35B20)},
  MRNUMBER = {1322643 (96b:35038)},
MRREVIEWER = {Ya. A. Ro{\u\i}tberg},
%       DOI = {10.1006/jfan.1995.1043},
%       URL = {http://dx.doi.org.lib-ezproxy.tamu.edu:2048/10.1006/jfan.1995.1043},
}
\bib{Pinsky_bk}{book}{
    AUTHOR = {Pinsky, Ross G.},
     TITLE = {Positive harmonic functions and diffusion},
    SERIES = {Cambridge Studies in Advanced Mathematics},
    VOLUME = {45},
 PUBLISHER = {Cambridge University Press, Cambridge},
      YEAR = {1995},
     PAGES = {xvi+474},
      ISBN = {0-521-47014-5},
   MRCLASS = {60J60 (31B05 60H10 60J50)},
  MRNUMBER = {1326606},
MRREVIEWER = {Michael Cranston},
       DOI = {10.1017/CBO9780511526244},
       URL = {http://dx.doi.org.lib-ezproxy.tamu.edu:2048/10.1017/CBO9780511526244},
}
\bib{Taylor1}{book}{
    AUTHOR = {Taylor, Michael E.},
     TITLE = {Partial differential equations {I}. {B}asic theory},
    SERIES = {Applied Mathematical Sciences},
    VOLUME = {115},
   EDITION = {Second Edition},
 PUBLISHER = {Springer, New York},
      YEAR = {2011},
     PAGES = {xxii+654},
      ISBN = {978-1-4419-7054-1},
   MRCLASS = {35-01 (46N20 47F05 47N20)},
  MRNUMBER = {2744150 (2011m:35001)},
       DOI = {10.1007/978-1-4419-7055-8},
       URL = {http://dx.doi.org.lib-ezproxy.tamu.edu:2048/10.1007/978-1-4419-7055-8},
}
\bib{RS4}{book}{
    AUTHOR = {Reed, Michael},
    AUTHOR = {Simon, Barry},
     TITLE = {Methods of modern mathematical physics. {IV}. {A}nalysis of
              operators},
 PUBLISHER = {Academic Press, New
              York-London},
      YEAR = {1978},
     PAGES = {xv+396},
      ISBN = {0-12-585004-2},
   MRCLASS = {47-02 (81.47)},
  MRNUMBER = {0493421 (58 \#12429c)},
MRREVIEWER = {P. R. Chernoff},
}
\bib{Ruzh-Turu}{book}{
    AUTHOR = {Ruzhansky, Michael},
    AUTHOR = {Turunen, Ville},
     TITLE = {Pseudo-differential operators and symmetries},
    SERIES = {Pseudo-Differential Operators. Theory and Applications},
    VOLUME = {2},
      NOTE = {Background analysis and advanced topics},
 PUBLISHER = {Birkh\"auser Verlag, Basel},
      YEAR = {2010},
     PAGES = {xiv+709},
      ISBN = {978-3-7643-8513-2},
   MRCLASS = {35-02 (35S05 43A77 43A80 43A85 47G30 58J40)},
  MRNUMBER = {2567604 (2011b:35003)},
MRREVIEWER = {Fabio Nicola},
       DOI = {10.1007/978-3-7643-8514-9},
       URL = {http://dx.doi.org.lib-ezproxy.tamu.edu:2048/10.1007/978-3-7643-8514-9},
}
%\bib{Sakai}{book}{
%    AUTHOR = {Sakai, Takashi},
%     TITLE = {Riemannian geometry},
%    SERIES = {Translations of Mathematical Monographs},
%    VOLUME = {149},
%      NOTE = {Translated from the 1992 Japanese original by the author},
% PUBLISHER = {American Mathematical Society, Providence, RI},
%      YEAR = {1996},
%     PAGES = {xiv+358},
%      ISBN = {0-8218-0284-4},
%   MRCLASS = {53-01 (53-02)},
%  MRNUMBER = {1390760 (97f:53001)},
%MRREVIEWER = {Conrad Plaut},
%}
\bib{Saloff-Coste}{article}{
    AUTHOR = {Saloff-Coste, Laurent},
     TITLE = {Analysis on {R}iemannian co-compact covers},
 BOOKTITLE = {Surveys in differential geometry. {V}ol. {IX}},
    SERIES = {Surv. Differ. Geom., IX},
     PAGES = {351--384},
 PUBLISHER = {Int. Press, Somerville, MA},
      YEAR = {2004},
   MRCLASS = {58J35 (58J65 60B15 60G50)},
  MRNUMBER = {2195413 (2007i:58033)},
%       DOI = {10.4310/SDG.2004.v9.n1.a10},
%       URL = {http://dx.doi.org.lib-ezproxy.tamu.edu:2048/10.4310/SDG.2004.v9.n1.a10},
}
%\bib{Shubin}{article}{
%    AUTHOR = {{\v{S}}ubin, M. A.},
%     TITLE = {Spectral theory and the index of elliptic operators with
%              almost-periodic coefficients},
%   JOURNAL = {Uspekhi Mat. Nauk},
%  FJOURNAL = {Akademiya Nauk SSSR i Moskovskoe Matematicheskoe Obshchestvo.
%              Uspekhi Matematicheskikh Nauk},
%    VOLUME = {34},
%      YEAR = {1979},
%    NUMBER = {2(206)},
%     PAGES = {95--135},
%      ISSN = {0042-1316},
%   MRCLASS = {35P05 (34B30 58G12 58G25)},
%  MRNUMBER = {535710 (81f:35090)},
%MRREVIEWER = {P. A. Mishnaevski{\u\i}},
%}
\bib{Shubin_spectral}{article}{
    AUTHOR = {Shubin, M. A.},
     TITLE = {Spectral theory of elliptic operators on noncompact manifolds},
      NOTE = {M{\'e}thodes semi-classiques, Vol. 1 (Nantes, 1991)},
   JOURNAL = {Ast\'erisque},
  FJOURNAL = {Ast\'erisque},
    NUMBER = {207},
      YEAR = {1992},
     PAGES = {5, 35--108},
      ISSN = {0303-1179},
   MRCLASS = {58G25 (35P05 47F05)},
  MRNUMBER = {1205177 (94h:58175)},
MRREVIEWER = {M. S. Agranovich},
}

%\bib{Shubin_pseudo}{book}{
%     AUTHOR = {Shubin, M. A.},
%     TITLE = {Pseudodifferential operators and spectral theory},
%   EDITION = {Second},
%      NOTE = {Translated from the 1978 Russian original by Stig I.
%              Andersson},
% PUBLISHER = {Springer-Verlag, Berlin},
%      YEAR = {2001},
%     PAGES = {xii+288},
%      ISBN = {3-540-41195-X},
%   MRCLASS = {47G30 (35Sxx 58J40)},
%  MRNUMBER = {1852334 (2002d:47073)},
%       DOI = {10.1007/978-3-642-56579-3},
%       URL = {http://dx.doi.org.lib-ezproxy.tamu.edu:2048/10.1007/978-3-642-56579-3},
%}
\bib{Sunada1}{article}{
    AUTHOR = {Sunada, Toshikazu},
     TITLE = {A periodic {S}chr\"odinger operator on an abelian cover},
   JOURNAL = {J. Fac. Sci. Univ. Tokyo Sect. IA Math.},
  FJOURNAL = {Journal of the Faculty of Science. University of Tokyo.
              Section IA. Mathematics},
    VOLUME = {37},
      YEAR = {1990},
    NUMBER = {3},
     PAGES = {575--583},
      ISSN = {0040-8980},
     CODEN = {JFTMAT},
   MRCLASS = {58G25 (35J10)},
  MRNUMBER = {1080871 (92e:58222)},
MRREVIEWER = {J{\"u}rgen Eichhorn},
}
\bib{Sunada2}{article}{
    AUTHOR = {Sunada, Toshikazu},
     TITLE = {Group {$C^*$}-algebras and the spectrum of a periodic
              {S}chr\"odinger operator on a manifold},
   JOURNAL = {Canad. J. Math.},
  FJOURNAL = {Canadian Journal of Mathematics. Journal Canadien de
              Math\'ematiques},
    VOLUME = {44},
      YEAR = {1992},
    NUMBER = {1},
     PAGES = {180--193},
      ISSN = {0008-414X},
     CODEN = {CJMAAB},
   MRCLASS = {58G25 (19K14 46L80 58G12)},
  MRNUMBER = {1152674 (93c:58223)},
MRREVIEWER = {Peter Haskell},
}
\bib{Wilcox}{article}{
    AUTHOR = {Wilcox, Calvin H.},
     TITLE = {Theory of {B}loch waves},
   JOURNAL = {J. Analyse Math.},
  FJOURNAL = {Journal d'Analyse Math\'ematique},
    VOLUME = {33},
      YEAR = {1978},
     PAGES = {146--167},
      ISSN = {0021-7670},
     CODEN = {JOAMAV},
   MRCLASS = {82A60 (35P10 35Q20 42C05 78A99 81H20)},
  MRNUMBER = {516045 (82b:82068)},
MRREVIEWER = {A. H. Klotz},
}
\bib{Woess}{book}{
    AUTHOR = {Woess, Wolfgang},
     TITLE = {Random walks on infinite graphs and groups},
    SERIES = {Cambridge Tracts in Mathematics},
    VOLUME = {138},
 PUBLISHER = {Cambridge University Press, Cambridge},
      YEAR = {2000},
     PAGES = {xii+334},
      ISBN = {0-521-55292-3},
   MRCLASS = {60B15 (60G50 60J10)},
  MRNUMBER = {1743100 (2001k:60006)},
MRREVIEWER = {Donald I. Cartwright},
}
\end{biblist}
\end{bibdiv}
\end{document}